\def\bbbc{{\mathbb C}}
\def\bbbz{{\mathbb Z}}
\def\bbbd{{\mathbb D}}
\def\bbbt{{\mathbb T}}
\def\bbbo{{\mathbb O}}
\def\bbbi{{\mathbb I}}
\def\ad{{\mathrm{ad}}}
\def\cG{{\cal G}}
\newcommand\pl{+}
\newcommand\mi{-}
\def\cp1{{\mathbb C\mathbb P}^1}
\def\tr{\mathrm{tr\, }}
\def\aut{{\mbox{Aut\,}}}
\newtheorem{Remark}{Remark}
\newtheorem{Example}{Example}
\newtheorem{Lemma}{Lemma}
\newtheorem{Corollary}{Corollary}
\newtheorem{Theorem}{Theorem}
\newtheorem{Definition}{Definition}
\newcommand{\mf}[1]{\mathfrak{#1}}
\newcommand{\gf}[2]{\mathfrak{#1}_{#2}}
\newcommand{\tf}[3]{\mathfrak{#1}_{#2}^{#3}}
\newcommand{\htf}[3]{\mathfrak{\hat{#1}}_{#2}^{#3}}
\newcommand{\ttf}[3]{\mathfrak{\overline{#1}}_{#2}^{#3}}
\newcommand{\ntf}[3]{\mathfrak{\underline{#1}}_{#2}^{#3}}
\newcommand{\mb}[1]{\mathbb{#1}}
\newcommand{\rep}{\pi}
\newcommand{\bpf}{\begin{proof}}
\newcommand{\epf}{$\qed$\end{proof}}
\newcommand{\eex}{$\hfill\diamondsuit$}
\def\cG{{\cal G}}
\newcommand\lie[2]{\mathfrak{#1l}_{#2}(\bbbc)}
\renewcommand{\sl}{\lie{s}{2}}
\newcommand{\gl}{\lie{g}{2}}
\def\cp1{{\mathbb C\mathbb P}^1}
\def\aut{{\mbox{Aut\,}}}
\newcommand{\Zn}[1]{\bbbz/{#1}}
\begin{document}
\bibliographystyle{alpha}
%\title{Uniform Reduction and Automorphic Lie Algebras - TEMP}
%\title{Automorphic Lie Algebras: a uniform construction}
%\title{Automorphic Lie Algebras: a uniform derivation}
%\title{A uniform reduction method}

\title{On the Classification of Automorphic Lie Algebras}

\author{Sara Lombardo\\
Department of Mathematics, Faculty of Sciences\\
Vrije Universiteit, De Boelelaan 1081a, 1081 HV Amsterdam, The Netherlands\\
%\&\\
School of Mathematics, Alan Turing Building \\
University of Manchester, Upper Brook Street, Manchester M13 9EP, UK
\and
Jan A. Sanders \\
Department of Mathematics, Faculty of Sciences\\
Vrije Universiteit, De Boelelaan 1081a, 1081 HV Amsterdam, The Netherlands}

%\keywords{}
%\subjclass{}
%\date{\today,\version}
\date{}

\maketitle
%\numberwithin{equation}{section}
%\numberwithin{table}{section}
\numberwithin{Theorem}{section}
\numberwithin{Lemma}{section}
\numberwithin{Definition}{section}
\numberwithin{Remark}{section}
\numberwithin{Corollary}{section}
\numberwithin{Example}{section}
\begin{abstract}
It is shown that the problem of reduction can be formulated in a uniform way using the theory of invariants. This provides  a powerful tool of analysis and it opens the road to new applications of these algebras, beyond the context of integrable systems. Moreover, it is proven that \(\sl\)--Automorphic Lie Algebras associated to the icosahedral group \(\bbbi\), the octahedral group \(\bbbo\), the tetrahedral group  \(\bbbt\), and the dihedral group \(\bbbd_n\) are isomorphic. The proof is based on techniques from classical invariant theory and makes use of Clebsch-Gordan decomposition and transvectants,
Molien functions and the trace-form. 
%This result completes the classification of \(\sl\) Automorphic Lie Algebras associated to finite groups.
%The result is a crucial step towards the complete classification of \(\sl\)--Automorphic Lie Algebras associated to finite groups.
This result provides a complete classification of \(\sl\)--Automorphic Lie Algebras associated to finite groups when the group representations are chosen to be the same and it is a crucial step towards the complete classification of Automorphic Lie Algebras.
\end{abstract}

\section{Algebraic reductions and Automorphic Lie Algebras}

Many integrable equations are obtained as \emph{reductions} of larger systems. The fact that this is true for many equations of interest in applications makes of the reduction problem one of the central problems in the theory of integrable systems since its early days. 
A wide class of (algebraic) reductions can be studied in terms of \emph{reduction groups} \cite{Mikhailov81}, that is, reductions can be associated to a discrete symmetry group of the corresponding linear problem (Lax Pair), either given by the physical system or simply forced on the solutions. The simplest example of such a symmetry is the conjugation for self-adjoint operators. 
%In brief, a reduction group 
%%$\cG$ 
%is a discrete group of automorphisms of a Lax Pair; it describes its symmetries and the symmetries of its solutions. 
%The group elements can be viewed as simultaneous gauge transformations and fractional--linear transformations of the spectral parameter $\lambda$. 
The requirement that a Lax pair is invariant with respect to a reduction group imposes certain algebraic constraints on the Lax operators and therefore it yields a reduction. As an illustration, consider for instance a fairly general Lax pair 
\[
L=\partial _x-X(x,t,\lambda)\, ,\quad    M=\partial _t-T(x,t,\lambda)\, , 
\]
where
\[
X(x,t,\lambda)=X_{0}(x,t)+X_1(x,t)\lambda+X_{-1}(x,t)\frac{1}{\lambda}\, ,
\]
\[
T(x,t,\lambda)=T_{0}(x,t)+T_{1}(x,t)\lambda +T_{-1}(x,t)\frac{1}{\lambda}+
 T_{2}\lambda^2 +T_{-2}(x,t)\frac{1}{\lambda^2}\, 
\]
are \(n \times n\) matrix functions of \(x\), \(t\) and of the spectral parameter \(\lambda\).
The consistency condition \(\psi_{tx}=\psi_{xt}\) implies that (for all
values of  \(\lambda\))
\[X_t-T_x+[X\,,\,T]=0\]
i.e. a system of $5\,n^2$ nonlinear differential equations amongst the entries of the matrices of the Lax pair. A natural question arises:
how to reduce it in a systematic way?
The system can be reduced imposing symmetry conditions: considering, as an example, reductions associated to the dihedral group $\bbbd_n$ the symmetry constraints read  
\[
\begin{array}{ll}
X(\lambda)=S\, X(\omega\lambda)\, S^{-1}\, ,& X(\lambda)=-X^{T}(1/\lambda)\, ,\quad \omega^{n}=1\,,\\
T(\lambda)=S\, T(\omega\lambda)\, S^{-1}\, ,& T(\lambda)=-T^{T}(1/\lambda)\,
\end{array}
\]
(see \cite{Mikhailov81, lm_jpa04} for details). Solutions of the reduced system have been recently investigated in \cite{BM}.

This purely algebraic reduction technique, first formulated by Mikhailov (see \cite{Mikhailov81}) and later developed in \cite{MSY}, \cite{lm_cmp05}, has been successfully applied both in classical (e.g. \cite{MR2362600}, \cite{MR2299840}, \cite{MR1875202}, \cite{MR1861496}, \cite{MR801573}, \cite{lm_jpa04}) and quantum integrable systems theory (e.g. 
%the classification of solutions of the classical Yang--Baxter equation in 
\cite{b1}, \cite{b2}) and it essentially consists in finding invariants  elements of the Lie algebra over the ring of rational functions used in the Lax pair representation of an integrable equation. These 
invariant elements form an infinite dimensional Lie algebra known as \emph{Automorphic Lie Algebra} \cite{lm_cmp05}.
Indeed, a reduction group can be seen also as a group representation $\cG$ of a (sub)group $G$ of the group of automorphisms of the  infinite dimensional Lie algebra underlying the Lax Pair, e.g. $G\subset Aut(\bbbc(\lambda)\otimes\tf{g}{}{})$, where $\tf{g}{}{}\subset\mathfrak{gl}_n$ is a finite dimensional semisimple Lie algebra and $\bbbc(\lambda)$ is the field of rational functions in the complex variable $\lambda$ with values in $\bbbc$ (see \cite{lm_cmp05} for details). In this latter framework reductions corresponding to $\cG$ are  nothing but a restriction of the Lax Pair to the corresponding \emph{automorphic} (i.e. invariant) subalgebras $(\bbbc(\lambda)\otimes\tf{g}{}{})^G$.
In other words, given a reduction group $G$, the
%The study of this kind of reductions led to the introduction of the notion of Automorphic Lie Algebras in a mre algebraic setting \cite{lm_cmp05}: let $G\subset \aut (\bbbc(\lambda)\otimes\tf{g}{}{})$ be a reduction group. A 
Lie  algebra $(\bbbc(\lambda)\otimes\tf{g}{}{})^G$ is called automorphic, if its elements  $a\in(\bbbc(\lambda)\otimes\tf{g}{}{})^G$ are invariant, $g(a)=a$, with respect to all automorphisms $g\in G$, i.e.$$(\bbbc(\lambda)\otimes\tf{g}{}{})^G=\{ a\in(\bbbc(\lambda)\otimes\tf{g}{}{})\, \,|\,\, \phi (a)=a\, ,\, \forall 
\phi\in G\subset \aut (\bbbc(\lambda)\otimes\tf{g}{}{})\}\,.$$

% Classification problem
Originally motivated by the problem of reduction of Lax pairs, Automorphic Lie Algebras are interesting objects in their own right. So much that a classification is now both a mathematical question and a tool in the reduction problem, and therefore in applications to the theory of integrable systems and beyond. 
A first step towards  classification of Automorphic Lie Algebras was presented in \cite{lm_cmp05}, where automorphic algebras associated  to finite groups where considered. These groups are the five groups of Klein's classification, namely, the cyclic group $\Zn{n}$, the dihedral group $\bbbd_n$, the tetrahedral group $\bbbt$, the octahedral group $\bbbo$ and the icosahedral group $\bbbi$. In the paper \cite{lm_cmp05} the authors classify automorphic algebras associated to the dihedral group $\bbbd_n$,  starting from the finite dimensional algebra $\sl$. Examples of automorphic Lie algebras based on $\lie{s}{3}$ were also discussed.

The aim of this paper is to complete the classification for the case \(\tf{g}{}{}=\sl\) and sketch a classification programme for automorphic Lie algebras associated to finite groups more in general. A key feature of this approach is the study of these algebras in the context of classical invariant theory. Indeed, the problem of reduction can be formulated in a uniform way using the theory of invariants. This gives us  a powerful tool of analysis on one hand,  on the other it opens the road to new applications of these algebras, beyond the context of integrable systems. Moreover, it turns out that in the explicit case we present here, where the underlying Lie algebra is
\(\sl\), we can compute Automorphic Lie Algebras only using geometric data.

% TO BE REWRITTEN 21.09
The paper is organised as follows: in the introduction we recall basic definitions and facts from the theory of Automorphic Lie Algebras and motivate our further investigation; we define the set up in Section \ref{sec:setup}. Section \ref{sec:trans} describes the tool of transvection and recalls basic notions from invariant theory; here we prove a few lemmas in full generality for later use. Section \ref{sec:alias} deals with the classification of \(\sl\)--Automorphic Lie Algebras associated to finite groups. The framework  is defined in Sections \ref{sec:traceform}, \ref{decomposition} and \ref{sec:groups} while Section \ref{sec:structure} describes the general form of the structure constants. In Section \ref{sec:homo} we define homogeneous elements, therefore we go from the variables  \(X\)  and \(Y\)  to  \(\lambda=\frac{Y}{X}\) (or \(\lambda=\frac{X}{Y}\)); fixing an orbit \(\Gamma\) of the group \(G\) we define a basis of homogeneous elements with \(\Gamma\)--divisors and their structure constants (Section \ref{sec:homobasis}). The \emph{normal form} of  \(\sl\)--Automorphic Lie Algebras is given in Section \ref{sec:normal}. Section \ref{sec:averaging} compares the results with the averaging method, while Section \ref{sec:explicit} gives explicit expressiosn for homogeneous bases associated to finite groups. 
The last Section \ref{sec:conclusion} contains concluding remarks while the Appendices contain the case of \(\Zn{n}\) (see Appendix \ref{AppZN}) and the detailed derivation of homogeneous elements (see Appendices \ref{sec:alphadiv}). The case a of different \(\Gamma\)--divisor is the content of the last Appendix \ref{sec:betadivall}.

%%%%%%%%%%%%%%%%%%%% SECTION 1: SET UP
\section{Set up}\label{sec:setup}
We consider automorphic algebras in the context of classical invariant theory, that is the study of invariants of the action of \(\sl\) on binary forms. It turns out that in the explicit case we present here, where the underlying Lie algebra is \(\sl\), we can compute the Automorphic Lie Algebra only using geometric data. Let us define the set up:

\begin{itemize}
\item Let \(\tf{g}{}{}\)  be a Lie algebra;
\item Let \(\mb{C}[X,Y]\) be the ring of polynomials in \(X\)  and \(Y\) (later we will define \(\lambda=\frac{X}{Y}\) or \(\lambda=\frac{Y}{X}\));
\item Consider \(\htf{g}{}{}=\mb{C}[X,Y]\otimes\tf{g}{}{}\) ;
\item Let \(G\)  be a finite group;
\item Let \(\sigma\)  be a faithful projective representation of \(G\)  in \(\bbbc^2\); 
this restricts \(G\)  to be one of the following types of groups \cite{MR0080930,MR1315530}: 
\begin{equation}\label{eq:list}
\Zn{m} \,,\quad\bbbd_m \,,\quad \bbbt\, ,\quad \bbbo\, ,\quad \bbbi,
\end{equation}
i.e. the cyclic group $\Zn{m}$, the dihedral group $\bbbd_m$, the tetrahedral group $\bbbt$, the octahedral group $\bbbo$ and the icosahedral group $\bbbi$; 
\item Let \(\tau_k\)  be a faithful projective representation of \(G\)  in \(\bbbc^k\);
\item  Let \(\rep\) be a linear representation of \(G\) in \(\tf{g}{}{}\); that is to say that
\(\pi(g)[m_1\,,\,m_2]=[\pi(g) m_1\,,\, \pi(g) m_2]\) \(\forall\, m_i \in \tf{g}{}{}\);
\item \(\sigma\otimes\pi\) defines a projective representation of \(G\) in \(\htf{g}{}{}=\mb{C}[X,Y]\otimes\tf{g}{}{}\).
\end{itemize}

In a diagram the situation can be represented as follows:
\begin{diagram}
G&\rTo{\sigma}&GL(2,\bbbc)\\
\dTo{\pi} \\
Aut(\tf{g}{}{})
\end{diagram}

Let \( g\in G\), \(\, M=p(X,Y)\otimes m\in \htf{g}{}{}\). In general we write
\[ g\,M=\sigma\otimes\pi(g)\, M=p(\sigma(g^{-1})(X,Y))\otimes\tau_k(g)\, m\,\tau_k(g^{-1})\,.\]
In this paper however we will be concerned with the case \(\tf{g}{}{}=\sl\) and we take \(k=2\) and 
\(\tau_2= \sigma\).
This simplifying assumption can no longer be made when considering higher dimensional Lie algebras.
See the discussion in Section \ref{sec:conclusion}.

Let \(\chi\) be a one-dimensional representation.
\begin{Definition}
Let \(V\) be a \(G\)-module. Then \(v\in V\) is \(\chi\)--covariant
if \( g\,v=\chi (g)\, v\), \(\,\forall g\in G\). If \(\chi \) is trivial then \( v\) is called invariant.
\end{Definition}
In the literature, covariants are also called semi-invariants, or relative invariants.
We denote the space of \(\chi\)--covariants as
\(V_G^{\chi}\,\) and the space of covariants by \(V_G\).
Our first goal is to compute the Lie algebra of covariants \(\htf{g}{G}{}\,\).

\subsection{$\tf{g}{}{}=\sl$}
Let \(\tf{g}{}{}\)  be \(\sl\) and let \(\gf{e}{+}\,,\gf{e}{-}\,,\gf{e}{0}\) be a basis, obeying the relations:
\[[\gf{e}{+}\,,\gf{e}{-}]=\gf{e}{0}\,,\quad [\gf{e}{0}\,,\gf{e}{\pm}]=\pm 2 \gf{e}{\pm}\,.\]

%Take \(\pi(g)M=\sigma(g)\,M\,\sigma(g^{-1})=\sigma(g)\,M\,\sigma^{-1}(g)\). This defines a linear representation; \(\chi(g)=(\det(\sigma(g)))^{-1}\) defines a representation on \(\bbbc^\star\).
%
\begin{Theorem}\label{theo:formA}
For all \(G\) in list (\ref{eq:list}),
\(
\mf{A}=Y^2 \gf{e}{-}+ XY \,\gf{e}{0}-X^2 \gf{e}{+}\in\htf{g}{}{G}\), i.e. is invariant.
%\mf{A}=X^2 \gf{e}{+}\mi XY \,\gf{e}{0}-Y^2 \gf{e}{-}\in\htf{g}{}{G}\).
\end{Theorem}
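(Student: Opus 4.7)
The plan is to recognise $\mf{A}$ in the standard matrix realisation of $\sl$. Under the identification $\gf{e}{+}\leftrightarrow E_{12}$, $\gf{e}{-}\leftrightarrow E_{21}$, $\gf{e}{0}\leftrightarrow \operatorname{diag}(1,-1)$, the element $\mf{A}$ takes the rank-one form
$$
\mf{A} \;=\; \begin{pmatrix} XY & -X^2\\ Y^2 & -XY \end{pmatrix} \;=\; Z\,Z^T J, \qquad Z=(X,Y)^T,\ J = \begin{pmatrix}0 & -1\\ 1 & 0\end{pmatrix}.
$$
Geometrically, $\mf{A}$ is characterised up to scalar as the unique rank-one traceless matrix annihilating $Z$; it is this geometric feature that will make invariance transparent.

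Under the simplifying assumption $\tau_2=\sigma$, the action of $g\in G$ on $\htf{g}{}{}$ is simultaneous substitution $Z\mapsto \sigma(g)^{-1}Z$ and conjugation by $\sigma(g)$. I would plug the rank-one form of $\mf{A}$ directly into $g\cdot\mf{A}$; the inner pair $\sigma(g)\sigma(g)^{-1}$ collapses to the identity on the left factor and the expression reduces to
$$
g\cdot \mf{A} \;=\; Z\,Z^T\bigl[\sigma(g)^{-T} J\,\sigma(g)^{-1}\bigr].
$$

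The computation is then finished by the classical identity $A^T J A = (\det A)\,J$, valid on all of $GL(2,\bbbc)$. The bracket becomes $(\det\sigma(g))^{-1} J$, equal to $J$ whenever $\sigma(g)\in SL(2,\bbbc)$. This immediately yields $g\cdot\mf{A} = ZZ^T J = \mf{A}$, so invariance holds for every group that can be realised inside $SL(2,\bbbc)$.

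The only real obstacle is conventional rather than computational: $\sigma$ is only a projective representation, so its matrix representatives are defined up to scalars, and one must justify normalising them to land in $SL(2,\bbbc)$. This is precisely where Klein's list enters: each of the groups in (\ref{eq:list}) either admits a faithful $SL(2,\bbbc)$-representation directly (the cyclic case) or lifts to one of the binary polyhedral subgroups of $SL(2,\bbbc)$. The rescaling freedom $\sigma(g)\mapsto c\,\sigma(g)$ leaves conjugation on the matrix side untouched and contributes an even power $c^{-2}$ on the substitution side (because $\mf{A}$ is homogeneous of degree two in $Z$), so this power is neutralised by the $\det=1$ normalisation. The resulting argument is then uniform in $G$ and sidesteps any case-by-case analysis of the five groups, which is exactly the conceptual payoff of the symplectic viewpoint.
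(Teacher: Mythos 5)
Your proof is correct, but it takes a different route from the paper's. The paper argues infinitesimally: it checks that the two operators \((\ad(\gf{e}{-})-X\frac{\partial}{\partial Y})\) and \((\ad(\gf{e}{+})-Y\frac{\partial}{\partial X})\), which generate the combined \(\sl\)-action on \(\mb{C}[X,Y]\otimes\sl\), both annihilate \(\mf{A}\), so that \(\mf{A}\) is invariant under the full \(SL(2,\bbbc)\) and a fortiori under any subgroup to which the projective representations lift. You instead exploit the rank-one factorisation \(\rho(\mf{A})=ZZ^TJ\) and the global identity \(A^TJA=(\det A)J\), arriving at the explicit transformation law \(g\cdot\mf{A}=(\det\sigma(g))^{-1}\mf{A}\). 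Both arguments reduce the finite-group statement to \(SL(2,\bbbc)\)-invariance and are uniform in \(G\); the paper's is shorter and dovetails with its identification of \(\mf{A}\) with the adjoint representation, while yours has the merit of isolating exactly where the normalisation of the projective representation enters --- the scalar ambiguity \(\sigma(g)\mapsto c\,\sigma(g)\) changes \(g\cdot\mf{A}\) by \(c^{-2}\), and this is killed precisely by choosing determinant-one representatives in the binary polyhedral groups, a point the paper leaves implicit in its set-up. Your characterisation of \(\mf{A}\) as the unique traceless rank-one matrix annihilating \(Z\) is a pleasant extra, though not needed for the argument.
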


\begin{Remark} 
%The theorem holds true in the more general case in which \(\tau_2\) is a generic \(2 \times 2\) matrix and \(\sigma\) is the corresponding transformation, that is
The theorem holds true in the case that \(G=SL(2,\bbbc)\), that is
\[
\tau_2=
\left(
\begin{array}{cc}
 a &   b  \\
 c  &  d
\end{array}
\right)\,,\qquad 
\sigma\left(
\begin{array}{c}
X\\
Y
\end{array}
\right)=
\left(
\begin{array}{c}
aX+bY\\
cX+dY
\end{array}
\right)\,.
\]
\end{Remark}
\begin{proof}
Since \((\ad(\gf{e}{-})-X\frac{\partial}{\partial Y})\mf{A}=0\)
and \((\ad(\gf{e}{+})-Y\frac{\partial}{\partial X})\mf{A}=0\), the form is invariant under the usual
action of \(SL_2(\bbbc)\) on binary forms.
\end{proof}
\begin{Remark} This is just the adjoint representation of \(\sl\);
it is not clear to us how this object could be conceived in \(\lambda\)-language,
since it cannot be associated to a homogeneous invariant element,
as is shown  in section \ref{sec:homo}.
See however Remark \ref{Schwarzian}, Section \ref{sec:trans}.
\end{Remark}

%%%%%%%%%%%%%%%%%%%%% SECTION 2: TRANSVECTANTS
\section{Transvectants}\label{sec:trans} 
In classical invariant theory the basic computational tool is the {\bf transvectant}: given any two covariants, it is possible to construct a number of (possibly) new covariants by computing transvectants. As a simple example consider two linear forms \(aY+bX\), \(cY+dX\); their first transvectant is the determinant of the coefficients, i.e. \(ad-cb\). Similarly, the discriminant \( a_0 a_2-a^{2}_{1}\) of a quadratic form \(a_0 Y^2+2 a_1 X Y+a_2 X^2\) is the second transvectant of the quadratic form itself. 

In this section we will adapt the idea of transvection to compute invariant algebras. We  start from the classical work by Klein about automorphic functions 
and generalise it to the context of automorphic algebras. To do so, we need first to recall some definitions and facts about transvectants and generalise some of the concepts to the present set up.
\begin{Definition}
A {\bf groundform} is a covariant \(\alpha\) with its divisor of zeros equal to an exceptional (or degenerate) orbit.
\end{Definition}
\begin{Remark}
The terminology used here is explained in \cite[II.6]{MR845275}.
\end{Remark}

Let \(\alpha\in\mb{C}_{n}[X,Y]^{G}_{\chi}\) and let \(\alpha_{k,l}=\frac{\partial^{k+l}\alpha}{\partial X^k\partial Y^l}\);
we define the \(k\)th--transvectant of \(\alpha\) with an arbitrary form 
\(\mf{A}\in\htf{g}{G}{}\) as
\[\tf{A}{\alpha}{k}=(\alpha\,,\mf{A} )^k =\sum_{i=0}^k (-1)^i\binom{k}{i}  \alpha_{i,k-i} \tf{A}{k-i,i}{}\,,\qquad\tf{A}{\alpha}{k}\in\htf{g}{G}{\chi}\]

\begin{Example}[Classical Invariant Theory]
%\textbf{Example} Classical Invariant Theory \\
In the definition above  \(\mf{A}\) could as well belong to \(\mb{P}_{m}[X,Y]\). It follows from the classical theory \cite{MR0080930,MR1315530} that if  \(G\)  is either \(\bbbt\), \(\bbbo\) or \(\bbbi\) then the groundforms are given by 
\[\alpha\,,\quad \beta=(\alpha,\alpha)^{2}\,,\quad\gamma=(\alpha,\beta)^{1}\,.\]
If one denotes the degree of a form \(\alpha\) by \(\omega_\alpha\) it follows that (see Table \ref{table_covariants})
\[\omega_\beta=2\,\omega_\alpha-4\,,\quad \omega_\gamma=3\,\omega_\alpha-6\,.\]
If \(G\)  is \(\bbbd_m\) then \(\beta\neq(\alpha,\alpha)^2\) and it has to be computed independently (see Table \ref{table_covariants_DN}).
The degree of \(\beta\) is the number of faces  of the Platonic solid and determines its name.
We observe that \(\omega_\alpha-\omega_\gamma+\omega_\beta=2\), the Euler index.
\eex
\end{Example}
%\begin{table}[h!] %CMP Tables
%\caption{Degrees of the  groundforms of \(\bbbt, \bbbo, \bbbi\)}
%\label{table_covariants}    
%\centering
%\begin{tabular}{cccc}
%\hline\noalign{\smallskip}
%\(G\)  & \(\omega_\alpha\) &  \(\omega_\beta=2\omega_\alpha-4 \) & \(\omega_\gamma=3\omega_\alpha-6\) \\ 
%\noalign{\smallskip}\hline\noalign{\smallskip}
%\( \bbbi\) & 12 & 20 & 30\\ 
%\( \bbbo\) & 6 & 8 & 12\\ 
%\( \bbbt\) & 4 & 4 & 6\\ 
%\noalign{\smallskip}\hline
%\end{tabular}
%\end{table}
\begin{center}
\begin{table}[h!] 
\begin{center}
\begin{tabular}{|c|c|c|c|} \hline
\(G\)  & \(\omega_\alpha\) &  \(\omega_\beta=2\omega_\alpha-4 \) & \(\omega_\gamma=3\omega_\alpha-6\) \\ 
\hline \hline 
\( \bbbi\) & 12 & 20 & 30\\ 
\hline
\( \bbbo\) & 6 & 8 & 12\\ 
\hline 
\( \bbbt\) & 4 & 4 & 6\\ 
% \hline \hline
%\( \bbbd_N\) & 2 & N & N\\ 
\hline
\end{tabular}
\end{center}
\caption{Degrees of the  groundforms of \(\bbbi, \bbbo, \bbbt\)}
\label{table_covariants}
\end{table}
\end{center}
\begin{center}
\begin{table}[ht!]
\begin{center}
\begin{tabular}{|c|c|c|c|} \hline
\(G\)  & \(\omega_\alpha\) &  \(\omega_\beta \) & \(\omega_\gamma=\omega_\alpha+\omega_\beta-2\) \\ 
\hline \hline 
%\( \bbbt\) & 4 & 4 & 6\\ 
%\hline
%\( \bbbo\) & 6 & 8 & 12\\ 
%\hline 
%\( \bbbi\) & 12 & 20 & 30\\ 
% \hline \hline
\( \bbbd_m\) & 2 & m & m\\ 
\hline
\end{tabular}
\end{center}
\caption{Degrees of the groundforms of \(\bbbd_m\)}
\label{table_covariants_DN}
\end{table}
\end{center}

\begin{Remark}\label{Schwarzian}
In \(\lambda\)-language, \(\beta\) corresponds to the Schwarzian of \(\alpha\).
\end{Remark}
\begin{Lemma}\label{Lemma:lem1}
\[
\tf{A}{\alpha}{k}=
k!X^{-k}\sum_{l=0}^k 
(-1)^{l} 
\binom{\omega_\alpha-k+l}{l} 
\binom{\omega_\mf{A}-l}{k-l}
\alpha_{0,k-l}
\mf{A}_{0,l}
\]
\end{Lemma}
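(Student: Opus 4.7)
The strategy is direct computation: eliminate all $X$-derivatives from the defining sum
\[
\tf{A}{\alpha}{k} \;=\; \sum_{i=0}^k (-1)^i \binom{k}{i}\, \alpha_{i,k-i}\, \mf{A}_{k-i,i},
\]
using the homogeneity of $\alpha$ (degree $\omega_\alpha$) and $\mf{A}$ (degree $\omega_{\mf{A}}$), and then verify that all $Y$-dependent ``correction'' terms cancel, leaving exactly the stated single sum.

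The first step is to prove the auxiliary identity that for any homogeneous polynomial $p$ of degree $n$,
\[
p_{i,j} \;=\; X^{-i} \sum_{r=0}^{i} (-1)^r \binom{i}{r}\, \frac{(n-j-r)!}{(n-i-j)!}\, Y^r\, p_{0,\,j+r}.
\]
This follows by induction on $i$: differentiating Euler's identity $X p_X + Y p_Y = n p$ a total of $i$ times in $X$ and $j$ times in $Y$ gives the recursion $X p_{i+1,j} + Y p_{i,j+1} = (n-i-j)\, p_{i,j}$, and iterating the resulting formula $X p_{i+1,j} = (n-i-j)p_{i,j} - Y p_{i,j+1}$ yields the displayed expression.

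Substituting this identity into both $\alpha_{i,k-i}$ (with $n=\omega_\alpha$) and $\mf{A}_{k-i,i}$ (with $n=\omega_{\mf{A}}$) inside the definition of the transvectant produces, after extracting $X^{-k}$ and reindexing so that the surviving $\alpha_{0,\cdot}$ and $\mf{A}_{0,\cdot}$ carry indices $k-m$ and $k-s$ respectively, a triple sum over $(i,m,s)$ carrying a power $Y^{k-m-s}$. Collecting the coefficient of a fixed monomial $Y^{k-m-s}\alpha_{0,k-m}\mf{A}_{0,k-s}$, the inner sum over $i$ becomes, after rewriting $\binom{k}{i}\binom{i}{m}\binom{k-i}{s} = \tfrac{k!}{m!\,s!\,(i-m)!\,(k-i-s)!}$ and shifting $j=i-m$,
\[
\frac{k!\,(-1)^m}{m!\,s!\,(k-m-s)!}\sum_{j=0}^{k-m-s} (-1)^j \binom{k-m-s}{j}.
\]
The binomial identity $(1-1)^N=0$ for $N>0$ then kills every term with $m+s<k$, so only the diagonal terms $m+s=k$ survive (a single value $i=m$ in each case). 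Setting $l=m$, $s=k-l$, and simplifying the factorial ratio
\[
\binom{k}{l}\,\frac{(\omega_\alpha-k+l)!\,(\omega_{\mf{A}}-l)!}{(\omega_\alpha-k)!\,(\omega_{\mf{A}}-k)!} \;=\; k!\,\binom{\omega_\alpha-k+l}{l}\binom{\omega_{\mf{A}}-l}{k-l}
\]
reproduces the stated formula. The main obstacle is purely combinatorial bookkeeping: one must track indices and signs carefully through three nested summations and recognize the alternating-sum collapse at the right moment; once the auxiliary identity for mixed partials is in place, nothing deeper is required.
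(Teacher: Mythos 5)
Your computation is correct, and it checks out in detail: the auxiliary identity $p_{i,j}=X^{-i}\sum_{r}(-1)^r\binom{i}{r}\frac{(n-j-r)!}{(n-i-j)!}Y^rp_{0,j+r}$ follows by the induction you describe from the differentiated Euler relation $Xp_{i+1,j}+Yp_{i,j+1}=(n-i-j)p_{i,j}$; after substitution the factorial ratios depend only on $m$ and $s$ (not on $i$), so they factor out of the inner sum and the alternating-sum collapse $\sum_j(-1)^j\binom{k-m-s}{j}=0$ legitimately kills everything off the diagonal $m+s=k$; and the surviving coefficient matches $k!\,(-1)^l\binom{\omega_\alpha-k+l}{l}\binom{\omega_{\mf{A}}-l}{k-l}$. (There is a cosmetic slip in your intermediate display: the sign in front of the inner sum is really $(-1)^m(-1)^{k-m-s}$ rather than $(-1)^m$, but the extra factor is $+1$ on the diagonal and multiplies only vanishing terms otherwise, so nothing is affected.) Where your write-up differs from the paper is that the paper does not prove this lemma at all: it simply cites Olver's \emph{Classical Invariant Theory} (p.~90) and the references to Gundelfinger and Ovsienko therein, since this is the classical Gundelfinger formula for the transvectant of homogeneous forms. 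Your argument is essentially the standard derivation behind that citation, so it buys a self-contained, elementary verification from Euler's identity and binomial cancellation at the cost of some combinatorial bookkeeping, whereas the paper's citation buys brevity and a pointer to the classical literature.
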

\begin{proof}
See \cite[page 90]{bluolver}, and references \cite{gundelfinger} and \cite{ovsienko} therein.
\end{proof}

%\begin{Remark}\label{rem:Z2}
%With the grading \(|(f,g)^k|=|f|+|g|+k\mod 2\) we see that the \(Y\)-derivative is odd-graded. WE SEE FROM WHERE? A REFERENCE? LEMMA 1?
%\end{Remark}
\begin{Lemma} Let \(\mf{A}\) be an invariant quadratic form and let \(f \in \bbbc[X,Y]\) with \(\omega_{f}\geq 2\). Then one has
\begin{eqnarray}
\tf{A}{f}{1}
&=&
X^{-1}\left(
\omega_\mf{A} f_{Y} \mf{A}
-
\omega_f f \mf{A}_{Y}
\right)
\\\tf{A}{f}{2}
&=&2X^{-2}
\left(
\binom{\omega_\mf{A}}{2} f_{YY} \mf{A}
-
(\omega_f-1) (\omega_\mf{A}-1) f_{Y} \gf{A}{Y}
+
\binom{\omega_f}{2} f \gf{A}{YY}
\right),\\
\tf{A}{f}{i} &=&0\mbox{ for } i\geq 3 \mbox{ if $\mf{A}$ is quadratic}.
\end{eqnarray}
\end{Lemma}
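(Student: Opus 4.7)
The plan is to derive all three formulae as direct specialisations of Lemma \ref{Lemma:lem1}. That lemma expresses the transvectant as a single sum over one partial-derivative index $l$, which is ideal here: we only want $k=1$, $k=2$, and then vanishing for $k \geq 3$, so the sum has only a handful of nonzero terms to track.

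First, I would substitute $\alpha = f$ directly into Lemma \ref{Lemma:lem1}. This is legitimate because its derivation uses only that $f$ is homogeneous of degree $\omega_f$, not that it is a covariant, and the hypothesis $\omega_f \geq 2$ keeps all binomial indices in an innocuous range. The resulting formula
\[
\tf{A}{f}{k} = k!\,X^{-k}\sum_{l=0}^{k}(-1)^{l}\binom{\omega_f - k + l}{l}\binom{\omega_\mf{A}-l}{k-l}f_{0,k-l}\mf{A}_{0,l}
\]
reduces the statement to term-by-term bookkeeping. At $k=1$ the two surviving summands ($l=0$ and $l=1$) produce $\omega_\mf{A} f_Y \mf{A}$ and $-\omega_f f \mf{A}_Y$. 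At $k=2$ the three summands ($l=0,1,2$) give $\binom{\omega_\mf{A}}{2} f_{YY}\mf{A}$, $-(\omega_f-1)(\omega_\mf{A}-1)f_Y \mf{A}_Y$, and $\binom{\omega_f}{2} f\mf{A}_{YY}$, matching the claim after pulling out the common factor $2\,X^{-2}$.

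The vanishing statement for $k \geq 3$ with $\mf{A}$ quadratic (so $\omega_\mf{A}=2$) is the cleanest part and uses two complementary mechanisms. For $l \in \{0,1,2\}$ the binomial $\binom{2-l}{k-l}$ vanishes because $k-l > 2-l$ as soon as $k \geq 3$; for $l \geq 3$ the derivative $\mf{A}_{0,l}=\partial^{l}\mf{A}/\partial Y^{l}$ vanishes because $\mf{A}$ has total degree $2$. Every term of the sum is therefore zero, and these two ranges cover all indices.

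I do not anticipate a genuine obstacle: all three assertions are mechanical once Lemma \ref{Lemma:lem1} is invoked. The only point needing a brief check is that the Lie-algebra values of $\mf{A}$ do not interfere with the scalar binomial identities, but $\mf{A}$ enters only as a coefficient-carrying polynomial in $X,Y$, so the calculation is identical to the classical scalar case. If anything is mildly subtle, it is tracking the sign and the factor of $k!$ correctly when converting from the compact form of Lemma \ref{Lemma:lem1} into the explicit expressions with binomial coefficients displayed; this is pure arithmetic.
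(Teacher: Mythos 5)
Your proposal is correct and follows exactly the route the paper takes: the paper's proof of this lemma is simply ``It follows immediately from Lemma \ref{Lemma:lem1},'' and your term-by-term evaluation of that sum at $k=1$, $k=2$, together with the two vanishing mechanisms for $k\geq 3$ (the binomial $\binom{2-l}{k-l}$ for $l\leq 2$ and the third $Y$-derivative of a quadratic for $l\geq 3$), is precisely the intended verification.
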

\begin{proof}
It follows immediately from Lemma \ref{Lemma:lem1}.
\end{proof}

%\begin{Example}\label{D_2ex1}
%Let \(\mf{A}\)  be the invariant form given in Theorem \ref{theo:formA}  and let \(f\) be \(\alpha=XY\), which it happens to be a groundform of the dihedral group \(\bbbd_m\) (see Section \ref{sec:DN}). It follows then that
%\begin{eqnarray*}
%\tf{A}{\alpha}{1}
%&=& \left(
%\begin{array}{cc}
%0 & 2\,X^2\\
%2\,Y^2 & 0
%\end{array}
%\right),\\
%\tf{A}{\alpha}{2}
%&=& \left(
%\begin{array}{cc}
%-2 & 0\\
%0 & 2
%\end{array}
%\right),\\
%\tf{A}{\alpha}{i} &=&0\mbox{ for } i\geq 3\,.
%\end{eqnarray*}
%%\eex
%\end{Example}
\begin{Example}\label{D_2ex1}
Let \(\mf{A}\)  be the invariant form given in Theorem \ref{theo:formA}  and let \(f\) be \(\alpha=XY\), which it happens to be a groundform of the dihedral group \(\bbbd_m\) (see Section \ref{sec:DN}). It follows then that
\begin{eqnarray*}
\rho(\tf{A}{\alpha}{1})
&=& \left(
\begin{array}{cc}
0 & \mi 2\,X^2\\
\mi 2\,Y^2 & 0
\end{array}
\right),\\
\rho(\tf{A}{\alpha}{2})
&=& \left(
\begin{array}{cc}
\mi 2 & 0\\
0 & \pl 2
\end{array}
\right),\\
\rho(\tf{A}{\alpha}{i}) &=&0\mbox{ for } i\geq 3\,
\end{eqnarray*}
where \(\rho\) is the standard representation in \(\gl\). \quad\eex
\end{Example}
Let \(\tf{g}{}{}\) be the Lie algebra \(\sl\) with the usual basis and commutation relations:
\begin{eqnarray}
[\gf{e}{0},\gf{e}{\pm}]&=&\pm 2 \gf{e}{\pm}\\
{[}\gf{e}{+},\gf{e}{-}{]}&=& \gf{e}{0}.
\end{eqnarray}
Then the coadjoint representation can be identified with the invariant form \(\mf{A}\) of Theorem \ref{theo:formA}.
\begin{Lemma}
\(X^{-1}[\mf{A},\gf{A}{Y}]=- 2\gf{A}{}\),
\( X^{-1}[\mf{A},\gf{A}{YY}]=- 2\gf{A}{Y}\) and 
\(X^{-1}[\gf{A}{Y},\gf{A}{YY}]=- 2\gf{A}{YY}\).
\end{Lemma}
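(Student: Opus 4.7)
The plan is to prove the first identity by a short direct computation, then deduce the remaining two by differentiating the previous one with respect to $Y$. Concretely, from the explicit formula $\mf{A}=Y^2\gf{e}{-}+XY\gf{e}{0}-X^2\gf{e}{+}$ I first record the derivatives
\[
\mf{A}_Y = 2Y\gf{e}{-}+X\gf{e}{0},\qquad \mf{A}_{YY}=2\gf{e}{-},\qquad \mf{A}_{YYY}=0.
\]

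For the first identity, I would expand $[\mf{A},\mf{A}_Y]$ into the six non-vanishing brackets of the form $[\gf{e}{i},\gf{e}{j}]$ using the structure relations $[\gf{e}{+},\gf{e}{-}]=\gf{e}{0}$ and $[\gf{e}{0},\gf{e}{\pm}]=\pm 2\gf{e}{\pm}$; collecting the $\gf{e}{\pm}$ and $\gf{e}{0}$ contributions should yield exactly $-2X\bigl(Y^2\gf{e}{-}+XY\gf{e}{0}-X^2\gf{e}{+}\bigr)=-2X\mf{A}$. This is a two-line calculation.

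The key observation is that the Lie bracket on $\htf{g}{}{}=\mathbb{C}[X,Y]\otimes\tf{g}{}{}$ is bilinear over $\mathbb{C}$ and the derivation $\partial_Y$ acts only on the polynomial factor, so
\[
\partial_Y[\mf{B}_1,\mf{B}_2] = [\partial_Y\mf{B}_1,\mf{B}_2]+[\mf{B}_1,\partial_Y\mf{B}_2]
\]
for $\mf{B}_1,\mf{B}_2\in\htf{g}{}{}$. Applying $\partial_Y$ to $[\mf{A},\mf{A}_Y]=-2X\mf{A}$ gives $[\mf{A}_Y,\mf{A}_Y]+[\mf{A},\mf{A}_{YY}]=-2X\mf{A}_Y$, and since the first summand vanishes by skew-symmetry, this delivers the second identity. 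Applying $\partial_Y$ once more to $[\mf{A},\mf{A}_{YY}]=-2X\mf{A}_Y$ produces $[\mf{A}_Y,\mf{A}_{YY}]+[\mf{A},\mf{A}_{YYY}]=-2X\mf{A}_{YY}$, and since $\mf{A}_{YYY}=0$ the third identity follows.

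There is no real obstacle: the only point that deserves a line of justification is that $\partial_Y$ is a derivation of the Lie bracket on $\htf{g}{}{}$, which is immediate because $[p\otimes u,q\otimes v]=pq\otimes[u,v]$ and $\partial_Y$ satisfies Leibniz on $\mathbb{C}[X,Y]$. The direct computation in step one is routine but needs care with signs; conceptually the three relations reflect that $\{\mf{A},\mf{A}_Y,\mf{A}_{YY}\}$ spans the image of the adjoint $\sl$-module inside $\htf{g}{}{}$, and the lowering operator $X^{-1}\ad(\gf{e}{-})$ of Theorem~\ref{theo:formA} sends $\mf{A}\mapsto\mf{A}_Y\mapsto\mf{A}_{YY}\mapsto 0$, which is exactly what the differentiation-trick above captures.
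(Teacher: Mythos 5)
Your proposal is correct. For the first identity you do exactly what the paper does: expand $[\mf{A},\mf{A}_Y]$ using the structure relations of $\sl$ and collect terms to get $-2X\mf{A}$ (the paper carries this out in full, grouping the bracket as $2Y[Y\gf{e}{0}-X\gf{e}{+},\gf{e}{-}]+[Y^2\gf{e}{-}-X^2\gf{e}{+},\gf{e}{0}]$ before applying the commutation relations). Where you diverge is in the remaining two identities: the paper simply asserts that "the other proofs are even simpler," i.e.\ it treats all three as independent direct computations, whereas you derive the second and third from the first by applying $\partial_Y$ and using that $\partial_Y$ is a derivation of the bracket on $\mb{C}[X,Y]\otimes\tf{g}{}{}$ together with $[\mf{A}_Y,\mf{A}_Y]=0$ and $\mf{A}_{YYY}=0$. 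Your route buys a genuine economy (one computation instead of three) and makes the structural reason for the pattern visible -- it is the shadow of the lowering operator $X^{-1}\ad(\gf{e}{-})$ acting on the adjoint module, consistent with the identity $(\ad(\gf{e}{-})-X\partial_Y)\mf{A}=0$ from the proof of Theorem~\ref{theo:formA}; the paper's route is more pedestrian but requires no justification of the Leibniz step. The one point you correctly flag as needing a line of justification, that $\partial_Y$ is a derivation of the bracket, is indeed immediate from $[p\otimes u,q\otimes v]=pq\otimes[u,v]$, so there is no gap.
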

\begin{proof}
We only prove the first relation, the other proofs are even simpler.
%\begin{WORK}
%Outcomment proofs.
%\end{WORK}
\begin{eqnarray*}
X^{-1}[\mf{A},\gf{A}{Y}]&=&
X^{-1}[ Y^2\gf{e}{-}+ XY \gf{e}{0}-X^2 \gf{e}{+}, 2Y\gf{e}{-}+ X\gf{e}{0}]
\\&=&
2Y[Y \gf{e}{0}-X \gf{e}{+}, \gf{e}{-}]
+ [ Y^2\gf{e}{-}-X^2 \gf{e}{+}, \gf{e}{0}]
\\&=&
Y^2[\gf{e}{0}, \gf{e}{-}]
-2XY[\gf{e}{+}, \gf{e}{-}]
- X^2 [\gf{e}{+}, \gf{e}{0}]
\\&=&
- 2Y^2\gf{e}{-}
-2XY\gf{e}{0}
+ 2X^2 \gf{e}{+}
\\&=&- 2\gf{A}{}
\end{eqnarray*}
\end{proof}
%\begin{Lemma}
%\( X^{-1}[\mf{A},\gf{A}{YY}]=-2\gf{A}{Y}\)
%\end{Lemma}
%\begin{proof}
%\begin{eqnarray*}
%X^{-1}[\mf{A},\gf{A}{YY}]&=&X^{-1}[ -Y^2\gf{e}{-}\mi XY \gf{e}{0}+X^2 \gf{e}{+},-2\gf{e}{-}]
%\\&=&
%[\mi Y \gf{e}{0}+X \gf{e}{+},-2\gf{e}{-}]
%\\&=&
%\mi 4Y\gf{e}{-}-2X  \gf{e}{0}
%\\&=&
%\pl 2\gf{A}{Y}.
%\end{eqnarray*}
%\end{proof}
%\begin{Lemma}
%\(X^{-1}[\gf{A}{Y},\gf{A}{YY}]=-2\gf{A}{YY}\).
%\end{Lemma}
%\begin{proof}
%\begin{eqnarray*}
%X^{-1}[\gf{A}{Y},\gf{A}{YY}]&=&X^{-1}[-2Y\gf{e}{-}\mi X\gf{e}{0},-2\gf{e}{-}]
%\\&=&
%\pl 2[\gf{e}{0},\gf{e}{-}]
%\\&=&
%\mi 4\gf{e}{-}
%\\&=&
%\pl 2\gf{A}{YY}.
%\end{eqnarray*}
%\end{proof}
\begin{Corollary}
If we give \(\mf{A}\) and the \(Y\) derivative odd grading, then the \(\tf{A}{}{}, \tf{A}{Y}{}\) and \(\tf{A}{YY}{}\) form a \(\Zn{2}\)-graded Lie algebra,
with grading \(|\tf{A}{}{}|=1\), \(|\tf{A}{Y}{}|=0\), \(|\tf{A}{YY}{}|=1\).
\end{Corollary}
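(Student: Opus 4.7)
The plan is to read the statement off almost immediately from the preceding lemma, provided one is careful about what space the Lie algebra lives in and how the grading is extended to it. First I would unpack the grading convention. The sentence ``$\mf{A}$ and the $Y$ derivative have odd grading'' means $|\mf{A}|=1$ and $|\partial_Y|=1$, while $X$ (not being mentioned) is declared even. Treating $\partial_Y$ as a graded derivation then forces $|\mf{A}_Y|=|\mf{A}|+|\partial_Y|=1+1\equiv 0\pmod 2$ and $|\mf{A}_{YY}|=|\mf{A}_Y|+|\partial_Y|\equiv 1\pmod 2$, in agreement with the statement.

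Next I would specify the underlying vector space. Since the brackets given in the previous lemma are $[\mf{A},\mf{A}_Y]=-2X\mf{A}$, $[\mf{A},\mf{A}_{YY}]=-2X\mf{A}_Y$, and $[\mf{A}_Y,\mf{A}_{YY}]=-2X\mf{A}_{YY}$, the three elements do not close under the bracket over $\bbbc$ alone. They do, however, close as a $\bbbc[X]$-submodule of $\htf{g}{}{}$ with $\bbbc[X]$-basis $\{\mf{A},\mf{A}_Y,\mf{A}_{YY}\}$, and this is the Lie algebra meant. The grading extends $\bbbc[X]$-linearly because $X$ has even degree.

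Then I would verify compatibility of the bracket with the $\Zn{2}$-grading by inspecting the three identities above one by one. For $[\mf{A},\mf{A}_Y]=-2X\mf{A}$ the left-hand side has grading $1+0=1$ and the right-hand side $0+1=1$; for $[\mf{A},\mf{A}_{YY}]=-2X\mf{A}_Y$ both sides give $0$; for $[\mf{A}_Y,\mf{A}_{YY}]=-2X\mf{A}_{YY}$ both sides give $1$. Antisymmetry and the Jacobi identity are inherited from the ambient Lie algebra $\htf{g}{}{}=\bbbc[X,Y]\otimes\sl$, so nothing needs to be checked there.

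The only point requiring real care — and what I would flag as the ``main obstacle'' — is precisely the interpretation issue above: spelling out that the Lie algebra in question is the $\bbbc[X]$-module generated by the three elements, with $X$ assigned trivial grading, so that the $X$ factor appearing on every right-hand side does not spoil the $\Zn{2}$-grading. Once that convention is fixed, the corollary is a direct reading of the previous lemma.
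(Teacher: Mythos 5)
Your proof is correct and matches the paper, which states this corollary without any proof as an immediate consequence of the preceding lemma: the case-by-case grading check on the three brackets is exactly what is needed, and your reading $|\mf{A}_Y|=0$, $|\mf{A}_{YY}|=1$ agrees with the statement. The only remark worth adding is that the lemma's normalisation $X^{-1}[\cdot,\cdot]$ suggests the authors intend the rescaled bracket, under which $\mf{A},\mf{A}_Y,\mf{A}_{YY}$ already close over $\bbbc$; this is equivalent to your $\bbbc[X]$-module reading, since $X$ carries even grading either way.
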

\begin{Theorem}\label{Theorem:THEOGR}
\(\tf{A}{f}{j}\), \(j=0,1,2\) span a \(\Zn{2}\)-graded Lie algebra.
\end{Theorem}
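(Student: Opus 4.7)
My plan is to reduce the statement to the preceding Corollary. The key point is that the $\mb{C}[X,Y]$-span $V$ of $\{\tf{A}{f}{j}:f\in\mb{C}[X,Y],\,j\in\{0,1,2\}\}$ equals the $\mb{C}[X,Y]$-module $N$ generated by $\{\mf{A},\gf{A}{Y},\gf{A}{YY}\}$. The inclusion $V\subseteq N$ follows immediately from the explicit formulas of the preceding Lemma together with the defining identity $\tf{A}{f}{0}=f\,\mf{A}$, which express every $\tf{A}{f}{j}$, $j\in\{0,1,2\}$, as a $\mb{C}[X,Y]$-linear combination of the three generators. For the reverse inclusion I would check by direct computation from the definition of the transvectant that each generator is itself a transvectant of the required type: for instance $\mf{A}=\tf{A}{1}{0}$, $-\gf{A}{Y}=\tf{A}{X}{1}$, and $2\,\gf{A}{YY}=\tf{A}{X^2}{2}$.

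Once $V=N$ is established, closure under the Lie bracket is inherited directly from the preceding Corollary. The three relations $[\mf{A},\gf{A}{Y}]=-2X\mf{A}$, $[\mf{A},\gf{A}{YY}]=-2X\gf{A}{Y}$, and $[\gf{A}{Y},\gf{A}{YY}]=-2X\gf{A}{YY}$ show that brackets of the generators land back in $N$, and the general case then follows by $\mb{C}[X,Y]$-bilinearity of the Lie bracket on $\htf{g}{}{}$.

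For the $\Zn{2}$-grading I would adopt the convention of the preceding Corollary that $\mf{A}$ is odd and each $Y$-derivative is odd, and inspect the defining sum
\[\tf{A}{f}{k}=\sum_{i=0}^{k}(-1)^{i}\binom{k}{i}f_{i,k-i}\mf{A}_{k-i,i}\]
term by term: the summand indexed by $i$ has parity $(k-i)+(1+i)=k+1\pmod{2}$, independent of $i$, so every $\tf{A}{f}{k}$ is grade-homogeneous of parity $k+1\pmod{2}$. The three basic bracket relations then shift parity by exactly the right amount (multiplication by $X$ being even), so the bracket respects the grading and $V$ is a $\Zn{2}$-graded Lie algebra. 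The only real obstacle is bookkeeping — matching the grading convention of the Corollary with the binomial prefactors of the transvectant formula — and this is routine once the conventions are fixed.
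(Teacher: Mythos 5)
Your reduction to the three generators \(\mf{A}\), \(\gf{A}{Y}\), \(\gf{A}{YY}\) is exactly the route the paper takes (``Express \(\tf{A}{f}{j}\), \(j=1,2\) in \(\tf{A}{}{}\), \(\tf{A}{Y}{}\) and \(\tf{A}{YY}{}\)''), and the closure part of your argument is correct and in fact more complete than the paper's one-line proof: the forward inclusion from the explicit first- and second-transvectant formulas, the reverse inclusion via \(\mf{A}=\tf{A}{1}{0}\), \(-\gf{A}{Y}=\tf{A}{X}{1}\), \(2\gf{A}{YY}=\tf{A}{X^2}{2}\) (both identities check out against the definition of the transvectant), and then the three bracket relations \(X^{-1}[\mf{A},\gf{A}{Y}]=-2\mf{A}\), etc.

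The gap is in the grading step. Your term-by-term count assigns \(\tf{A}{f}{k}\) the parity \(k+1\), which silently assumes \(f\) itself is even. In the covariant algebra \(f\) ranges over \(\bbbc[\alpha,\beta]\oplus\bbbc[\alpha,\beta]\gamma\) and \(\gamma=(\alpha,\beta)^1\) is odd, so the correct parity is \(|\tf{A}{f}{k}|=1+|f|+k\), as the paper's proof states. With your assignment the bracket does \emph{not} respect the grading: in (\ref{eq:commac}) the left-hand side \([\tf{A}{\alpha}{1},\tf{A}{\gamma}{2}]\) would have parity \((1+1)+(2+1)\equiv 1\), while the right-hand side contains the term \(4(\omega_\gamma-1)\tf{p}{G}{}(\beta)\tf{A}{\beta}{1}/\omega_\beta\), of parity \(0\); including the \(|f|\) term (so \(|\tf{A}{\gamma}{2}|=1+1+2\equiv 0\)) makes both sides even and repairs this. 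A second, more structural caveat: ``number of \(Y\)-derivatives mod \(2\)'' is not actually a well-defined grading here --- for \(\bbbi\) the groundform \(\alpha=XY(X^{10}+11X^5Y^5-Y^{10})\) has monomials of both \(Y\)-parities --- so the grading cannot be read off the summands of the transvectant literally. It has to be taken as the formal grading of Definition \ref{def:grading}, assigned on generators and propagated by \(|(f,g)^j|=|f|+|g|+j\bmod 2\), and its compatibility with the bracket is then a statement to be checked against the structure constants, which is what the Corollary preceding the theorem and Theorem \ref{THEO:StuctureConstants} accomplish.
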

\begin{proof}
Express \(\tf{A}{f}{j}\), \(j=1,2\) in \(\tf{A}{}{}\), \(\tf{A}{Y}{}\) and \(\tf{A}{YY}{}\).
The induced grading is \(|\tf{A}{f}{j}|=1+|f|+j \mod 2\), 
where \(|f|\) is to be defined in Definition \ref{def:grading}.
\end{proof}
\begin{Remark}\label{rem:sl2}
The research leading to this paper started with the observation that if one defined
(using transvection) 
\begin{eqnarray*}
\tf{A}{0}{}&=&-2Y\alpha_X \,\gf{e}{-}-(X\alpha_X-Y\alpha_Y) \,\gf{e}{0}-2X\alpha_Y\, \gf{e}{+},\\
\tf{A}{-}{}&=&Y^2 \gf{e}{-}+ XY \,\gf{e}{0}-X^2\, \gf{e}{+},\\
\tf{A}{+}{}&=&-\alpha_X^2\, \gf{e}{-}+\alpha_X\alpha_Y \,\gf{e}{0}+\alpha_Y^2\, \gf{e}{+},
\end{eqnarray*}
then one has
\begin{eqnarray*}
[\tf{A}{+}{},\tf{A}{-}{}]&=&\omega_\alpha\,\alpha \,\tf{A}{0}{},\\
{[\tf{A}{0}{}, \tf{A}{\pm}{}]}&=&\pm 2 \omega_\alpha\,\alpha\,  \tf{A}{\pm}{}.
\end{eqnarray*}
Using the methods described in Section \ref{sec:homo} one can now construct an Automorphic Lie Algebra.
This observation was presented at the NEEDS 2009 conference.
What it is not clear, however, is that this is the {\bf only possible}  Automorphic Lie Algebra.
The rest of the paper is intended to show that this is indeed the case.

For later use (Cf. Remark \ref{rem:undressing}) we mention that \(\det( \rho(\tf{A}{\pm}{}))=0\)
and \(\det(\rho(\tf{A}{0}{}))=-\omega_\alpha^2\alpha^2\).
\end{Remark}

%%%%%%%%%%%%%%%%%% SECTION FOUR
\section{$\sl$--Automorphic Lie Algebras associated to finite groups}\label{sec:alias}  
Let $G\subset \aut (\bbbc(\lambda)\otimes\tf{g}{}{})$. A Lie  algebra 
$(\bbbc(\lambda)\otimes\tf{g}{}{})^G$ is called \emph{automorphic}, if its elements  $a\in(\bbbc(\lambda)\otimes\tf{g}{}{})^G$ are invariant $g(a)=a$ with respect to all automorphisms $g\in G$, i.e.
$$(\bbbc(\lambda)\otimes\tf{g}{}{})^G=\{ a\in(\bbbc(\lambda)\otimes\tf{g}{}{})\, \,|\,\, \phi (a)=a\, ,\, \forall 
\phi\in G\subset \aut (\bbbc(\lambda)\otimes\tf{g}{}{})\}\,.$$
We consider \(G\) to be a finite group; in particular, let \(G\) be one of the groups in the list (\ref{eq:list}); we aim for a complete classification for the case \(\tf{g}{}{}=\sl\) using geometric data.
This leads us to sketch a classification programme for Automorphic Lie Algebras associated to finite groups more in general, that is beyond \(\sl\). A key feature of this approach is the study of these algebras in the context of classical invariant theory. Indeed, the problem of reduction can be formulated in a uniform way using the theory of invariants. 
We consider first the problem of invariants starting from \(\bbbc[X,Y]\); through a homogenisation we will then map it to \(\bbbc(\lambda)\), where \(\lambda=X/Y\) (or  \(\lambda=Y/X\)).

\newcommand\mm[4]{\left(\begin{array}{cc}#1&#2\\#3&#4\end{array}\right)}
\subsection{The trace-form}\label{sec:traceform}
Given a representation \(\rho\) of \(\sl\), we can define the {\bf trace-form} \(\langle \cdot,\cdot\rangle\) by
\[
\langle \mf{X},\mf{Y}\rangle =\tr (\rho(\mf{X})\rho(\mf{Y})).
\]
\begin{Lemma}
Let \(\rho\) be the standard representation in \(\gl\) and \(f,g \in \bbbc[X,Y]\) with \(\omega_{f,g}\geq 2\).
%Let \(\tf{A}{f}{i}=(f,\mf{A})^i, i=0,1,2\).
Then \(\rho(\mf{A})=\mm{XY}{-X^2}{Y^2}{-XY}\) and
\begin{enumerate}
\item \(\langle \mf{A},\mf{A}\rangle=0\).
\item \(\langle \mf{A},\tf{A}{f}{1}\rangle=0\).
\item \(\langle \tf{A}{f}{1},\tf{A}{g}{1}\rangle =2\omega_f\omega_g f g\).
\item \(\langle \mf{A}, \tf{A}{f}{2}\rangle=-4\binom{\omega_f}{2}f\).
\item \(\langle\tf{A}{f}{1},\tf{A}{g}{2}\rangle=-4(\omega_g-1)(f,g)^1\).\label{item:trans}
\item \(\langle\tf{A}{f}{2},\tf{A}{g}{2}\rangle=-4(f,g)^2\).\label{item:trans2}
\end{enumerate}
\end{Lemma}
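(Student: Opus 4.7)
The plan is to reduce everything to bilinear arithmetic on the three forms $\mf{A}$, $\mf{A}_Y$, $\mf{A}_{YY}$, since the preceding Lemma (specialised to $\omega_\mf{A}=2$) already expresses $\tf{A}{f}{1}$ and $\tf{A}{f}{2}$ as $\bbbc[X,Y]$-linear combinations of these three invariants. First I would compute the $3\times 3$ table of trace-pairings among them in the standard representation $\rho$. A direct matrix multiplication using
\[
\rho(\mf{A})=\mm{XY}{-X^2}{Y^2}{-XY},\quad
\rho(\mf{A}_Y)=\mm{X}{0}{2Y}{-X},\quad
\rho(\mf{A}_{YY})=\mm{0}{0}{2}{0}
\]
shows that only two of the six entries are non-zero, namely $\langle\mf{A}_Y,\mf{A}_Y\rangle=2X^2$ and $\langle\mf{A},\mf{A}_{YY}\rangle=-2X^2$; all other pairings vanish. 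This table is the engine of the whole proof.

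Items (1)--(4) drop out by substitution and bilinearity. For (2), both pairings $\langle\mf{A},\mf{A}\rangle$ and $\langle\mf{A},\mf{A}_Y\rangle$ that appear are zero. For (3) only the $\mf{A}_Y\otimes\mf{A}_Y$ cross term survives and contributes $X^{-2}\omega_f\omega_g f g\cdot 2X^2=2\omega_f\omega_g f g$. For (4) only the $\mf{A}_{YY}$ component of $\tf{A}{f}{2}$ couples with $\mf{A}$, and the prefactor $2X^{-2}\tbinom{\omega_f}{2}f$ paired against $\langle\mf{A},\mf{A}_{YY}\rangle=-2X^2$ yields $-4\tbinom{\omega_f}{2}f$.

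For (5) and (6) the same mechanical expansion yields
\[
\langle\tf{A}{f}{1},\tf{A}{g}{2}\rangle=4X^{-1}(\omega_g-1)\bigl[\omega_f f g_Y-\omega_g f_Y g\bigr],
\]
\[
\langle\tf{A}{f}{2},\tf{A}{g}{2}\rangle=-8X^{-2}\Bigl[\tbinom{\omega_g}{2}f_{YY}g-(\omega_f-1)(\omega_g-1)f_Y g_Y+\tbinom{\omega_f}{2}f g_{YY}\Bigr].
\]
At this point one recognises the bracketed expressions as Lemma \ref{Lemma:lem1} applied to the scalar transvectant $(f,g)^k$: that Lemma rewrites $(f,g)^k$ purely in terms of $Y$-derivatives after eliminating $X$-derivatives via Euler's identity, producing $(f,g)^1=X^{-1}(\omega_g f_Y g-\omega_f f g_Y)$ and $(f,g)^2=2X^{-2}\bigl[\tbinom{\omega_g}{2}f_{YY}g-(\omega_f-1)(\omega_g-1)f_Y g_Y+\tbinom{\omega_f}{2}f g_{YY}\bigr]$. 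Substituting these identifications delivers exactly $-4(\omega_g-1)(f,g)^1$ and $-4(f,g)^2$.

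The only real obstacle is bookkeeping: matching signs and binomial prefactors coming out of two independent expansions, one for $\tf{A}{f}{k}$ via the preceding Lemma and one for $(f,g)^k$ via Lemma \ref{Lemma:lem1}. The content of (5) and (6) is the happy coincidence that the binomials appearing, such as $\omega_g(\omega_g-1)=2\tbinom{\omega_g}{2}$ and $(\omega_f-1)(\omega_g-1)$, line up exactly so that the output is a genuine transvectant rather than some other bilinear combination of $f$, $g$ and their derivatives.
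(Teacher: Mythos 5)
Your proof is correct, and it takes a genuinely different (and in fact more economical) route than the paper's. The paper proves only items 5 and 6, by writing $\rho(\tf{A}{f}{1})$ and $\rho(\tf{A}{g}{2})$ as explicit $2\times2$ matrices in \emph{both} the $X$- and $Y$-derivatives of $f$ and $g$, multiplying them out, and then invoking Euler's identity in the form $Xg_{XX}+Yg_{XY}=(\omega_g-1)g_X$ to collapse the trace into $f_Yg_X-f_Xg_Y=(f,g)^1$ (and similarly for $(f,g)^2$); items 1--4 are left to the reader. You instead push everything through the Gundelfinger normal form: the preceding Lemma writes $\tf{A}{f}{1}$ and $\tf{A}{f}{2}$ as $\bbbc[X,Y]$-combinations of $\mf{A},\mf{A}_Y,\mf{A}_{YY}$, so the whole Lemma reduces to the $3\times3$ pairing table, of which only $\langle\mf{A}_Y,\mf{A}_Y\rangle=2X^2$ and $\langle\mf{A},\mf{A}_{YY}\rangle=-2X^2$ survive, plus the recognition (again via Lemma \ref{Lemma:lem1}, now applied to the scalar transvectant) that the resulting $Y$-derivative combinations are exactly $(f,g)^1$ and $(f,g)^2$. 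I have checked the table entries and the bilinear expansions for all six items, including the binomial bookkeeping in (5) and (6); they are right. What your approach buys is uniformity -- all six items, including the ones the paper omits, follow from one small computation -- and it stays entirely within the two Lemmas the paper has just established, at the cost of needing Gundelfinger's formula twice (once for matrices, once for scalars) where the paper uses Euler's identity once. The only presentational caveat is that the ``happy coincidence'' you mention at the end is not a coincidence but precisely the content of Lemma \ref{Lemma:lem1}; you might state that identification as the final line of the proof rather than as an observation.
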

\begin{proof}
We prove here items \ref{item:trans} and \ref{item:trans2} and leave the other relations to the reader.
%Then \(\rho(\mf{A})=\mm{XY}{X^2}{-Y^2}{-XY}\) and 
%\begin{eqnarray*}
%\langle \mf{A},\mf{A}\rangle&=&\tr (\mm{XY}{X^2}{-Y^2}{-XY}\mm{XY}{X^2}{-Y^2}{-XY})
%\\&=&\tr(\mm{0}{0}{0}{0})=0\,.
%\end{eqnarray*}
%Furthermore, \(\rho(\tf{A}{f}{1})=f_Y \tf{A}{X}{} -f_X \tf{A}{Y}{} 
%%= f_Y \mm{Y}{2X}{0}{-Y}-f_X \mm{X}{0}{-2Y}{-X}
%=\mm{Yf_Y-Xf_X}{2Xf_Y}{2Yf_X}{-Yf_Y+Xf_X}\).
%Thus 
%\begin{eqnarray*}
%\langle \mf{A},\tf{A}{f}{1}\rangle &=&\tr(\mm{XY}{X^2}{-Y^2}{-XY}\mm{Yf_Y-Xf_X}{2Xf_Y}{2Yf_X}{-Yf_Y+Xf_X})
%\\&=&\tr(\mm{XY(Yf_Y+Xf_X)}{\star}{\star}{-XY(Yf_Y+Xf_X)})=0\,.
%\end{eqnarray*}
%and 
%\begin{eqnarray*}
%\langle \tf{A}{f}{1},\tf{A}{g}{1}\rangle &=&\tr(\mm{Yf_Y-Xf_X}{2Xf_Y}{2Yf_X}{-Yf_Y+Xf_X}\mm{Yg_Y-Xg_X}{2Xg_Y}{2Yg_X}{-Yg_Y+Xg_X})
%%\\&=&Y^2f_Yg_Y-XYf_Yg_X-XYf_Xg_Y+X^2f_Xg_X+4XYf_Yg_X+4XYf_Xg_Y+Y^2f_Yg_Y-XYf_Yg_X-XYf_Xg_Y+X^2f_Xg_X
%\\&=&2Y^2f_Yg_Y+2XYf_Yg_X+2XYf_Xg_Y+2X^2f_Xg_X
%\\&=&2(Yf_Y+Xf_X)(Yg_Y+Xg_X)
%\\&=&2\omega_f\omega_g f g\,.
%\end{eqnarray*}
%Finally, \(\rho(\tf{A}{f}{2})=f_{YY}\tf{A}{XX}{-}2f_{XY}\tf{A}{XY}{}+f_{XX}\tf{A}{YY}{}=2\mm{-f_{XY}}{f_{YY}}{-f_{XX}}{f_{XY}}\).
%Thus \begin{eqnarray*}
%\langle \mf{A}, \tf{A}{f}{2}\rangle&=& 2\tr(\mm{XY}{X^2}{-Y^2}{-XY}\mm{-f_{XY}}{f_{YY}}{-f_{XX}}{f_{XY}})
%\\&=&-2(X^2f_{XX}+2XYf_{XY}+Y^2f_{YY})
%\\&=&-2\omega_f(\omega_f-1) f
%\\&=&-4\binom{\omega_f}{2}f\,.
%\end{eqnarray*}
%Similarly,
\begin{eqnarray*}
\langle\tf{A}{f}{1},\tf{A}{g}{2}\rangle&=&2\,\tr(\mm{- Xf_X+Yf_Y}{-2Xf_Y}{-2Yf_X}{- Yf_Y+ Xf_X}\mm{-g_{XY}}{-g_{YY}}{g_{XX}}{g_{XY}})
\\&=&4(-f_Y(Yg_{XY}+Xg_{XX})+f_X(Xg_{XY}+Yg_{YY}))
\\&=&4(\omega_g-1)(-f_Yg_{X}+f_Xg_{Y})
\\&=&-4(\omega_g-1)(f,g)^1\,,
\end{eqnarray*}
and
\begin{eqnarray*}
\langle\tf{A}{f}{2},\tf{A}{g}{2}\rangle&=&4\tr(\mm{-f_{XY}}{-f_{YY}}{f_{XX}}{f_{XY}}\mm{-g_{XY}}{-g_{YY}}{g_{XX}}{g_{XY}})
\\&=&4 (-f_{YY} g_{XX} + 2 f_{XY} g_{XY} - f_{XX} g_{YY})
\\&=&-4(f,g)^2\,.
\end{eqnarray*}
Observe that apparently \(|\langle \mf{A},\mf{B}\rangle|=|\mf{A}|+|\mf{B}|\mod 2\).
\end{proof}
\begin{Example}\label{D_2ex2}
Consider the same setting as in Example \ref{D_2ex1}, namely \(\mf{A}\)  be the invariant form in Theorem \ref{theo:formA}  and  \(\alpha=XY\); it follows then that all trace-forms \(\langle\tf{A}{\alpha}{i},\tf{A}{\alpha}{j}\rangle\) vanish with the only exception of 
\begin{eqnarray*}
\langle\tf{A}{\alpha}{1},\tf{A}{\alpha}{1}\rangle&=& 
\tr(\mm{0}{- 2X^2}{-2Y^2}{0}\mm{0}{-2X^2}{-2Y^2}{0})
\\&=&4X^2Y^2+4X^2Y^2
\\&=&2\,\omega_{\alpha}^{2}\,\alpha^2\,,\\
\langle\tf{A}{\alpha}{2},\tf{A}{\alpha}{2}\rangle&=& 
\tr(\mm{-2}{0}{0}{2}\mm{-2}{0}{0}{2})
\\&=&4+4
\\&=&-4(\alpha,\alpha)^2\,.
\end{eqnarray*}
\end{Example}

%\subsection{Stanley decompositions and Clebsch-Gordan}\label{decomposition}
\subsection{Stanley and Clebsch-Gordan decompositions}\label{decomposition}
%%%
An essential step in the construction of the algebra \(\htf{g}{G}{}\,\)  is to find a basis for the covariant matrices.
This is done in this section by tensoring the Stanley basis of the covariant polynomials with
the selfadjoint representation of \(\sl\).
\begin{Definition}[Spherical polynomial rings]
Let \(\sigma\) be a faithful projective representation of a finite group \(G\) on \(\bbbc^2\). 
Let  \(\alpha,\beta\in\bbbc[X,Y]_G\) and \(\gamma=(\alpha,\beta)^1\)
and assume that every covariant can be written as an element in \(\bbbc[\alpha,\beta,\gamma]\).
As before we consider \(\alpha\) and \(\beta\) as the even elements,
and \(\gamma\) as the odd one (this supposes that \(\gamma\notin\bbbc[\alpha,\beta]\)).
We assume \((\alpha,\gamma)^1\in\bbbc[\alpha,\beta]\) and \((\beta,\gamma)^1\in\bbbc[\alpha,\beta]\)
and we notice that
\[
\omega_\gamma\gamma^2=
\omega_\beta\beta(\alpha,\gamma)^1
-\omega_\alpha\alpha(\beta,\gamma)^1
\in\bbbc[\alpha,\beta]
\]
This implies that the ring of covariants has the following Stanley decomposition:
\begin{equation}\label{eq:stanleydec}
\bbbc[X,Y]_G=\bbbc[\alpha,\beta]\oplus \bbbc[\alpha,\beta]\gamma.
\end{equation}
We call such a ring a {\bf spherical polynomial ring}.
\end{Definition}
\begin{Remark}
For \(\bbbi, \bbbo\) and \(\bbbt\), the \(\Zn{2}\)-grading of \(\beta\) is automatically \(0\).
In the case of \(\bbbd_m\), where \(\beta\) is not the second transvectant of \(\alpha\),
we put \(|\beta|=0\).
In both cases the  \(\Zn{2}\)-grading of \(\gamma\) is \(1\).
\end{Remark} 
\begin{Theorem}
The faithful projective representation of the groups \(\bbbt\), \(\bbbo\), \(\bbbi\) and \(\bbbd_m\) give rise to spherical polynomial rings
of covariants. The Molien functions are given in \cite{MR909219}.
%The faithful projective representation of the groups \(\bbbt, \bbbo, \bbbi\) and \(\bbbd_m\) give rise to spherical polynomial rings of invariants.
\end{Theorem}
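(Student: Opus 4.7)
The plan is to reduce the theorem to the computation of the Molien series for each group, since the Stanley decomposition \eqref{eq:stanleydec} is equivalent to the factorisation
\[
M_G(z)=\frac{1+z^{\omega_\gamma}}{(1-z^{\omega_\alpha})(1-z^{\omega_\beta})}
\]
of the Hilbert series of \(\bbbc[X,Y]_G\) (accounting for the character twist in the projective case). For the four groups \(\bbbt,\bbbo,\bbbi,\bbbd_m\) the Molien function is classical and is tabulated in \cite{MR909219}; what remains is to identify the groundforms \(\alpha,\beta,\gamma\) so that the numerator and denominator exponents of the factored form match the observed degrees, and then to check the side conditions of the definition of a spherical polynomial ring.

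For the polyhedral groups \(\bbbt,\bbbo,\bbbi\), Klein's classification supplies a vertex (or face) form \(\alpha\) whose zero divisor is an exceptional orbit, and I would take \(\beta=(\alpha,\alpha)^2\) and \(\gamma=(\alpha,\beta)^1\); the transvection degree rule \(\omega_{(\phi,\psi)^k}=\omega_\phi+\omega_\psi-2k\) then recovers the entries of Table \ref{table_covariants}. For \(\bbbd_m\) the second transvectant \((\alpha,\alpha)^2\) is a scalar, so one must instead select \(\alpha=XY\) and \(\beta\) as the form \(X^m+cY^m\) vanishing on a second exceptional orbit, and set \(\gamma=(\alpha,\beta)^1\propto X^m-cY^m\); the degrees then match Table \ref{table_covariants_DN}. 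A direct manipulation rewrites the tabulated Molien functions in the required factored form, and since the Hilbert series of \(\bbbc[\alpha,\beta]\oplus\bbbc[\alpha,\beta]\gamma\) agrees with \(M_G(z)\) while this submodule sits inside \(\bbbc[X,Y]_G\), the two rings must coincide.

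Once the decomposition is in place, the remaining hypotheses of the definition are verified cleanly. Algebraic independence of \(\alpha,\beta\) is forced by the two denominator factors, and the statement \(\gamma\notin\bbbc[\alpha,\beta]\) is recorded by the nontrivial numerator term \(z^{\omega_\gamma}\). The containments \((\alpha,\gamma)^1,(\beta,\gamma)^1\in\bbbc[\alpha,\beta]\) are then detected by \(\Zn{2}\)-parity: declaring \(|\alpha|=|\beta|=0\), \(|\gamma|=1\), and noting that a first transvectant adds \(1\) to the grading, both expressions are even and hence cannot contain a \(\gamma\)-component in the Stanley decomposition; the expression for \(\gamma^2\) listed in the definition then certifies the full algebraic structure.

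I expect the main obstacle to be the dihedral case. On the one hand, because \(\beta\) is not the second transvectant of \(\alpha\), the polyhedral template does not apply directly and the identification of \(\beta\) with a specific \(X^m+cY^m\) must be done by hand. On the other hand, because the exponents \(\omega_\beta\) and \(\omega_\gamma\) coincide, the factorisation of \(M_{\bbbd_m}(z)\) into the shape \(\frac{1+z^m}{(1-z^2)(1-z^m)}\) is not visible at a glance and requires a short manipulation of the dihedral character sum; this is the step where a subtle miscounting could slip in.
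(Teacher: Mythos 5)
The paper itself offers no proof of this theorem: it delegates the Molien functions to Springer's paper \cite{MR909219} and, implicitly, the identification of the groundforms to Klein, while the two containments \((\alpha,\gamma)^1\in\bbbc[\alpha,\beta]\) and \((\beta,\gamma)^1\in\bbbc[\alpha,\beta]\) required by the definition are only established later, by direct computation, in the two Lemmas of Section \ref{sec:structure} (Tables \ref{table_pbeta} and \ref{table_qalpha}, which give the stronger statements \((\alpha,\gamma)^1=\beta\,\tf{p}{G}{}(\beta)\) and \((\beta,\gamma)^1=\alpha\,\tf{q}{G}{}(\alpha)\)). Your core argument --- compare the Hilbert series of \(\bbbc[\alpha,\beta]\oplus\bbbc[\alpha,\beta]\gamma\) with the tabulated Molien function and use the containment of this module in \(\bbbc[X,Y]_G\) to force equality --- is the standard route and is sound, provided the inputs are justified non-circularly: algebraic independence of \(\alpha,\beta\) follows from \(\gamma=(\alpha,\beta)^1\neq0\) (not from the shape of the target denominator), and directness of the sum follows from \(\gamma\notin\bbbc[\alpha,\beta]\) together with \(\gamma^2\in\bbbc[\alpha,\beta]\) and the integral closedness of the polynomial ring \(\bbbc[\alpha,\beta]\).

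The genuine soft spot is your verification of \((\alpha,\gamma)^1,(\beta,\gamma)^1\in\bbbc[\alpha,\beta]\) by ``parity''. The rule \(|(f,g)^j|=|f|+|g|+j\bmod 2\) is Definition \ref{def:grading}: it is a grading \emph{imposed} on the already-established decomposition, not an a priori property of transvectants, so invoking it to decide where \((\alpha,\gamma)^1\) lands is circular. Nor does plain degree counting always suffice: for \(\bbbo\) one has \(\omega_{(\beta,\gamma)^1}=18=\omega_{\alpha\gamma}\), and for \(\bbbd_m\) one has \(\omega_{(\alpha,\gamma)^1}=m=\omega_\gamma\), so in these cases a \(\gamma\)-component is not excluded by degree alone. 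The repair is either the paper's route (compute the two transvectants explicitly for each group, as in Tables \ref{table_pbeta} and \ref{table_qalpha}) or a degree count refined by the character of the abelianization \(G/[G,G]\): since \(\sigma\) lands in \(SL(2,\bbbc)\), transvection multiplies characters, so for \(\bbbo\) the covariant \((\beta,\gamma)^1\) carries the nontrivial character of \(\Zn{2}\) while \(\alpha\gamma\) carries the trivial one, and for \(\bbbd_m\) the covariants \((\alpha,\gamma)^1\) and \(\gamma\) carry different characters because \(\chi_\alpha\neq1\). With that replacement, and with the dihedral \(\beta\) pinned down by hand as you anticipate, your outline closes.
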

\begin{Remark}
The group \(\Zn{n}\) does not fit in since \(\alpha=X\), \(\beta=Y\), and  so \(\gamma\) is constant.
The ring of covariants is \(\bbbc[\alpha,\beta]\).
The covariant Lie algebra is spanned by \(\tf{A}{\alpha^2}{2}, \tf{A}{\alpha\beta}{2}\) and \(\tf{A}{\beta^2}{2}\), that is, by the orginal \(\sl\).
The details are given in Appendix \ref{AppZN}.
\end{Remark}
%%%
\begin{Definition}\label{def:grading}
We put a \(\Zn{2}\)-grading on 
\[(\bbbc[\alpha,\beta]\oplus \bbbc[\alpha,\beta](\alpha,\beta)^1)\otimes \mf{A}
\]
by setting \( |\alpha|=0\) (\(|\beta|=0\) in the case of \(\bbbd_m\)), \( |\tf{A}{}{}|=1\), \(|(f,g)^j|=|f|+|g|+j\mod 2\).
\end{Definition}
\begin{Theorem}\label{GCTHEO}
Let \(\mf{A}\) be an invariant quadratic form with coefficients in a \(G\)-module and assume \(\alpha,\beta\in\bbbc[X,Y]_G\) with \(\omega_{\alpha,\beta}\geq 2\)
and \(\gamma=(\alpha,\beta)^1\notin\bbbc[\alpha,\beta]\).
Then it follows from the Clebsch-Gordan decomposition theorem (see \cite{SVM}) that
\begin{equation}\label{eq:CG}
(\bbbc[\alpha,\beta]\oplus \bbbc[\alpha,\beta]\gamma)\otimes \sl
= \bbbc[\alpha,\beta]\otimes(\mf{A}
\oplus \tf{A}{\alpha}{2}
\oplus \tf{A}{\alpha}{1}
\oplus\tf{A}{\beta}{2}
\oplus\tf{A}{\beta}{1}
\oplus \tf{A}{\gamma}{2}).
\end{equation}
\end{Theorem}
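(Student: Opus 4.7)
The plan is to combine the classical Clebsch--Gordan decomposition theorem for \(SL_2\)-modules with the Stanley decomposition of the covariant polynomial ring, and then identify six explicit \(\bbbc[\alpha,\beta]\)-generators among the transvectants.

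First I would identify \(\sl\) with the spin-\(1\) irreducible representation \(V_{2}=\bbbc[X',Y']_{2}\) of \(SL_2\) via the standard identification \(\gf{e}{+}\mapsto -X'^{2}\), \(\gf{e}{0}\mapsto X'Y'\), \(\gf{e}{-}\mapsto Y'^{2}\); under this identification \(\mf{A}\) becomes the canonical bi-quadratic intertwiner. The Clebsch--Gordan theorem then gives, for every \(n\geq 2\),
\[
V_n\otimes V_2\ \cong\ V_{n+2}\oplus V_n\oplus V_{n-2},
\]
whose three \(SL_2\)-equivariant projections are precisely the transvectants \((\,\cdot\,,\mf{A})^{k}\), \(k=0,1,2\), by Lemma~\ref{Lemma:lem1}. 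Restricting along the embedding \(G\hookrightarrow SL_2\) given by \(\sigma\), taking \(G\)-covariants (which is exact and commutes with direct sums), and summing over \(n\), I obtain
\[
(\bbbc[X,Y]\otimes\sl)^G\ \cong\ \bbbc[X,Y]^G[+2]\oplus\bbbc[X,Y]^G\oplus\bbbc[X,Y]^G[-2]
\]
as graded \(\bbbc[\alpha,\beta]\)-modules, one summand per CG component. Invoking the Stanley decomposition \(\bbbc[X,Y]^G=\bbbc[\alpha,\beta]\oplus\bbbc[\alpha,\beta]\gamma\), this becomes a free \(\bbbc[\alpha,\beta]\)-module of rank \(2\cdot 3=6\), matching the rank of the LHS \((\bbbc[\alpha,\beta]\oplus\bbbc[\alpha,\beta]\gamma)\otimes\sl\) in the theorem.

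Next I would identify the six listed transvectants as a \(\bbbc[\alpha,\beta]\)-basis. The Leibniz identity \((fg,\mf{A})^{1}=f\tf{A}{g}{1}+g\tf{A}{f}{1}\) and its second-order analogue reduce every transvectant \(\tf{A}{h}{k}\) with \(h\in\bbbc[X,Y]^G\) to a \(\bbbc[\alpha,\beta]\)-combination of the nine transvectants of the groundforms, \(\{\tf{A}{\alpha}{k},\tf{A}{\beta}{k},\tf{A}{\gamma}{k}\}_{k=0,1,2}\). Three of these candidates disappear immediately, since \(\tf{A}{\alpha}{0}=\alpha\mf{A}\) and \(\tf{A}{\beta}{0}=\beta\mf{A}\) already lie in \(\bbbc[\alpha,\beta]\mf{A}\).

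The main obstacle is then to show that the remaining two extra candidates, \(\tf{A}{\gamma}{0}=\gamma\mf{A}\) and \(\tf{A}{\gamma}{1}\), are also \(\bbbc[\alpha,\beta]\)-combinations of the six stated generators. I would derive these identities by transvecting the spherical-ring relation \(\omega_\gamma\gamma^{2}=\omega_\beta\beta(\alpha,\gamma)^{1}-\omega_\alpha\alpha(\beta,\gamma)^{1}\) appropriately with \(\mf{A}\), and rearranging using the Poisson--Leibniz identity \(\mf{A}(f,g)^{1}=(\mf{A}f,g)^{1}+f\tf{A}{g}{1}\) together with \((\alpha,\gamma)^{1},(\beta,\gamma)^{1}\in\bbbc[\alpha,\beta]\). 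Linear independence of the six generators over \(\bbbc[\alpha,\beta]\) follows from the trace-form computations of the preceding lemma: the pairings \(\langle\tf{A}{f}{1},\tf{A}{g}{1}\rangle=2\omega_f\omega_g fg\) and \(\langle\tf{A}{f}{2},\tf{A}{g}{2}\rangle=-4(f,g)^{2}\), together with the vanishing cross-pairings between different CG components, yield a block-diagonal nondegenerate Gram matrix precluding any nontrivial \(\bbbc[\alpha,\beta]\)-relation; combined with the rank count, the six transvectants thus form the claimed basis.
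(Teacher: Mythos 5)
Your strategy coincides with the paper's: both rest on the Clebsch--Gordan decomposition \(V_n\otimes V_2\cong V_{n+2}\oplus V_n\oplus V_{n-2}\) realized by the transvectants \((\,\cdot\,,\mf{A})^k\), the Stanley decomposition \(\bbbc[X,Y]_G=\bbbc[\alpha,\beta]\oplus\bbbc[\alpha,\beta]\gamma\) to obtain a rank-six count over \(\bbbc[\alpha,\beta]\), the two relations eliminating \(\gamma\mf{A}\) and \(\tf{A}{\gamma}{1}\), and the trace-form to establish independence. (The paper packages the count as an explicit generating-function identity, checked afterwards against the Molien functions, rather than as an abstract module isomorphism; note also that the Clebsch--Gordan projections are \(G\)-equivariant linear maps but not \(\bbbc[\alpha,\beta]\)-module maps, so your displayed decomposition holds as graded vector spaces, which is all the counting requires.)

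The independence argument, however, contains a genuine error. The cross-pairings between different Clebsch--Gordan components do \emph{not} vanish: by the trace-form lemma one has \(\langle\mf{A},\tf{A}{f}{2}\rangle=-4\binom{\omega_f}{2}f\neq 0\) and \(\langle\tf{A}{f}{1},\tf{A}{g}{2}\rangle=-4(\omega_g-1)(f,g)^1\), which is nonzero for instance for \(f=\alpha\), \(g=\gamma\) since \((\alpha,\gamma)^1=\beta\,\tf{p}{G}{}(\beta)\neq0\). Worse, if the Gram matrix really were block-diagonal by Clebsch--Gordan component it would be \emph{degenerate}: the \(k=0\) block is the single entry \(\langle\mf{A},\mf{A}\rangle=0\), so \(\mf{A}\) would lie in the radical of the form and no conclusion about \(F_1\) could be drawn. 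The pairing is in fact block-diagonal only with respect to the \(\Zn{2}\)-grading (even part \(\tf{A}{\alpha}{1},\tf{A}{\beta}{1},\tf{A}{\gamma}{2}\); odd part \(\mf{A},\tf{A}{\alpha}{2},\tf{A}{\beta}{2}\)), and nondegeneracy relies precisely on the non-vanishing cross-pairings you discard; this is how the paper argues, pairing a putative relation successively with \(\mf{A}\), \(\tf{A}{\alpha}{1}\) and \(\tf{A}{\alpha}{2}\) and exploiting those nonzero cross-terms. A smaller gap: the ``second-order analogue'' of the Leibniz rule you invoke for spanning produces cross-terms mixing first derivatives of both factors, which you do not control; the paper avoids this issue entirely by matching Hilbert series, so that spanning follows from independence together with equality of graded dimensions.
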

\newcommand{\tp}[1]{t_{#1}^{\omega_{#1}}}
\newcommand{\tpv}[2]{t_{#1}^{#2}}
\begin{proof}
We compute the generating function of the left and right hand side.
Observe that \(\tp{f}\otimes\tpv{\mf{A}}{2}=\tpv{\tf{A}{f}{2}}{\omega_f-2}\oplus \tpv{\tf{A}{f}{1}}{\omega_f}\oplus\tp{f}\tpv{\mf{A}}{2}\),
if \(\omega_f\geq 2\).
Here we put a subindex on \(t\), to show where it comes from. Strictly speaking this is not necessary,
but it makes the identification of the remaining terms with the basis easier
and, instead of just giving a counting proof, it shows how the result  is obtained.
\begin{eqnarray*}
\lefteqn{\frac{1+\tp{\gamma}}{(1-\tp{\alpha})(1-\tp{\beta})}\otimes\tpv{\mf{A}}{2}=}&&
\\&=&\left(1+\frac{\tp{\beta}}{1-\tp{\beta}}+\frac{\tp{\alpha}}{(1-\tp{\alpha})(1-\tp{\beta})}+\frac{\tp{\gamma}}{(1-\tp{\alpha})(1-\tp{\beta})}\right)\otimes\tpv{\mf{A}}{2}
\\&=&\frac{1+\tp{\gamma}}{(1-\tp{\alpha})(1-\tp{\beta})}\tpv{\mf{A}}{2}
+\left(\frac{\tpv{\tf{A}{\beta}{1}}{\omega_\beta}}{1-\tp{\beta}}+\frac{\tpv{\tf{A}{\alpha}{1}}{\omega_\alpha}}{(1-\tp{\alpha})(1-\tp{\beta})}+\frac{\tpv{\tf{A}{\gamma}{1}}{\omega_\gamma}}{(1-\tp{\alpha})(1-\tp{\beta})}\right)
\\&+&\left(\frac{\tpv{\tf{A}{\beta}{2}}{\omega_\beta-2}}{1-\tp{\beta}}+\frac{\tpv{\tf{A}{\alpha}{2}}{\omega_\alpha-2}}{(1-\tp{\alpha})(1-\tp{\beta})}+\frac{\tpv{\tf{A}{\gamma}{2}}{\omega_\gamma-2}}{(1-\tp{\alpha})(1-\tp{\beta})}\right)
\\&=&\frac{\tpv{\mf{A}}{2}+\tp{\gamma}\tpv{\mf{A}}{2}+\tpv{\tf{A}{\beta}{1}}{\omega_\beta}-\tp{\alpha}\tpv{\tf{A}{\beta}{1}}{\omega_\beta}
+\tpv{\tf{A}{\alpha}{1}}{\omega_\alpha}+\tpv{\tf{A}{\gamma}{1}}{\omega_\gamma}+\tpv{\tf{A}{\beta}{2}}{\omega_\beta-2}
-\tp{\alpha}\tpv{\tf{A}{\beta}{2}}{\omega_\beta-2}+\tpv{\tf{A}{\alpha}{2}}{\omega_\alpha-2}+\tpv{\tf{A}{\gamma}{2}}{\omega_\gamma-2}}{(1-\tp{\alpha})(1-\tp{\beta})}
\\&=&\frac{\tpv{\mf{A}}{2}+\tpv{\tf{A}{\beta}{1}}{\omega_\beta}
+\tpv{\tf{A}{\alpha}{1}}{\omega_\alpha}+\tpv{\tf{A}{\beta}{2}}{\omega_\beta-2}
+\tpv{\tf{A}{\alpha}{2}}{\omega_\alpha-2}+\tpv{\tf{A}{\gamma}{2}}{\omega_\gamma-2}}{(1-\tp{\alpha})(1-\tp{\beta})}
\\&=&\frac{(1+t^{\omega_\alpha-2})(\tpv{\mf{A}}{2}
 +\tpv{\tf{A}{\beta}{2}}{\omega_\beta-2}
 +\tpv{\tf{A}{\alpha}{2}}{\omega_\alpha-2})}{(1-\tp{\alpha})(1-\tp{\beta})}\,.
\end{eqnarray*}
\begin{Remark}
The factorization in  the last step seems to be connected to Corollary \ref{cor:subalgebra}.
See also \cite[$\mathsection 8$, On the reflection representation of certain Hecke algebras]{MR909219}.
\end{Remark}
The underlying relations which are used to get rid of the minus signs are:
\begin{eqnarray*}
2\gamma\mf{A}&=&\omega_\beta\,\beta\,\tf{A}{\alpha}{1}-\omega_\alpha\,\alpha\,\tf{A}{\beta}{1}\,,
\\
\frac{(\omega_\alpha-1)(\omega_\beta-1)}{\omega_\gamma}\tf{A}{\gamma}{1}&=&
\frac{\omega_\alpha-\omega_\beta}{\omega_\gamma}(\alpha,\beta)^2\mf{A}+\binom{\omega_\beta}{2}\beta\,\tf{A}{\alpha}{2}-\binom{\omega_\alpha}{2}\alpha\,\tf{A}{\beta}{2}\,,
\end{eqnarray*}
where we know that \((\alpha,\beta)^2\mf{A}\in\bbbc[\alpha,\beta]\mf{A}\oplus \bbbc[\alpha,\beta]\gamma\mf{A}
\subset \bbbc[\alpha,\beta](\mf{A}\oplus \tf{A}{\alpha}{1}\oplus\tf{A}{\beta}{1})\).
Now that the counting is in order, it suffices to show that the right hand side is indeed a direct sum.
To this end assume that
\[
F_1(\alpha,\beta)\mf{A}
+F_2(\alpha,\beta)\tf{A}{\alpha}{2}
+F_3(\alpha,\beta)\tf{A}{\alpha}{1}
+F_4(\alpha,\beta)\tf{A}{\beta}{2}
+F_5(\alpha,\beta)\tf{A}{\beta}{1}
+F_6(\alpha,\beta)\tf{A}{\gamma}{2}=0.
\]
Then, taking the trace-form with \(\mf{A}\), we obtain
\begin{eqnarray*}
0&=&
\langle \mf{A}, F_2(\alpha,\beta)\tf{A}{\alpha}{2}\rangle
+\langle \mf{A}, F_4(\alpha,\beta)\tf{A}{\beta}{2}\rangle
+\langle \mf{A}, F_6(\alpha,\beta)\tf{A}{\gamma}{2}\rangle
\\&=&
-4\binom{\omega_\alpha}{2} \alpha F_2(\alpha,\beta)
-4\binom{\omega_\beta}{2} \beta F_4(\alpha,\beta)
-4\binom{\omega_\gamma}{2} \gamma F_6(\alpha,\beta)\,.
\end{eqnarray*}
This implies \(F_6=0\) and \(\binom{\omega_\alpha}{2} \alpha F_2(\alpha,\beta)
+\binom{\omega_\beta}{2} \beta F_4(\alpha,\beta)=0\).
Next we take the trace-form with \(\tf{A}{\alpha}{1}\) to obtain
\begin{eqnarray*}
0&=&
\langle \tf{A}{\alpha}{1},F_3(\alpha,\beta)\tf{A}{\alpha}{1}\rangle
+\langle \tf{A}{\alpha}{1},F_4(\alpha,\beta)\tf{A}{\beta}{2}\rangle
+\langle \tf{A}{\alpha}{1},F_5(\alpha,\beta)\tf{A}{\beta}{1}\rangle
\\&=& 2\omega_\alpha^2 \alpha^2 F_3(\alpha,\beta)
-4(\omega_\beta-1) \gamma F_4(\alpha,\beta)
+2\omega_\alpha\omega_\beta\alpha\beta F_5(\alpha,\beta)\,.
\end{eqnarray*}
This implies \(F_4=0\) (and therefore \(F_2=0\)) and \( \omega_\alpha \alpha F_3(\alpha,\beta)
+\omega_\beta\beta F_5(\alpha,\beta)=0\).
Finally, taking the trace-form with \(\tf{A}{\alpha}{2}\), we obtain
\begin{eqnarray*}
0&=&
\langle F_1(\alpha,\beta)\mf{A},\tf{A}{\alpha}{2}\rangle
+\langle F_5(\alpha,\beta)\tf{A}{\beta}{1},\tf{A}{\alpha}{2}\rangle
\\&=&-4\binom{\omega_\alpha}{2}\alpha F_1(\alpha,\beta)
+4(\omega_\beta-1)\gamma F_5(\alpha,\beta).
\end{eqnarray*}
This implies \(F_5=0=F_1\), and therefore \(F_3=0\).
This shows that indeed the sum is direct, as claimed.
This concludes the \(SL(2,\bbbc)\) part of the proof.

Finally the Molien function for each group \(\bbbi, \bbbo, \bbbt\) and \(\bbbd_m\)
for the given matrix representation has been computed and it coincides with our result in all cases.
%END JS 14-08-2009
\end{proof}
The Lie algebra generated by \( \mf{A}\), \(\tf{A}{\alpha}{2}\),\(\tf{A}{\alpha}{1}\), \(\tf{A}{\beta}{2}\), \(\tf{A}{\beta}{1}\), \(\tf{A}{(\alpha,\beta)^1}{2}\) with coefficient ring \(\bbbc[\alpha,\beta]\) is the algebra of covariants \(\htf{g}{G}{}\,\), where we recall that \(\htf{g}{}{}=\mb{C}[X,Y]\otimes\tf{g}{}{}\).
\begin{Corollary}\label{cor:subalgebra}
It follows from the grading that
\begin{equation}
((\bbbc[\alpha,\beta]\oplus \bbbc[\alpha,\beta](\alpha,\beta)^1)\otimes \mf{A})_0
= \bbbc[\alpha,\beta]\otimes(
\tf{A}{\alpha}{1}
\oplus\tf{A}{\beta}{1}
\oplus \tf{A}{\gamma}{2})
\end{equation}
and
\begin{equation}
((\bbbc[\alpha,\beta]\oplus \bbbc[\alpha,\beta](\alpha,\beta)^1)\otimes \mf{A})_1
= \bbbc[\alpha,\beta]\otimes(\mf{A}
\oplus \tf{A}{\alpha}{2}
\oplus\tf{A}{\beta}{2})\,.
\end{equation}
% DISCUSS: MAYBE KEEP IT
%Therefore
%\[
%\gamma\, \bbbc[\alpha,\beta]\otimes(\mf{A}
%\oplus \tf{A}{\alpha}{2}
%\oplus\tf{A}{\beta}{2})\subset \bbbc[\alpha,\beta]\otimes(
%\tf{A}{\alpha}{1}
%\oplus\tf{A}{\beta}{1}
%\oplus \tf{A}{\gamma}{2}).
%\]
It also follows from Theorem \ref{Theorem:THEOGR}  that \(\bbbc[\alpha,\beta]\otimes(
\tf{A}{\alpha}{1}
\oplus\tf{A}{\beta}{1}
\oplus \tf{A}{\gamma}{2})\) is a Lie subalgebra.
We find later that these are the only elements that can be mapped to invariant homogeneous elements with divisors \(\alpha\) or \(\beta\)
in the Automorphic Lie Algebra (see Section \ref{sec:homo}).
\end{Corollary}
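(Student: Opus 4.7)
The plan is to derive the grading decomposition by reading off the $\Zn{2}$-grading on each summand of the Clebsch-Gordan decomposition (\ref{eq:CG}) established in Theorem \ref{GCTHEO}, and then to invoke Theorem \ref{Theorem:THEOGR} to conclude that the grading-$0$ part is a Lie subalgebra.

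First I would compute the grading of every generator appearing on the right-hand side of (\ref{eq:CG}). From Definition \ref{def:grading} we have $|\mf{A}|=1$, $|\alpha|=0$, $|\beta|=0$, together with the rule $|(f,g)^j|\equiv|f|+|g|+j\pmod 2$. This immediately gives $|\tf{A}{\alpha}{k}|\equiv k+1$, so $|\tf{A}{\alpha}{1}|=0$ and $|\tf{A}{\alpha}{2}|=1$, and analogously for $\beta$. Since $\gamma=(\alpha,\beta)^1$ carries $|\gamma|=1$, one has $|\tf{A}{\gamma}{2}|\equiv 1+1+2\equiv 0$. Because every element of the coefficient ring $\bbbc[\alpha,\beta]$ has grading $0$, multiplication by such coefficients preserves the grading. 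Collecting the grading-$0$ summands of (\ref{eq:CG}) produces $\bbbc[\alpha,\beta]\otimes(\tf{A}{\alpha}{1}\oplus\tf{A}{\beta}{1}\oplus\tf{A}{\gamma}{2})$, and collecting the grading-$1$ summands produces $\bbbc[\alpha,\beta]\otimes(\mf{A}\oplus\tf{A}{\alpha}{2}\oplus\tf{A}{\beta}{2})$; the directness of each subsum is inherited from the directness of (\ref{eq:CG}) itself.

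For the second assertion I would appeal to Theorem \ref{Theorem:THEOGR}, according to which the covariant Lie algebra is $\Zn{2}$-graded in the sense that $|[\mf{X},\mf{Y}]|\equiv|\mf{X}|+|\mf{Y}|\pmod 2$. Therefore the bracket of two grading-$0$ elements stays in grading $0$, and the decomposition just obtained identifies the grading-$0$ part precisely as $\bbbc[\alpha,\beta]\otimes(\tf{A}{\alpha}{1}\oplus\tf{A}{\beta}{1}\oplus\tf{A}{\gamma}{2})$, which is therefore closed under the bracket and hence a Lie subalgebra. The main obstacle in the whole argument is really just the careful bookkeeping of the six gradings; once the grading rule from Definition \ref{def:grading} is applied to each generator, the corollary becomes a direct formal consequence of Theorems \ref{GCTHEO} and \ref{Theorem:THEOGR}, without any additional computation.
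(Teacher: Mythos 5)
Your proposal is correct and follows exactly the route the paper intends: the corollary is asserted to "follow from the grading," and your explicit bookkeeping ($|\tf{A}{f}{k}|\equiv |f|+k+1 \bmod 2$ with $|\alpha|=|\beta|=0$, $|\gamma|=1$, applied to the six generators of the Clebsch--Gordan decomposition) together with the invocation of Theorem \ref{Theorem:THEOGR} for closure of the degree-$0$ part is precisely the omitted argument. Nothing is missing.
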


\subsection{Description of the group actions}\label{sec:groups}
%%%%%%%%%%%%%%%%%%%%%%%%%%%%%%%%%%%%%%%%%%%%%%%
% B SL 14-08-2009
We describe the group action on \(\bbbc^2\) for later use. In those cases where the action is not given explicitly,
it is given implicitly, since it permutes the zeros of the groundforms. Notice that the degrees of the groundforms 
correspond with number of vertices, edges and faces in the cases of \(\bbbi, \bbbo\) and \(\bbbt\).

\subsubsection{Icosahedral group  \(\bbbi\)}\label{sec:i}
An icosahedron is a convex regular polyhedron (a Platonic solid) with twenty triangular faces, thirty edges and twelve vertices. 
A regular icosahedron has $60$ rotational (or orientation-preserving) symmetries; the set of orientation-preserving symmetries forms a group referred to as $\bbbi$; $\bbbi$ is isomorphic to $A_5$, the alternating group of even permutations of five objects. As an abstract group it is generated by two elements, \(s\) and \(r\), satisfying the identities \[s^{5}=r^{2}=id\,, \quad\,(s\,r)^3=id\,.\]

The \(\bbbi\)--groundforms are given by \(\alpha=XY (X^{10}+11 X^{5} Y^{5}-Y^{10})\), \(\beta=(\alpha,\alpha)^2\) and \(\gamma=(\alpha,\beta)^1\). Since the degree of \(\alpha\), \(\omega_\alpha\), is equal to \(12\) it follows that \(\omega_\beta=2\,\omega_\alpha-4=20\) and \(\omega_\gamma=3\,\omega_\alpha-6=30\). 

%%%%%%%%%%%%%%%%%%%%%%%%%%%%%%%%%%%%%%%%%%%%%%
% B SL 14-08-2009
\subsubsection{Octahedral group \(\bbbo\)}\label{sec:o}
A regular octahedron is a Platonic solid composed of eight equilateral triangles, four of which meet at each vertex; it has six vertices and eight edges.
A regular octahedron has 24 rotational (or orientation-preserving) symmetries. A cube has the same set of symmetries, since it is its dual. The group of orientation-preserving symmetries is denoted by \(\bbbo\) and it is isomorphic to \(S_4\), or the group of permutations of four objects, since there is exactly one such symmetry for each permutation of the four pairs of opposite sides of the octahedron. As an abstract group it is generated by two elements, \(s\) and \(r\), satisfying the identities \[s^{4}=r^{2}=id\,, \quad\,(s\,r)^3=id\,.\]

The classical \(\bbbo\)--groundforms are given by \(\alpha=XY (X^{4}-Y^{4})\), \(\beta=(\alpha,\alpha)^2\) and \(\gamma=(\alpha,\beta)^1\). Since \(\omega_\alpha=6\) it follows that \(\omega_\beta=2\,\omega_\alpha-4=8\) and \(\omega_\gamma=3\,\omega_\alpha-6=12\). 

%%%%%%%%%%%%%%%%%%%%%%%%%%%%%%%%%%%%%%%%%%%%%%
% B SL 14-08-2009
\subsubsection{Tetrahedral group \(\bbbt\)}\label{sec:t}
A regular tetrahedron is a regular polyhedron composed of four equilateral triangular faces, three of which meet at each vertex. It has four vertices and six edges. A regular tetrahedron is a Platonic solid; it has 12 rotational (or orientation-preserving) symmetries; the set of orientation-preserving symmetries forms a group referred to as \(\bbbt\), isomorphic to the alternating subgroup \(A_4\). As an abstract group it is generated by two elements, \(s\) and \(r\), satisfying the identities \[s^{3}=r^{2}=id\,, \quad\,(s\,r)^3=id\,.\]

The \(\bbbt\)--groundforms are given by \(\alpha=X^{4}-2i\sqrt{3}X^2 Y^2+Y^{4}\), \(\beta=(\alpha,\alpha)^2\) and \(\gamma=(\alpha,\beta)^1\). Since \(\omega_\alpha=4\) it follows that \(\omega_\beta=2\,\omega_\alpha-4=4\) and \(\omega_\gamma=3\,\omega_\alpha-6=6\). 

%%%%%%%%%%%%%%%%%%%%%%%%%%%%%%%%%%%%%%%%%%%%%%
\subsubsection{Dihedral group \(\bbbd_n\)}\label{sec:DN}
Let us now turn our attention to the symmetry group of a dihedron, the dihedral group; this is fundamentally different from the previous cases, since \(\beta\neq(\alpha,\alpha)^2\) and it has to be computed independently; however the rest of the procedure is the same.

As in the case of \(\Zn{n}\) (see Appendix \ref{AppZN}), in order to get an action of \(\bbbd_n\) on the spectral parameter \(\lambda\)
we need to act with \(\bbbd_{m}\) on the \(X,Y\)-plane, where \(m=n\) if \(n\) is odd, and \(m=2n\) when \(n\) is even.
The dihedral group \(\bbbd_{m}\) is the group of rotations and reflections of the plane which preserve a regular polygon with \(m\) vertices. It is generated by two elements, \(s\) and \(r\), satisfying the identities 
\[s^{m}=r^{2}=id\,, \quad\,r\,s\,r=s^{-1}\,.\]
%For \(m\ge 3\) the group is non-commutative. The case \(m=2\) is special, in this case \(\bbbd _2\cong \Zn{2} \times \Zn{2}\) and it is commutative. Klein calls it \emph{Vierer Gruppe} (the \emph{quadratic group}). Some authors call \(\bbbd _2\) \emph{Klein's group}.

We take a projective representation of the group,
defined by \(\sigma(s)=\left(\begin{array}{cc} \omega & 0\\ 0 &\omega^{m-1}\end{array}\right)\),
with \(\omega\) an elementary \(m\)th root of unity,
and
\(\sigma(r)=\left(\begin{array}{cc} 0 & i\\ i & 0 \end{array}\right)\). 
Let \(\alpha=XY\), \(\beta=\frac{1}{2}(X^m+Y^m)\) and  \(\gamma=(\alpha,\beta)^1\), i.e.
\[
\gamma=(\alpha,\beta)^1=\alpha_Y\beta_X-\alpha_X\beta_Y= \frac{m}{2}(X^m-Y^m)\,,
\]
that is 
\(\omega_\alpha=2\), \(\omega_\beta=m\), \(\omega_\gamma=2+m-2=m\) and one has \(\gamma^2-m^2\beta^2=- m^2 \alpha^m\).
Then \(\sigma^\star(s)\alpha=\alpha\), \(\sigma^\star(r)\alpha=i^2\alpha\),
\(\sigma^\star(s)\beta=\beta\), \(\sigma^\star(r)\beta=i^m\beta\). The action
of \(\gamma\) is given by the product of \(\alpha\) and \(\beta\).
%\begin{WORK}
%On page 58 in Lamotke there is a table which goes as follows
%\[
%\begin{array}{cccccc}
%D_n& odd &b &-i& -i& i\\
%D_n& even& b& 1& -1& -1\\
%D_n& even& c &-1& -1& 1\\
%T&& c& chi& chi^2& 1\\
%O&& c& -1& 1& -1\\
%I&& 1& 1& 1& 1\\
%\end{array}
%\]
%In de even case this looks slightly more complicated than we have. This may be because we specify the rep,
%while Lamotke works in general, but we need to discuss this when you're back.
%\end{WORK}
\begin{Example}[$\bbbd_n$, with $n=2$]\label{D_2ex3} 
Let \(\alpha=XY\) and let \(\mf{A}\)  be the invariant form given in Theorem \ref{theo:formA}; recall also Example \ref{D_2ex1}; one has 
\begin{eqnarray*}
\rho(\tf{A}{\alpha}{1})
&=& \left(
\begin{array}{cc}
0 &-2\,X^2\\
-2\,Y^2 & 0
\end{array}
\right),\\
\rho(\tf{A}{\beta}{1})
&=& \left(
\begin{array}{cc}
2Y^4- 2X^4 &-4XY^3\\
-4X^3Y &  2X^4- 2Y^4 
\end{array}
\right),\\
\rho(\tf{A}{\gamma}{2})
&=& \left(
\begin{array}{cc}
0 & 48 Y^2\\
48 X^2 & 0
\end{array}
\right).
\end{eqnarray*}
\end{Example}

%%%%
\subsection{Structure constants}\label{sec:structure}
In this Section we use the transvectant formula and the trace-form to derive the commutation relations of the even part of the covariant algebra \(\htf{g}{G}{}\,\).
We first present a few technical Lemmas, to be proven by checking them for each group.
\begin{Lemma}
For the groups \(\bbbt, \bbbo, \bbbi\) and \(\bbbd_m\), \(\tf{p}{G}{}(\beta)=\frac{(\alpha,\gamma)^1}{\beta}\in\bbbc[\beta]\).
\end{Lemma}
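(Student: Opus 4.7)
The plan is to reduce the claim to a statement about the shape of the fundamental syzygy $\gamma^2 = P(\alpha,\beta)$ of the spherical polynomial ring, and then to verify this shape case by case.

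Since $|(\alpha,\gamma)^1| = |\alpha|+|\gamma|+1 \equiv 0 \pmod 2$ and $|\gamma^2| \equiv 0 \pmod 2$, the Stanley decomposition puts both $(\alpha,\gamma)^1$ and $\gamma^2$ in $\bbbc[\alpha,\beta]$; write $\gamma^2 = P(\alpha,\beta)$. Differentiating this identity in $X$ and $Y$ and forming the first transvectant with $\alpha$ gives
\[
2\gamma\,(\alpha,\gamma)^1 \;=\; \alpha_Y(P_\alpha\alpha_X+P_\beta\beta_X) - \alpha_X(P_\alpha\alpha_Y+P_\beta\beta_Y) \;=\; P_\beta\,(\alpha,\beta)^1 \;=\; P_\beta\,\gamma,
\]
so, cancelling the nonzero $\gamma$ in the integral domain $\bbbc[X,Y]$, one obtains $(\alpha,\gamma)^1 = \tfrac{1}{2}P_\beta$. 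The lemma is therefore equivalent to $P_\beta \in \beta\,\bbbc[\beta]$, i.e.\ $P = P_1(\alpha) + \beta^2 P_2(\beta)$ for some polynomials $P_1,P_2$.

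For $\bbbi$ this is automatic from degree counting: $12i+20j = \omega_{\gamma^2} = 60$ has only the nonnegative solutions $(5,0)$ and $(0,3)$, so $P = a\alpha^5 + b\beta^3$ and $(\alpha,\gamma)^1/\beta = \tfrac{3b}{2}\beta$. For $\bbbo$ the analogous equation $6i+8j=24$ only admits $(4,0)$ and $(0,3)$, giving the same structure. For $\bbbd_m$ the paper already records in Section~\ref{sec:DN} the syzygy $\gamma^2 - m^2\beta^2 = -m^2\alpha^m$, whence $P_\beta/\beta = 2m^2$; this is confirmed by the one-line computation $(\alpha,\gamma)^1 = \alpha_Y\gamma_X - \alpha_X\gamma_Y = \tfrac{m^2}{2}(X^m+Y^m) = m^2\beta$.

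The only genuine difficulty is the tetrahedral case, where $\omega_\alpha=\omega_\beta=4$ and the degree count alone leaves $P = a_0\alpha^3 + a_1\alpha^2\beta + a_2\alpha\beta^2 + a_3\beta^3$ as the a priori possibility. Here I would compute the syzygy explicitly from the groundforms of Section~\ref{sec:t}: writing $u = X^4+Y^4$ and $v = X^2Y^2$, one has $\alpha = u - 2i\sqrt{3}\,v$, and $\beta = (\alpha,\alpha)^2$ works out to be a scalar multiple of $u + 2i\sqrt{3}\,v$. A short expansion then yields $\alpha^3 - \beta^3 \propto v(u^2-4v^2) = X^2Y^2(X^4-Y^4)^2$, while a parallel direct calculation gives $\gamma = (\alpha,\beta)^1 \propto XY(X^4-Y^4)$, so $\gamma^2$ is proportional to the same quantity. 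Hence the tetrahedral syzygy is purely diagonal, $\gamma^2 = c(\alpha^3 - \beta^3)$, the cross coefficients $a_1, a_2$ vanish, and $(\alpha,\gamma)^1/\beta = -\tfrac{3c}{2}\beta \in \bbbc[\beta]$.
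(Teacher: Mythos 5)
Your argument is correct, but it is a genuinely different (and more structural) proof than the one in the paper: there the lemma is established purely by table lookup, i.e.\ by computing \((\alpha,\gamma)^1/\beta\) explicitly from the groundforms for each group and recording the result in Table \ref{table_pbeta}. Your route instead derives the identity \((\alpha,\gamma)^1=\tfrac12 P_\beta\) from the syzygy \(\gamma^2=P(\alpha,\beta)\) via the chain rule for the first transvectant (the computation \((\alpha,P(\alpha,\beta))^1=P_\beta(\alpha,\beta)^1\) is correct, and note it does not even require the a priori appeal to the Stanley decomposition in your first paragraph), which reduces the lemma to the absence of mixed monomials \(\alpha^i\beta^j\) with \(0<j\) and \(j\) contributing a term of degree \(1\) in \(\beta\) to \(P_\beta\) outside \(\beta\bbbc[\beta]\) --- equivalently \(P=P_1(\alpha)+\beta^2P_2(\beta)\). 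What this buys is that the cases \(\bbbi\) and \(\bbbo\) become pure degree counting, \(\bbbd_m\) follows from the recorded syzygy \(\gamma^2-m^2\beta^2=-m^2\alpha^m\), and only \(\bbbt\) (where \(\omega_\alpha=\omega_\beta=4\)) needs an explicit expansion; your expansion in \(u=X^4+Y^4\), \(v=X^2Y^2\) is right, the only loose point being the normalization of \(\beta=(\alpha,\alpha)^2\), which carries a factor \(-96i\sqrt3\) so that the diagonal syzygy reads \(\gamma^2=c_1\alpha^3+c_2\beta^3\) with \(c_1\neq -c_2\) rather than literally \(\gamma^2=c(\alpha^3-\beta^3)\); this does not affect the conclusion, since only the vanishing of the cross coefficients \(a_1,a_2\) matters. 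As a bonus, your identity also yields \((\beta,\gamma)^1=-\tfrac12 P_\alpha\), giving the companion lemma on \(\tf{q}{G}{}(\alpha)\) and making the relation of Corollary \ref{gamma2} a weighted Euler identity for \(P\); the paper's table, on the other hand, delivers the explicit constants that are actually used later in the structure constants.
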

\begin{proof}
See Table \ref{table_pbeta}.
\end{proof}
\begin{center}
\begin{table}[ht!]
\begin{center}
\begin{tabular}{|c|c|c|c|c|} \hline
\(G\) & \(\bbbi\)  &  \(\bbbo\) & \(\bbbt\)  & \(\bbbd_m\) \\ 
\hline \hline 
\(\tf{p}{G}{}(\beta)\) & \(-\frac{300}{121}\,\beta\)&\(-\frac{48}{25}\,\beta\) & \(-\frac{4}{3}\,\beta\)  & \(m^2\)\\ 
\hline
\end{tabular}
\end{center}
\caption{$\tf{p}{G}{}(\beta)$ for the groups \(\bbbi, \bbbo, \bbbt\) and \(\bbbd_m\)}
\label{table_pbeta}
\end{table}
\end{center}
\begin{Lemma}
For the groups \(\bbbi, \bbbo, \bbbt\) and \(\bbbd_m\), \(\tf{q}{G}{}(\alpha)=\frac{(\beta,\gamma)^1}{\alpha}\in\bbbc[\alpha]\).
\end{Lemma}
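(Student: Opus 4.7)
My approach exploits the Klein syzygy and mirrors the companion lemma for $\tf{p}{G}{}(\beta)$. Since $|\gamma^{2}|=2|\gamma|$ is even, the Stanley decomposition (\ref{eq:stanleydec}) places $\gamma^{2}$ in the even component $\bbbc[\alpha,\beta]$, so one can write $\gamma^{2}=f(\alpha,\beta)$ for some polynomial $f$. Differentiating this identity with respect to $X$ and $Y$ and using $\gamma=(\alpha,\beta)^{1}=\alpha_{Y}\beta_{X}-\alpha_{X}\beta_{Y}$, one obtains the two polynomial identities
\[
2\gamma\,(\alpha,\gamma)^{1}=f_{\beta}\,\gamma,\qquad 2\gamma\,(\beta,\gamma)^{1}=-f_{\alpha}\,\gamma,
\]
which, after cancelling the common factor $\gamma$, read $(\alpha,\gamma)^{1}=\tfrac{1}{2}f_{\beta}$ and $(\beta,\gamma)^{1}=-\tfrac{1}{2}f_{\alpha}$.

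Next I would invoke the preceding lemma: $(\alpha,\gamma)^{1}=\beta\,\tf{p}{G}{}(\beta)\in\beta\,\bbbc[\beta]$ forces $f_{\beta}\in\beta\,\bbbc[\beta]$. In particular $f_{\alpha\beta}=0$, so $f$ splits as $f(\alpha,\beta)=g(\alpha)+h(\beta)$. Since $f$ is homogeneous of total degree $2\omega_{\gamma}$ in $X,Y$, the summand $g(\alpha)$ must be the single monomial $g(\alpha)=c_{G}\,\alpha^{k_{G}}$ with $k_{G}=2\omega_{\gamma}/\omega_{\alpha}$; reading off the exponents from the tables of Section \ref{sec:groups} gives $k_{\bbbi}=5$, $k_{\bbbo}=4$, $k_{\bbbt}=3$ and $k_{\bbbd_{m}}=m$, all at least $2$. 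Hence $f_{\alpha}=k_{G}\,c_{G}\,\alpha^{k_{G}-1}\in\alpha\,\bbbc[\alpha]$, and
\[
\tf{q}{G}{}(\alpha)=\frac{(\beta,\gamma)^{1}}{\alpha}=-\frac{k_{G}\,c_{G}}{2}\,\alpha^{k_{G}-2}\in\bbbc[\alpha],
\]
which is the stated conclusion.

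I do not anticipate a real obstacle: the argument rests entirely on the Klein syzygy $\gamma^{2}\in\bbbc[\alpha,\beta]$ (already established in the discussion preceding (\ref{eq:stanleydec})) and a single differentiation, together with homogeneity. A purely computational alternative, closer to the style the authors employ for the companion lemma, is to substitute the explicit groundforms of Section \ref{sec:groups} into the first-transvectant formula and verify the statement case by case, producing a companion table to Table \ref{table_pbeta} that records the scalar $-\tfrac{1}{2}k_{G}c_{G}$ for each group.
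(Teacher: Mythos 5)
Your proof is correct, but it takes a genuinely different route from the paper's: the paper establishes this lemma purely by case-by-case computation, simply exhibiting \(\tf{q}{G}{}(\alpha)\) for each group in Table \ref{table_qalpha}, exactly as the companion lemma for \(\tf{p}{G}{}(\beta)\) is settled by Table \ref{table_pbeta}. Your derivation is sound at every step: differentiating the syzygy \(\gamma^2=f(\alpha,\beta)\) and contracting with \(\alpha_Y,\alpha_X\) (respectively \(\beta_Y,\beta_X\)) does give \(2\gamma\,(\alpha,\gamma)^1=f_\beta\,\gamma\) and \(2\gamma\,(\beta,\gamma)^1=-f_\alpha\,\gamma\) with the paper's convention \((u,v)^1=u_Yv_X-u_Xv_Y\); cancelling \(\gamma\) is legitimate in the domain \(\bbbc[X,Y]\); the splitting \(f=g(\alpha)+h(\beta)\) uses the algebraic independence of \(\alpha\) and \(\beta\); and homogeneity forces \(g\) to be the single monomial \(c_G\alpha^{2\omega_\gamma/\omega_\alpha}\), with exponent at least \(2\) in every case, so \(\tf{q}{G}{}(\alpha)=-\tfrac12 g'(\alpha)/\alpha\in\bbbc[\alpha]\). (As a consistency check, the relation \(\omega_\gamma\gamma^2=\omega_\beta\beta(\alpha,\gamma)^1-\omega_\alpha\alpha(\beta,\gamma)^1\) stated before (\ref{eq:stanleydec}) is precisely the Euler relation applied to your two formulas, and Corollary \ref{gamma2} becomes \(\omega_\gamma\) times the syzygy itself.) What your approach buys is an explanation of the \emph{shape} of Tables \ref{table_pbeta} and \ref{table_qalpha} -- the two lemmas are seen to be equivalent to each other modulo the Klein syzygy rather than two independent miracles -- and it would survive in settings where explicit groundforms are unwieldy. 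What it does not buy are the explicit constants, which the paper genuinely needs downstream (Lemma \ref{lem:gamma2}, Lemma \ref{lem:lem45}, Table \ref{table_hom_ele}); and it still rests on case-by-case input, namely the \(\tf{p}{G}{}\) table and the fact that \(\gamma^2\in\bbbc[\alpha,\beta]\) for each group. So the computational companion table you mention at the end is not really optional for the paper's purposes, but as a proof of the stated membership your argument is complete.
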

\begin{proof}
See Table \ref{table_qalpha}.
\end{proof}
\begin{center}
\begin{table}[ht!]
\begin{center}
\begin{tabular}{|c|c|c|c|c|} \hline
 \(G\)& \(\bbbi\)  &  \(\bbbo\) & \(\bbbt\)  & \(\bbbd_m\) \\ 
\hline \hline 
\(\tf{q}{G}{}(\alpha)\) &  \(-101198592000\,\alpha^3\)  & \(34560000\,\alpha^2\) & \(-3538944 \,i\,\sqrt{3}\,\alpha\) & \(\frac{1}{2}m^3\,\alpha^{m-2}\)\\ 
\hline
\end{tabular}
\end{center}
\caption{$\tf{q}{G}{}(\alpha)$ for the groups \(\bbbi, \bbbo, \bbbt\) and \(\bbbd_m\)}
\label{table_qalpha}
\end{table}
\end{center}
\begin{Corollary}\label{gamma2}
\( \omega_\gamma\gamma^2=-\omega_\alpha\alpha^2\tf{q}{G}{}(\alpha)+\omega_\beta\beta^2\tf{p}{G}{}(\beta) \).
\end{Corollary}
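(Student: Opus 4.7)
The plan is essentially a one-line substitution, so I would not do any new calculation at all. The work has already been done in the definition of a spherical polynomial ring and in the two immediately preceding lemmas.

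First I would recall the syzygy displayed in the definition of spherical polynomial rings, namely
\[
\omega_\gamma\gamma^2 = \omega_\beta\,\beta\,(\alpha,\gamma)^1 - \omega_\alpha\,\alpha\,(\beta,\gamma)^1,
\]
which holds whenever \(\gamma=(\alpha,\beta)^1\) and the covariant ring is spherical (and which is, in essence, an Euler-type identity for the Jacobian \(\gamma\), true in general for any spherical covariant ring of the kind considered in Section \ref{decomposition}).

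Next I would invoke the two preceding lemmas, which assert
\[
(\alpha,\gamma)^1 = \beta\,\tf{p}{G}{}(\beta), \qquad (\beta,\gamma)^1 = \alpha\,\tf{q}{G}{}(\alpha),
\]
for each of the four groups \(\bbbi\), \(\bbbo\), \(\bbbt\), \(\bbbd_m\) (the case-by-case verification being tabulated in Tables \ref{table_pbeta} and \ref{table_qalpha}). Substituting these two identities into the syzygy yields
\[
\omega_\gamma\gamma^2 = \omega_\beta\,\beta^2\,\tf{p}{G}{}(\beta) - \omega_\alpha\,\alpha^2\,\tf{q}{G}{}(\alpha),
\]
which is exactly the claimed formula after reordering.

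There is essentially no obstacle: the only non-trivial ingredient is the syzygy \(\omega_\gamma\gamma^2 = \omega_\beta\beta(\alpha,\gamma)^1-\omega_\alpha\alpha(\beta,\gamma)^1\), and that is already derived earlier in the paper. If anything, the mild subtlety is the sign convention, so I would double-check that the minus sign in the corollary matches the minus sign coming from the second term of the syzygy; this is an entirely bookkeeping matter.
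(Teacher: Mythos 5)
Your proposal is correct and follows exactly the route the paper intends: the corollary is stated without proof precisely because it is the immediate substitution of the two preceding lemmas, \((\alpha,\gamma)^1=\beta\,\tf{p}{G}{}(\beta)\) and \((\beta,\gamma)^1=\alpha\,\tf{q}{G}{}(\alpha)\), into the identity \(\omega_\gamma\gamma^2=\omega_\beta\beta(\alpha,\gamma)^1-\omega_\alpha\alpha(\beta,\gamma)^1\) displayed in the definition of spherical polynomial rings. Your sign bookkeeping is also right, so nothing further is needed.
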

\begin{Lemma}\label{lem:gamma2}
For the groups \(\bbbt, \bbbo, \bbbi\) and \(\bbbd_m\), 
\begin{equation}\label{gammagamma2}
2 (\omega_\gamma-1)^2\tf{p}{G}{}(\beta)\tf{q}{G}{}(\alpha)+\omega_\alpha\,\omega_\beta(\gamma,\gamma)^2=0\,.
\end{equation}
\end{Lemma}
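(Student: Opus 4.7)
The plan is to verify the identity case-by-case for each of the four groups, following the same style as the two preceding lemmas (on \(\tf{p}{G}{}(\beta)\) and \(\tf{q}{G}{}(\alpha)\), both proved by a table check). The key structural observation that makes this tractable is the Clebsch-Gordan decomposition of Theorem \ref{GCTHEO}: because \((\gamma,\gamma)^2\) is an invariant covariant of degree \(2\omega_\gamma - 4\), it must lie in \(\bbbc[\alpha,\beta]\), and a short degree count leaves very few possibilities. Specifically, for \(\bbbi\) the only admissible monomial is \(\alpha^{3}\beta\); for \(\bbbo\) it is \(\alpha^{2}\beta\); for \(\bbbt\) one has \((\gamma,\gamma)^2 \in \mathrm{span}\{\alpha^2,\alpha\beta,\beta^2\}\); and for \(\bbbd_m\) it is \(\alpha^{m-2}\) (the term \(\alpha^{(m-4)/2}\beta\) that would be admissible for even \(m\) is excluded by the \(\Zn{2}\)-grading of Definition \ref{def:grading}). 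So in each case the lemma reduces to pinning down at most three scalar coefficients.

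The dihedral case is completely explicit and can be displayed: with \(\gamma = \frac{m}{2}(X^m - Y^m)\) from Section \ref{sec:DN} the mixed partial \(\gamma_{XY}\) vanishes, so \((\gamma,\gamma)^2 = 2\gamma_{XX}\gamma_{YY} = -\frac{1}{2}m^4(m-1)^2\,\alpha^{m-2}\) with \(\alpha = XY\). Substituting \(\omega_\alpha = 2\), \(\omega_\beta = \omega_\gamma = m\), \(\tf{p}{\bbbd_m}{} = m^2\) and \(\tf{q}{\bbbd_m}{} = \frac{1}{2}m^3\alpha^{m-2}\) from Tables \ref{table_pbeta}--\ref{table_qalpha} gives
\[
2(m-1)^2 \cdot m^2 \cdot \tfrac{1}{2}m^3\alpha^{m-2} + 2m \cdot \bigl(-\tfrac{1}{2}m^4(m-1)^2\alpha^{m-2}\bigr) = 0,
\]
as required. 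For the three Platonic groups one proceeds in the same way: read off \(\alpha,\beta\) from Sections \ref{sec:i}--\ref{sec:t}, form \(\gamma = (\alpha,\beta)^1\), compute the second transvectant, and match the single surviving scalar against the value predicted by \(-\frac{2(\omega_\gamma-1)^2\,\tf{p}{G}{}\,\tf{q}{G}{}}{\omega_\alpha\omega_\beta}\).

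The main obstacle is purely computational, not conceptual: for \(\bbbi\), \(\gamma\) has degree \(30\), so \((\gamma,\gamma)^2\) is a polynomial of degree \(56\) and is most comfortably handled by computer algebra. One can in principle try to give a uniform, case-free derivation: Cramer's rule applied to \((\alpha,\gamma)^1 = \beta\tf{p}{G}{}\) and \((\beta,\gamma)^1 = \alpha\tf{q}{G}{}\) yields \(\gamma\gamma_X = \beta\tf{p}{G}{}\beta_X - \alpha\tf{q}{G}{}\alpha_X\) and similarly for \(\gamma\gamma_Y\); differentiating once more, together with the Euler homogeneity relations \(\omega_\alpha\alpha\tf{q}{G}{}{}' = 2(\omega_\gamma-\omega_\alpha)\tf{q}{G}{}\) and \(\omega_\beta\beta\tf{p}{G}{}{}' = 2(\omega_\gamma-\omega_\beta)\tf{p}{G}{}\) (forced by the degrees), expresses \(\gamma^2(\gamma,\gamma)^2\) as a combination of terms in \(\alpha,\beta,\alpha_X,\alpha_Y,\beta_X,\beta_Y\) and second derivatives. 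The obstacle there is that the first-derivative cross-terms do not obviously collapse to the clean coefficient \(2(\omega_\gamma-1)^2\) without further manipulation using Corollary \ref{gamma2}, which is why the table-based verification is preferred.
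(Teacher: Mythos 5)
Your proposal matches the paper's own proof: the paper simply tabulates \((\gamma,\gamma)^2\) for each group (its Table~5) and substitutes into (\ref{gammagamma2}) together with the values of \(\tf{p}{G}{}\) and \(\tf{q}{G}{}\) from Tables~3 and~4, which is exactly your case-by-case verification, and your explicit dihedral computation correctly reproduces the tabulated value \(-\frac{1}{2}m^4(m-1)^2\alpha^{m-2}\). One small caveat that does not affect the argument: your parenthetical claim that the \(\Zn{2}\)-grading excludes an \(\alpha^{(m-4)/2}\beta\) term for even \(m\) is not right, since \(|\alpha|=|\beta|=0\) for \(\bbbd_m\) so that monomial also has grading \(0\); the term is absent simply because the direct computation yields a pure power of \(\alpha\).
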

\begin{proof}
Observe first that 
\begin{center}
\begin{table}[ht!]
\begin{center}
\begin{tabular}{|c|c|c|c|c|} \hline
 \(G\)& \(\bbbi\)  &  \(\bbbo\) & \(\bbbt\)  & \(\bbbd_m\) \\ 
\hline \hline 
\((\gamma,\gamma)^2\) & \(-1758430080000\,\alpha^3\,\beta\)  & \(334540800\,\alpha^2\,\beta\) & \(-14745600 \,i\,\sqrt{3}\,\alpha\,\beta\)  & \(-\frac{1}{2}m^4(m-1)^2\,\alpha^{m-2}\)\\ 
\hline
\end{tabular}
\end{center}
\caption{\((\gamma,\gamma)^2\) for the groups \(\bbbt, \bbbo, \bbbi\) and \(\bbbd_m\)}
\label{table_gammagamma2}
\end{table}
\end{center}
Substituting these relations and those of Tables \ref{table_pbeta} and \ref{table_qalpha} into (\ref{gammagamma2}) proves the lemma.
\end{proof}
The following Lemma will be used later in Theorem \ref{theo:big}.
\begin{Lemma}\label{lem:lem45}
\(2(\omega_\alpha-1)^2\tf{p}{G}{}(\beta)=-\omega_\beta\omega_\gamma(\alpha,\alpha)^2\).
\end{Lemma}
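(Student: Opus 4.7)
The strategy, parallel to that of Lemma \ref{lem:gamma2}, is to verify the identity group by group, using the tabulated data of Table \ref{table_pbeta} together with a direct computation of the second transvectant $(\alpha,\alpha)^2$ in each case. I would first rewrite the claim using $\beta\,\tf{p}{G}{}(\beta)=(\alpha,\gamma)^1$ in the equivalent form
\[
2(\omega_\alpha-1)^2\,(\alpha,\gamma)^1+\omega_\beta\,\omega_\gamma\,\beta\,(\alpha,\alpha)^2=0,
\]
which exhibits the same shape as equation (\ref{gammagamma2}) of the previous lemma (with $\alpha$ playing the role of $\gamma$ and with the $\tf{q}{G}{}(\alpha)$ factor absent, since here $(\alpha,\alpha)^2/\beta$ is a constant for the three polyhedral groups).

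For the polyhedral cases $G\in\{\bbbt,\bbbo,\bbbi\}$, the definition of the groundform $\beta=(\alpha,\alpha)^2$ (recalled in Section \ref{sec:trans}) means that $(\alpha,\alpha)^2$ is literally $\beta$, so the identity collapses to the scalar statement
\[
2(\omega_\alpha-1)^2\,\tf{p}{G}{}(\beta)=-\omega_\beta\,\omega_\gamma\,\beta,
\]
which is a direct numerical check using the degrees of Table \ref{table_covariants} and the values of $\tf{p}{G}{}(\beta)$ in Table \ref{table_pbeta}. For instance, for $\bbbi$ one gets $2\cdot 11^{2}\cdot\bigl(-\tfrac{300}{121}\bigr)\beta=-600\,\beta=-20\cdot 30\,\beta$; the cases $\bbbo$ and $\bbbt$ yield $-96\,\beta$ and $-24\,\beta$ on both sides respectively.

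For the dihedral group $\bbbd_m$, the groundform $\alpha=XY$ has $\alpha_{XX}=\alpha_{YY}=0$ and $\alpha_{XY}=1$, so the transvectant formula gives immediately $(\alpha,\alpha)^2=-2\alpha_{XY}^{2}=-2$. Substituting this together with $\omega_\alpha=2$, $\omega_\beta=\omega_\gamma=m$, and $\tf{p}{G}{}(\beta)=m^{2}$ from Table \ref{table_pbeta}, both sides of the claim evaluate to $2m^{2}$.

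The only real obstacle is the bookkeeping of normalizations: one must check that the $\beta$ appearing in Tables \ref{table_covariants} and \ref{table_pbeta} is indeed the same $\beta$ fixed by $\beta=(\alpha,\alpha)^{2}$, which it is by the conventions of Section \ref{sec:groups}. A conceptually cleaner proof would derive the identity at the $SL(2,\bbbc)$-level as a Gordan-type relation for $(\alpha,(\alpha,\beta)^1)^1$ in terms of $\beta(\alpha,\alpha)^2$ and $\alpha(\alpha,\beta)^2$ via direct expansion and Euler's identity, avoiding any appeal to tabulated data; however, since the paper already commits to the case-by-case style for Lemma \ref{lem:gamma2}, the most economical route here is the verification just described.
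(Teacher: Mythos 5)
Your proof is correct and follows exactly the paper's route: the paper's entire proof of Lemma \ref{lem:lem45} is ``By inspection of Table \ref{table_pbeta}'', i.e.\ the same case-by-case numerical verification you carry out (your arithmetic for $\bbbi$, $\bbbo$, $\bbbt$ and $\bbbd_m$, including $(\alpha,\alpha)^2=-2$ in the dihedral case, all checks out). You have simply written out the inspection that the paper leaves implicit.
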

\begin{proof}
By inspection of Table \ref{table_pbeta}.
\end{proof}
We are now ready to derive the commutation relations of the even part of the covariant algebra \(\htf{g}{G}{}\,\):
\begin{Theorem}\label{THEO:StuctureConstants}
The commutation relations on \(\bbbc[\alpha,\beta]\otimes(\tf{A}{\alpha}{1}\oplus\tf{A}{\beta}{1}\oplus \tf{A}{\gamma}{2})\) are
\begin{eqnarray}
{[\tf{A}{\alpha}{1},\tf{A}{\beta}{1}]}&=&2\omega_\beta\beta\tf{A}{\alpha}{1}-2\omega_\alpha\alpha\tf{A}{\beta}{1},\label{eq:commab}\\
{[\tf{A}{\alpha}{1},\tf{A}{\gamma}{2}]}&=&4(\omega_\gamma-1)\frac{\tf{p}{G}{}(\beta)}{\omega_\beta}\tf{A}{\beta}{1}+2\omega_\alpha\alpha\tf{A}{\gamma}{2}\label{eq:commac},\\
{[\tf{A}{\beta}{1},\tf{A}{\gamma}{2}]}&=&4(\omega_\gamma-1)\frac{\tf{q}{G}{}(\alpha)}{\omega_\alpha}\tf{A}{\alpha}{1}+2\omega_\beta\beta\tf{A}{\gamma}{2}\label{eq:commbc}.
\end{eqnarray}
\end{Theorem}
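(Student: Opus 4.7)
The plan is to verify each bracket by expanding both sides in the universal $\bbbc[X,Y]$-basis $\mf{A}, \gf{A}{Y}, \gf{A}{YY}$ and matching the three resulting coefficients. From Lemma~\ref{Lemma:lem1} one has
\[
\tf{A}{f}{1} = X^{-1}\bigl(2 f_Y\mf{A} - \omega_f f \gf{A}{Y}\bigr),\qquad \tf{A}{f}{2} = 2X^{-2}\bigl(f_{YY}\mf{A} - (\omega_f-1) f_Y \gf{A}{Y} + \binom{\omega_f}{2} f \gf{A}{YY}\bigr),
\]
while the preceding Lemma gives $[\mf{A}, \gf{A}{Y}] = -2X\mf{A}$, $[\mf{A}, \gf{A}{YY}] = -2X\gf{A}{Y}$ and $[\gf{A}{Y}, \gf{A}{YY}] = -2X\gf{A}{YY}$. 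Every commutator on the left therefore expands bilinearly into a $\bbbc[X,Y]$-combination of $\mf{A}, \gf{A}{Y}, \gf{A}{YY}$, and I expand the right-hand sides the same way and compare coefficients.

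The first relation is a one-line check: since neither $\tf{A}{\alpha}{1}$ nor $\tf{A}{\beta}{1}$ carries a $\gf{A}{YY}$ component, only $[\mf{A}, \gf{A}{Y}]$ contributes, and
\[
[\tf{A}{\alpha}{1}, \tf{A}{\beta}{1}] = 4X^{-1}\bigl(\omega_\beta\beta\,\alpha_Y - \omega_\alpha\alpha\,\beta_Y\bigr)\mf{A},
\]
which is exactly what one gets for $2\omega_\beta\beta\tf{A}{\alpha}{1} - 2\omega_\alpha\alpha\tf{A}{\beta}{1}$ once the $\gf{A}{Y}$ contributions on the right cancel identically.

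For the remaining two brackets the $\gf{A}{YY}$ coefficient again matches immediately (only $[\gf{A}{Y}, \gf{A}{YY}]$ on the left contributes, and it reproduces the $\omega_\alpha\alpha\gf{A}{YY}$, respectively $\omega_\beta\beta\gf{A}{YY}$, piece of $\tf{A}{\gamma}{2}$ on the right). For the $\gf{A}{Y}$-coefficient the key input is the Euler-type identity
\[
X(\alpha,\gamma)^1 = \omega_\gamma\,\alpha_Y\gamma - \omega_\alpha\,\alpha\gamma_Y,
\]
obtained by eliminating the $X$-derivatives in the Jacobian $(\alpha,\gamma)^1 = \alpha_Y\gamma_X - \alpha_X\gamma_Y$ via $X\partial_X+Y\partial_Y = \omega\cdot\mathrm{id}$. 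Combined with the defining relation $(\alpha,\gamma)^1 = \beta\tf{p}{G}{}(\beta)$ recorded in the earlier Lemma, this yields precisely the coefficient $4(\omega_\gamma-1)\tf{p}{G}{}(\beta)/\omega_\beta$ in front of $\tf{A}{\beta}{1}$ in \eqref{eq:commac}. Relation \eqref{eq:commbc} then follows by the symmetric identity $X(\beta,\gamma)^1 = \omega_\gamma\beta_Y\gamma - \omega_\beta\beta\gamma_Y$ together with $(\beta,\gamma)^1 = \alpha\tf{q}{G}{}(\alpha)$.

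The bookkeeping hazard is the $\mf{A}$-coefficient of $[\tf{A}{\alpha}{1}, \tf{A}{\gamma}{2}]$, which contains both $\alpha_Y\gamma_Y$ and $\alpha\gamma_{YY}$ and must be rewritten so that the $\omega_\alpha\alpha\gamma_{YY}$ piece matches $2\omega_\alpha\alpha\tf{A}{\gamma}{2}$ on the right while the residue is absorbed into $4(\omega_\gamma-1)\tf{p}{G}{}(\beta)\beta_Y/\omega_\beta$; this requires applying Euler once more, now to $\gamma_Y$ viewed as a form of degree $\omega_\gamma-1$. As an independent cross-check, the Clebsch--Gordan decomposition of Theorem~\ref{GCTHEO} together with Corollary~\ref{cor:subalgebra} forces the ansatz $[\tf{A}{\alpha}{1}, \tf{A}{\gamma}{2}] = F_1\tf{A}{\alpha}{1} + F_2\tf{A}{\beta}{1} + F_3\tf{A}{\gamma}{2}$ with $F_i\in\bbbc[\alpha,\beta]$, whose coefficients must satisfy the two linear relations $\langle[\tf{A}{\alpha}{1}, \tf{A}{\gamma}{2}], \tf{A}{\alpha}{1}\rangle = \langle[\tf{A}{\alpha}{1}, \tf{A}{\gamma}{2}], \tf{A}{\gamma}{2}\rangle = 0$ coming from ad-invariance of the trace-form; substituting the trace-form values from the preceding Lemma shows the right-hand side claimed in \eqref{eq:commac} passes both consistency tests.
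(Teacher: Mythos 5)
Your overall strategy (expand both sides over the basis $\mf{A},\gf{A}{Y},\gf{A}{YY}$ and match coefficients) is viable, and your treatment of \eqref{eq:commab}, of the $\gf{A}{YY}$-coefficients, and of the $\gf{A}{Y}$-coefficients via the Euler identity $X(\alpha,\gamma)^1=\omega_\gamma\alpha_Y\gamma-\omega_\alpha\alpha\gamma_Y$ is correct. But the step you yourself flag as the hazard is where the argument has a genuine hole. For the $\mf{A}$-coefficient of $[\tf{A}{\alpha}{1},\tf{A}{\gamma}{2}]$ the required equality is
\[
(\omega_\gamma-1)\alpha_Y\gamma_Y-\omega_\alpha\alpha\gamma_{YY}
=\frac{\omega_\gamma-1}{\omega_\beta}\,X\,\tf{p}{G}{}(\beta)\,\beta_Y .
\]
Applying Euler to $\gamma_Y$ (degree $\omega_\gamma-1$) only rewrites the left-hand side as $X(\alpha,\gamma_Y)^1=X(\alpha_Y\gamma_{XY}-\alpha_X\gamma_{YY})$, so you still need $(\alpha,\gamma_Y)^1=\frac{\omega_\gamma-1}{\omega_\beta}\tf{p}{G}{}(\beta)\beta_Y$. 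This is \emph{not} a consequence of Euler nor of the defining relation $(\alpha,\gamma)^1=\beta\tf{p}{G}{}(\beta)$: differentiating the latter in $Y$ gives only $(\alpha_Y,\gamma)^1+(\alpha,\gamma_Y)^1=(\beta\tf{p}{G}{}(\beta))'\beta_Y$, which does not separate the two partial transvectants (and $\alpha_Y,\gamma_Y$ are not covariants, so no invariance argument applies). The identity is true, but your proof never establishes it, and the same issue recurs for \eqref{eq:commbc}. Your trace-form cross-check cannot repair this: two linear conditions, $\langle[\tf{A}{\alpha}{1},\tf{A}{\gamma}{2}],\tf{A}{\alpha}{1}\rangle=\langle[\tf{A}{\alpha}{1},\tf{A}{\gamma}{2}],\tf{A}{\gamma}{2}\rangle=0$, underdetermine the three unknown coefficients $F_1,F_2,F_3$, so "passing both consistency tests" does not single out the claimed right-hand side.

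The fix is exactly what the paper does, and it makes the unproven differential identity unnecessary. Having established \eqref{eq:commab} directly and knowing from Corollary \ref{cor:subalgebra} that the bracket lies in $\bbbc[\alpha,\beta]\otimes(\tf{A}{\alpha}{1}\oplus\tf{A}{\beta}{1}\oplus\tf{A}{\gamma}{2})$, compute the full $3\times 3$ Gram matrix of the trace-form on $\tf{A}{\alpha}{1},\tf{A}{\beta}{1},\tf{A}{\gamma}{2}$; its determinant is $-32(\omega_\gamma-1)^2\omega_\gamma^2\gamma^4\neq 0$, so the three pairings determine the bracket uniquely. Then add the third pairing you omitted, $\langle[\tf{A}{\alpha}{1},\tf{A}{\gamma}{2}],\tf{A}{\beta}{1}\rangle=-\langle[\tf{A}{\alpha}{1},\tf{A}{\beta}{1}],\tf{A}{\gamma}{2}\rangle$, which is computable from the already-proved relation \eqref{eq:commab} and reduces to the $\gamma^2$-relation of Corollary \ref{gamma2}; the pairing with $\tf{A}{\gamma}{2}$ reduces to Lemma \ref{lem:gamma2}. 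With all three identities verified and nondegeneracy in hand, \eqref{eq:commac} and \eqref{eq:commbc} follow, and only the already-tabulated relations among $(\alpha,\gamma)^1$, $(\beta,\gamma)^1$ and $(\gamma,\gamma)^2$ are needed.
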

\begin{proof}
For the the first commutation relation (\ref{eq:commab}) one has
\begin{eqnarray*}
\lefteqn{[( \alpha,\mf{A})^1,(\beta,\mf{A})^1]=}&&
\\&=&X^{-2}[2\alpha_Y\mf{A}-\omega_\alpha\alpha\gf{A}{Y},2\beta_Y\mf{A}-\omega_\beta \beta\gf{A}{Y}]
\\&=&
X^{-2}(-2\omega_\beta\alpha_Y\beta[\mf{A},\gf{A}{Y}] -2\omega_\alpha\alpha \beta_Y[\gf{A}{Y},\mf{A}])
\\&=&-2X^{-1}(-2\omega_\beta\alpha_Y\beta+2\omega_\alpha\alpha \beta_Y)\mf{A}
\\&=&2\omega_\beta\, \beta\,\tf{A}{\alpha}{1}-2\omega_\alpha\,\alpha\,\tf{A}{\beta}{1}\,.
\end{eqnarray*}
It follows from this relation that
\begin{eqnarray}
{\langle [\tf{A}{\alpha}{1},\tf{A}{\beta}{1}],\tf{A}{\gamma}{2}\rangle}&=&-8\omega_\beta(\omega_\gamma-1)\beta(\alpha,\gamma)^1+8\omega_\alpha(\omega_\gamma-1)\alpha(\beta,\gamma)^1\nonumber 
\\&=&8\omega_\gamma(\omega_\gamma-1)\gamma^2.\label{eq:commaba}
\end{eqnarray}
Consider next the matrix of the trace-forms of the basis \(\tf{A}{\alpha}{1}, \tf{A}{\beta}{1}\) and 
\(\tf{A}{\gamma}{2}\):
\[
\left(\begin{array}{ccc}2 \omega_\alpha^2\alpha^2&2\omega_\alpha\omega_\beta \alpha\beta&-4(\omega_\gamma-1) (\alpha,\gamma)^1\\
2\omega_\alpha\omega_\beta \alpha\beta&2\omega_\beta^2\beta^2&-4(\omega_\gamma-1)(\beta,\gamma)^1\\
-4(\omega_\gamma-1)(\alpha,\gamma)^1 &-4(\omega_\gamma-1)(\beta,\gamma)^1 &4 (\gamma,\gamma)^2
\end{array}\right)\,;
\]
its determinant is \(-32(\omega_\gamma-1)^2\omega_\gamma^2\gamma^4\).
This implies that the trace-form is nondegenerate and that we can prove the commutation relations
by taking the trace-form with the basis elements. When this results in identities, we have a proof.

Taking the trace-form of (\ref{eq:commac}) with \(\tf{A}{\alpha}{1}\) we see that it trivialises (i.e. it is equal to zero):
\begin{eqnarray*}
0&=&4(\omega_\gamma-1)\frac{(\alpha,\gamma)^1}{\omega_\beta\beta}\langle \tf{A}{\alpha}{1},\tf{A}{\beta}{1}\rangle+2\omega_\alpha\alpha\langle\tf{A}{\alpha}{1},\tf{A}{\gamma}{2}\rangle
\\&=&8\omega_\alpha(\omega_\gamma-1)(\alpha,\gamma)^1\alpha-8\omega_\alpha(\omega_\gamma-1)\alpha(\alpha,\gamma)^1\,.
\end{eqnarray*}
Taking the trace-form of (\ref{eq:commac}) with \(\tf{A}{\beta}{1}\) results in the relation for \(\gamma^2\):
\begin{eqnarray*}
\langle\tf{A}{\beta}{1},{[\tf{A}{\alpha}{1},\tf{A}{\gamma}{2}]}\rangle&=&4(\omega_\gamma-1)\frac{(\alpha,\gamma)^1}{\omega_\beta\beta}\langle\tf{A}{\beta}{1},\tf{A}{\beta}{1}\rangle+2\omega_\alpha\alpha\langle\tf{A}{\beta}{1},\tf{A}{\gamma}{2}\rangle
\end{eqnarray*}
implies
\begin{eqnarray*}
8\omega_\gamma\,(\omega_\gamma-1)\,\gamma^2&=&8\omega_\beta(\omega_\gamma-1)(\alpha,\gamma)^1\beta-8(\omega_\gamma-1)\omega_\alpha\,\alpha\,(\beta,\gamma)^1.
\end{eqnarray*}
Taking the trace-form of (\ref{eq:commac}) with \(\tf{A}{\gamma}{2}\) results in 
\[
0= -2 (\omega_\gamma-1)^2(\alpha,\gamma)^1(\beta,\gamma)^1-\omega_\alpha\omega_\beta\,\alpha\,\beta\,(\gamma,\gamma)^2
\]
and this follows from Lemma \ref{lem:gamma2}.
The proof of (\ref{eq:commbc}) follows exactly the same pattern.
\end{proof}
\begin{Remark}
The commutation relations show that they only depend on  the relation among the groundforms
indicating that they are determined by the geometry of the curve given in Corollary \ref{gamma2}.
\end{Remark}
So far the covariant matrices are functions of \(X\) and \(Y\); in the following we define invariant homogeneous elements in the local coordinate \(\lambda=X/Y\) (or \(\lambda=Y/X\)).
%%%%%%%%%%%%%%%%%%% SECTION FIVE

\section{Towards Lax pairs: homogenisation}\label{sec:homo}
Suppose we are given an element \( M\in\htf{g}{G}{\chi}\), with \( M \) of degree \( k \) in \(X\) and \(Y\) and \(f\in \mb{C}[X,Y]_G^{\chi}\) of the same degree, we can now consider \(\frac{M}{f}\) as an element in \(\tf{g}{}{}\) with coefficients in the field of rational functions \(\bbbc(\lambda)\), where \(\lambda=X/Y\). 
%Let \(\ttf{g}{}{}=\bbbc(\lambda)\otimes\tf{g}{}{}\). 
%To map \(\htf{g}{G}{}\,\) to \((\bbbc(\lambda)\otimes\tf{g}{}{})^G\) we proceed as follows.
To map \(\htf{g}{G}{}\,\) to the zero-homogeneous part of \(\bbbc(\alpha,\beta)\otimes(\tf{A}{\alpha}{1}\oplus\tf{A}{\beta}{1}\oplus \tf{A}{\gamma}{2})\) we proceed as follows.

\subsection{The groups $\bbbt$, $\bbbo$, $\bbbi$}
To obtain an homogeneous element we first of all need to choose an automorphic function to start with. This is equivalent to fixing the poles of our \(G\)--invariant elements. In other words, this choice corresponds to the choice of an orbit \(\Gamma\) of the group \(G\) (see \cite{lm_cmp05}). Let us choose \(\alpha\) (the \(\beta\)-choice is treated in Appendix \ref{sec:betadivall}) and consider
\[
\ttf{A}{f}{j}= \frac{\beta^{m_2}}{\alpha^{m_1}}\tf{A}{f}{j}\,\quad m_1\in\bbbz_{>0},\quad m_2\in\bbbz_{\geq 0}\,,
\]
where \(\tf{A}{1}{0}\equiv  \tf{A}{}{}\).
Therefore, for each \(\tf{A}{f}{j}\),  we need to solve the diophantine relation
\[
2-2j+\omega_f+2 m_2  ( \omega_\alpha-2)=m_1 \, \omega_\alpha\,,\quad m_1\in\bbbz_{>0},\quad m_2\in\bbbz_{\geq 0}\,,
\]
where we denote  \(\omega_{f=1}=0\) and we recall that  \(\omega_\beta=2\omega_\alpha-4\), \(\omega_\gamma=3\omega_\alpha-6\). This allows us to define
homogeneous elements, now in  \(\ttf{g}{}{G}\).

%B JS 30-9-9
Suppose \(\omega_\alpha=0\mod 4\), as is the case for \(G=\bbbt, \bbbi\). Then for \(f=1,\,\alpha,\, \beta\) we have
\[
2|\tf{A}{f}{j}|=2-2j=0\mod 4\,,\quad m_1\in\bbbz_{>0},\quad m_2\in\bbbz_{\geq 0}\,,
\]
This implies \(|\tf{A}{f}{j}|=0\), excluding 
\(\tf{A}{}{}\), \(\tf{A}{\alpha}{2}\) and
\(\tf{A}{\beta}{2}\).

%E JS 30-9-9
To these equations we add the invariance requirement.
On the covariants acts the abelianized group \(\mathcal{A}G=G/[G,G]\).
Since the icosahedral group is perfect,  \(\mathcal{A}\bbbi\) is trivial and so is its action.
Suppose \(\alpha\) goes to \(\chi\,\alpha\), then \(\beta\) goes to \(\chi^2\beta\) and \(\gamma\) to \(\chi^3\gamma\).
In the same manner \(\tf{A}{\alpha}{i}\) goes to \(\chi\,\tf{A}{\alpha}{i}\), \(i=1,2\),
\(\tf{A}{\beta}{i}\) goes to \(\chi^2\tf{A}{\beta}{i}\), \(i=1,2\) and \(\tf{A}{\gamma}{2}\) to \(\chi^3\tf{A}{\gamma}{2}\).
Furthermore \(\chi\) should be compatible with the relation 
\[ 
\omega_\gamma\gamma^2=-\omega_\alpha\alpha^2\tf{q}{}{}(\alpha)+\omega_\beta\beta^2\tf{p}{}{}(\alpha)\,,
\]
which implies \(\chi^4=\chi^{\deg(\tf{q}{G}{})}\).
It follows that \(\chi^3=1\) when \(G=\bbbt\), \( \chi^2=1\) 
when \(G=\bbbo\) and \(\chi=1\) when \(G=\bbbi\).
And this in turn implies that 
\begin{eqnarray*}
\bbbi:& \alpha\mapsto\alpha ,\quad\beta\mapsto\beta ,\quad\gamma\mapsto\gamma,\,\quad&\mathcal{A}\bbbi=1;\\
\bbbo:& \alpha\mapsto\chi\alpha ,\quad\beta\mapsto\beta ,\quad\gamma\mapsto\chi\gamma,\,\quad&\mathcal{A}\bbbo=\bbbz/2;\\
\bbbt:& \alpha\mapsto\chi\alpha ,\quad \beta\mapsto\chi^2\beta ,\quad\gamma\mapsto\gamma, \quad&\mathcal{A}\bbbt=\bbbz/3\,.\\
\end{eqnarray*}
This leads to
\begin{eqnarray*}
\bbbt&:& 2m_2 +\delta_f =m_1 \mod 3, \quad \delta_\alpha=1,\,\, \delta_\beta=2,\,\, \delta_\gamma=0\,;\\
\bbbo&:& \delta_f = m_1 \mod 2, \quad \delta_\alpha=1, \,\,\delta_\beta=0,\,\,\delta_\gamma=1\,.
\end{eqnarray*}
% B J 30-9-9
We now restrict to \(G=\bbbo\):
\begin{eqnarray}
\bbbo&:&1-j+\frac{\omega_f}{2}+4 m_2  =3 m_1  ,\quad m_1\in\bbbz_{>0},\quad m_2\in\bbbz_{\geq 0}\,,\\
\bbbo&:& \delta_f = m_1 \mod 2, \quad \delta_\alpha=1, \,\delta_\beta=0,\,\delta_\gamma=1\,.
\end{eqnarray}
Take \(f=\alpha\) and \(j=2\). Then \(m_1\) is even by the first equation and odd by the second.
This rules out \(\tf{A}{\alpha}{2}\).
With \(f=\beta\) and \(j=2\), \(m_1\) is odd by the first equation and even by the second. This rules out
\(\tf{A}{\beta}{2}\).
In the case \(j=0\) one has \(m_1\) is odd by the first and even by the second equation.
This rules out \(\tf{A}{}{}\).
\begin{Remark}
If one would have divided by \(\beta\) instead of \(\alpha\), the analysis leading to the excluded
elements remains the same.
\end{Remark}
\begin{Corollary}
The matrices \(\tf{A}{f}{j}\) such that \(|\tf{A}{f}{j}|=1\) cannot be homogenized,
neither by division by \(\alpha\) nor \(\beta\) powers.
\end{Corollary}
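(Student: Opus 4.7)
The plan is to combine Corollary~\ref{cor:subalgebra} with the degree/character bookkeeping set up in the preceding subsection: first reduce to the three basis generators of the odd-graded piece, then dispose of $\bbbi$ and $\bbbt$ by a modular argument, and finally handle $\bbbo$ by a case-by-case parity analysis combining the diophantine with the abelianised character.

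By Corollary~\ref{cor:subalgebra}, the $|\cdot|=1$ component of the covariant algebra equals $\bbbc[\alpha,\beta]\otimes(\mf{A}\oplus\tf{A}{\alpha}{2}\oplus\tf{A}{\beta}{2})$, so any $\tf{A}{f}{j}$ with $|\tf{A}{f}{j}|=1$ is a $\bbbc[\alpha,\beta]$-linear combination of these three generators. The monomial factors $\alpha^{a}\beta^{b}$ appearing in such a combination merely shift the exponents of $\beta^{m_2}/\alpha^{m_1}$ (or $\alpha^{m_2}/\beta^{m_1}$) in the homogenisation ansatz, so it suffices to rule out homogenisation of each of $\mf{A}$, $\tf{A}{\alpha}{2}$, $\tf{A}{\beta}{2}$ individually.

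For $G=\bbbi$ and $G=\bbbt$, Table~\ref{table_covariants} gives $\omega_\alpha,\omega_\beta\in 4\bbbz$, whereas the three generator degrees $\omega_{\mf{A}}=2$, $\omega_{\tf{A}{\alpha}{2}}=\omega_\alpha-2$ and $\omega_{\tf{A}{\beta}{2}}=\omega_\beta-2$ are all congruent to $2\pmod 4$. Both homogenisation equations
\[
m_2\omega_\beta-m_1\omega_\alpha+\omega_X=0,\qquad m_2\omega_\alpha-m_1\omega_\beta+\omega_X=0
\]
then reduce mod $4$ to $2\equiv 0$, an immediate contradiction. This kills alpha- and beta-division simultaneously for these two groups.

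The remaining case $G=\bbbo$, with $(\omega_\alpha,\omega_\beta)=(6,8)$, admits no such modular shortcut and is the main obstacle: one must use the nontrivial abelianisation $\mathcal{A}\bbbo=\bbbz/2$. The alpha-division half is already carried out in the text preceding the corollary. For beta-division I would solve $6m_2+\omega_X=8m_1$ separately for the three generators, finding $m_2\equiv 1\pmod 4$ for $X=\mf{A}$, $m_2\equiv 2\pmod 4$ for $X=\tf{A}{\alpha}{2}$ and $m_2\equiv 3\pmod 4$ for $X=\tf{A}{\beta}{2}$. The invariance constraint $m_2+\delta_f\equiv 0\pmod 2$ (with $\delta_\alpha=1$, $\delta_\beta=0$) demands $m_2$ even, odd, even respectively, i.e.\ the opposite parity in every sub-case, closing the argument. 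The $\bbbo$ beta-division analysis is elementary once the reduction to bare generators and the character bookkeeping are in hand, but it is the only place where a single global congruence will not do.
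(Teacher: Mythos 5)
Your proof is correct and follows essentially the same route as the paper: a mod-$4$ degree obstruction disposes of $\bbbt$ and $\bbbi$, while for $\bbbo$ one plays the parity forced by the degree equation against the parity forced by the $\mathcal{A}\bbbo=\bbbz/2$ invariance condition. The only substantive addition is that you write out the $\beta$-divisor computation for $\bbbo$ explicitly (correctly), where the paper dispatches it with the remark that the analysis is unchanged.
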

%E J 30-9-9
\subsection{The group $\bbbd_n$}

As before, we consider expressions \(\ttf{A}{f}{i}\) that are zero homogeneous in \(X,Y\)
and invariant under the \(\bbbd_n\) group action.
Let
\[
\ttf{A}{f}{j}=\frac{\beta^{m_2}\tf{A}{f}{j}}{\alpha^{m_1}}.
\]
Then the homogeneity equation is
\[
m\,m_2  + 2 - 2j +\omega_f = 2 m_1\,\quad m_2\geq 0,\quad m_1>0\,,
\]
and the invariance equation is
\[
m\, m_2 + \delta_f = 2 m_1 \mod 4
\]
where \(\delta_f=\omega_f\) for \(f=1,\,\alpha,\, \beta\), and \(\delta_\gamma=\omega_\beta+\omega_\alpha\). Recall that \(m=n\) if \(n\) is odd while \(m=2\,n\) if  \(n\) is even.

It follows that 
\[
2 - 2j +\omega_f = \delta_f \mod 4\,.
\]
Since for \(f=1,\,\alpha,\,\beta\), \(\,\delta_f=\omega_f\), this rules out \(\tf{A}{}{}\), \(\tf{A}{\alpha}{2}\) and \(\tf{A}{\beta}{2}\).

%%%%%%%%%%%%%%%%%%%%%%%% SECTION SIX
\subsection{The homogeneous basis}\label{sec:homobasis}
Having defined the homogeneous elements \(\ttf{A}{f}{i}\) in the previous Section we can now define a basis over the ring \(\bbbc(\ttf{I}{G}{})\), where \(\ttf{I}{G}{}\) is defined below:
\begin{Theorem}\label{theo:HomBasis}
The homogeneous basis of \(\ttf{\sl}{}{G}\), that is the \(G\)-Automorphic Lie Algebra based on \(\sl\) with poles in the zeros of \(\alpha\),
is given by
\begin{eqnarray}\label{homogeneous1}
\ttf{A}{\alpha}{1}&=&\frac{1}{\omega_\alpha\,\alpha}\tf{A}{\alpha}{1}, \\ \label{homogeneous2}
\ttf{A}{\beta}{1}&=&\frac{\ttf{I}{G}{}}{\omega_\beta\,\beta}\tf{A}{\beta}{1},\\ \label{homogeneous3}
\ttf{A}{\gamma}{2}&=&\frac{\omega_\alpha\alpha\,\ttf{I}{G}{}}{4(\omega_\gamma -1)\beta\,\tf{p}{G}{}(\beta)}\tf{A}{\gamma}{2},
\end{eqnarray}
where \(\ttf{I}{G}{}=\frac{\omega_\beta}{\omega_\alpha}\frac{\beta^2\tf{p}{G}{}(\beta)}{\alpha^2\tf{q}{G}{}(\alpha)} \) for
\(G=\bbbi, \bbbo, \bbbt\) and \(\bbbd_{n}\), \(n\) odd, and \(\ttf{I}{G}{2}=\frac{\omega_\beta}{\omega_\alpha}\frac{\beta^2\tf{p}{G}{}(\beta)}{\alpha^2\tf{q}{G}{}(\alpha)} \)
for \(G=\bbbd_{n}\), \(n\) even.
In other words, \(\frac{\omega_\beta}{\omega_\alpha}\frac{\beta^2\tf{p}{G}{}(\beta)}{\alpha^2\tf{q}{G}{}(\alpha)} = \ttf{I}{G}{1+\tf{r}{G}{}}\) where \(\tf{r}{G}{}=0\) for \(G=\bbbi, \bbbo, \bbbt,\bbbd_n\), \(n\) odd, and \(1\) for  \(\bbbd_{n}\), \(n\) even.
If we, moreover, define \(\ttf{J}{G}{}=\frac{\omega_\gamma\gamma^2}{\omega_\alpha\,\alpha^2\tf{q}{G}{}(\alpha)}\)
then it follows immediately that
\[
\ttf{I}{G}{1+\tf{r}{G}{}}=1+\ttf{J}{G}{} \,.
\]
\end{Theorem}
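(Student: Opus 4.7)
The plan is to prove this theorem in three stages: verify that the three proposed expressions are well-defined $0$-homogeneous invariant elements, invoke the Clebsch--Gordan/Stanley machinery already set up to conclude they form a basis, and derive the closing identity from Corollary \ref{gamma2}.

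First I would check that $\ttf{A}{\alpha}{1}, \ttf{A}{\beta}{1}, \ttf{A}{\gamma}{2}$ are genuine $0$-homogeneous $G$-invariant elements of $\bbbc(\lambda)\otimes \sl$. For each group in the list, one computes the $X,Y$-degree of the numerator and denominator using $\omega_{\tf{A}{f}{j}}=\omega_f+2-2j$ together with the degree tables for $\alpha,\beta,\gamma, \tf{p}{G}{},\tf{q}{G}{}$ recorded in Tables \ref{table_pbeta}--\ref{table_qalpha}. In the $\bbbi,\bbbo,\bbbt$ rows the relation $\omega_\gamma=3\omega_\alpha-6$, $\omega_\beta=2\omega_\alpha-4$ makes the counts immediate; in the $\bbbd_n$ row the separate relation $\omega_\beta=m$, $\omega_\gamma=m$ with $\deg(\tf{q}{G}{})=2m-4$ and $\deg(\tf{p}{G}{})=0$ yields the same conclusion. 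Invariance is then checked using the character analysis from Section \ref{sec:homo}: the abelianized action sends $\alpha\mapsto \chi\alpha$, $\beta\mapsto\chi^{k_\beta}\beta$, $\gamma\mapsto \chi^{k_\gamma}\gamma$, and the prefactors are arranged so that each character cancels, with the compatibility $\chi^4=\chi^{\deg\tf{q}{G}{}}$ playing the key role (and selecting exactly the elements not ruled out by the parity arguments already carried out for $\bbbo$ and $\bbbd_n$).

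Next, for the basis claim, I would invoke Corollary \ref{cor:subalgebra}: the even part of the covariant Lie algebra equals $\bbbc[\alpha,\beta]\otimes(\tf{A}{\alpha}{1}\oplus \tf{A}{\beta}{1}\oplus \tf{A}{\gamma}{2})$ and is itself a Lie subalgebra. Since $\htf{\sl}{G}{}$ splits as a free rank-$3$ module over $\bbbc[\alpha,\beta]$ on these three generators, passing to the field of $0$-homogeneous $G$-invariant rational functions and applying the homogenizing prefactors specified in \eqref{homogeneous1}--\eqref{homogeneous3} exhibits a rank-$3$ basis for $\ttf{\sl}{}{G}$. Linear independence over the coefficient field is inherited from the corresponding statement at the covariant level (the determinant of the trace-form matrix computed in the proof of Theorem \ref{THEO:StuctureConstants} is a nonzero multiple of $\gamma^4$, which descends to a nonzero invariant rational function), and the normalisations have been chosen so that the structure constants recorded in Theorem \ref{THEO:StuctureConstants} will dehomogenize to the form used in later sections.

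Finally, the identity $\ttf{I}{G}{1+\tf{r}{G}{}}=1+\ttf{J}{G}{}$ is essentially a restatement of Corollary \ref{gamma2}: starting from
\[
\omega_\gamma\gamma^2=-\omega_\alpha\alpha^2\tf{q}{G}{}(\alpha)+\omega_\beta\beta^2\tf{p}{G}{}(\beta)
\]
and dividing through by $\omega_\alpha\alpha^2\tf{q}{G}{}(\alpha)$ produces $\ttf{J}{G}{}=-1+\frac{\omega_\beta\beta^2\tf{p}{G}{}(\beta)}{\omega_\alpha\alpha^2\tf{q}{G}{}(\alpha)}$, and the right-hand side equals $\ttf{I}{G}{}$ (for $\bbbi,\bbbo,\bbbt,\bbbd_n$ with $n$ odd) respectively $\ttf{I}{G}{2}$ (for $\bbbd_n$ with $n$ even) by definition. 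The parameter $\tf{r}{G}{}\in\{0,1\}$ absorbs this dichotomy.

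The main obstacle I anticipate is not conceptual but bookkeeping: verifying case by case that the diophantine/parity conditions of Section \ref{sec:homo} are solved \emph{minimally} by the specific exponents in the prefactors, and in particular justifying why for $\bbbd_n$ with $n$ even one must pass to a square root, so that only $\ttf{I}{G}{2}$ (and not $\ttf{I}{G}{}$ itself) arises as a ratio of invariants. Making the $\bbbd_n$-even case fit the same uniform statement as the other groups is the subtle point, and is the reason the exponent $1+\tf{r}{G}{}$ appears.
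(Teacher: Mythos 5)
Your proposal is correct and follows essentially the same route as the paper: case-by-case verification of the homogeneity (degree) and invariance conditions for the three prefactored elements, followed by the derivation of $\ttf{I}{G}{1+\tf{r}{G}{}}=1+\ttf{J}{G}{}$ by dividing the relation of Corollary \ref{gamma2} by $\omega_\alpha\alpha^2\tf{q}{G}{}(\alpha)$. The only difference is that you make explicit the basis/linear-independence argument via Corollary \ref{cor:subalgebra} and the nondegeneracy of the trace-form, which the paper leaves implicit in its appeal to the earlier Clebsch--Gordan decomposition and the exclusion arguments of Section \ref{sec:homo}.
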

\begin{proof}
The last identity follows from the definition of \(\ttf{I}{G}{}\) and \(\ttf{J}{G}{}\) and from Corollary \ref{gamma2} :
\[
\ttf{I}{G}{1+\tf{r}{G}{}}-\ttf{J}{G}{} = \frac{\omega_\beta\,\beta^2\tf{p}{G}{}(\beta)}{\omega_\alpha\,\alpha^2\tf{q}{G}{}(\alpha)}-\frac{\omega_\gamma\gamma^2}{\omega_\alpha\,\alpha^2\tf{q}{G}{}(\alpha)}=\frac{\omega_\beta\,\beta^2\tf{p}{G}{}(\beta)}{\omega_\alpha\,\alpha^2\tf{q}{G}{}(\alpha)}+1-\frac{\omega_\beta\,\beta^2\tf{p}{G}{}(\beta)}{\omega_\alpha\,\alpha^2\tf{q}{G}{}(\alpha)}=1\,.
\]
That \(\ttf{A}{\alpha}{1}\) and \(\ttf{A}{\beta}{1}\) are well defined is clear from their definition
once it is shown that \(\ttf{I}{G}{}\) is indeed invariant (see Appendix \ref{sec:alphadiv}).
We see that the degree of \(\ttf{A}{\gamma}{2}\) equals
\( \omega_\beta -\omega_\alpha  -\deg(\tf{q}{G}{}) \)
for \(G=\bbbi, \bbbo, \bbbt\) and \(\bbbd_{n}\), \(n\) odd.
By inspection we see that this indeed equals zero.
For \(G=\bbbd_n\), one has \(\ttf{I}{\bbbd_n}{}=\frac{\beta}{\alpha^n}\) and we see that indeed
the degree of \(\alpha^{n-1}\) is equal to the degree of \(\gamma-2\), that is, \(2n-2\).

The invariance equations for \(G=\bbbt, \bbbo\) are
\begin{eqnarray*}
\bbbt&:& 2m_2  =m_1 \mod 3\,, \\
\bbbo&:& 1 = m_1 \mod 2\,,
\end{eqnarray*}
and one easily verifies from Table \ref{table_hom_ele} that they are satisfied.
For \(\bbbd_n\) they are
\[
m\, m_2 + m+2 = 2 m_1 \mod 4\,.
\]
One verifies that the chosen scaling indeed satisfies these equations.
\end{proof}
%\begin{Corollary}
%One has 
%\begin{eqnarray*}
%\langle \ttf{A}{\alpha}{1},\ttf{A}{\alpha}{1}\rangle&=&2,\\
%\langle \ttf{A}{\alpha}{1},\ttf{A}{\beta}{1}\rangle&=&2\ttf{I}{G}{},\\
%\langle \ttf{A}{\alpha}{1}, \ttf{A}{\gamma}{2}\rangle&=&-\ttf{I}{G}{},\\
%\langle \ttf{A}{\beta}{1},\ttf{A}{\beta}{1}\rangle&=&2\ttf{I}{G}{2},\\
%\langle \ttf{A}{\beta}{1}, \ttf{A}{\gamma}{2}\rangle&=&-\ttf{I}{G}{1-\tf{r}{G}{}},\\
%\langle \ttf{A}{\gamma}{2}, \ttf{A}{\gamma}{2}\rangle&=&\frac{1}{2}\ttf{I}{G}{1-\tf{r}{G}{}}
%\end{eqnarray*}
%and the matrix is
%\[
%\left(\begin{array}{ccc}
%2&2\ttf{I}{G}{}&-\ttf{I}{G}{}\\
%2\ttf{I}{G}{}&2\ttf{I}{G}{2}&-\ttf{I}{G}{1-\tf{r}{G}{}}\\
%-\ttf{I}{G}{}&-\ttf{I}{G}{1-\tf{r}{G}{}}&\frac{1}{2}\ttf{I}{G}{1-\tf{r}{G}{}}
%\end{array}\right)
%\]
%\end{Corollary}
\begin{table}[ht!]
\begin{center}
\begin{tabular}{|c|c|c|c|} \hline
$G$  & \( \bbbi\) &  \( \bbbo\)  & \( \bbbt\) \\  
\hline \hline 
\(\ttf{I}{G}{}\) & \(\frac{1}{24490059264}\frac{\beta^3}{\alpha^5}\) & \(-\frac{1}{13500000}\frac{\beta^3}{\alpha^4}\)  & \(\frac{1}{2654208\,i\,\sqrt{3}}\frac{\beta^3}{\alpha^3}\) \\ [1ex]
%& \(\frac{\beta}{\alpha^{d}}\) & \(\frac{\beta^2}{\alpha^{2d+1}}\) \\ 
\hline
\(\ttf{J}{G}{}\) & \(-\frac{1}{40479436800}\frac{\gamma^2}{\alpha^5}\) & \(\frac{1}{17280000}\frac{\gamma^2}{\alpha^4}\) & \(-\frac{1}{2359296\,i\,\sqrt{3}}\frac{\gamma^2}{\alpha^3}\) \\ [1ex]
%& \(\frac{1}{2d}\frac{\gamma}{\alpha^d}\) & \(\frac{1}{(2d+1)^2}\frac{\gamma^2}{\alpha^{2d+1}}\) \\ 
\hline \hline
\( \ttf{A}{\alpha}{1}\) & \(\frac{1}{12\,\alpha}\tf{A}{\alpha}{1}\) & \(\frac{1}{6\,\alpha}\tf{A}{\alpha}{1}\) & \(\frac{1}{4\,\alpha}\tf{A}{\alpha}{1}\) \\ [1ex]
%& \(\frac{1}{2\,\alpha}\tf{A}{\alpha}{1}\) & \(\frac{1}{2\,\alpha}\tf{A}{\alpha}{1}\)\\ 
\hline 
\(\ttf{A}{\beta}{1}\) & \(\frac{1}{489801185280}\frac{\beta^2}{\alpha^5}\tf{A}{\beta}{1}\) &  \(-\frac{1}{108000000}\frac{\beta^2}{\alpha^4}\tf{A}{\beta}{1}\) & \(\frac{1}{10616832\,i\,\sqrt{3}}\frac{\beta^2}{\alpha^3} \tf{A}{\beta}{1}\) \\ [1ex]
%& \(\frac{1}{2d}\frac{1}{\alpha^{d}}\tf{A}{\beta}{1}\) & \(\frac{1}{2d+1}\frac{\beta}{\alpha^{2d+1}}\tf{A}{\beta}{1}\)\\ 
\hline 
\(\ttf{A}{\gamma}{2}\) & \(-\frac{1}{586951833600}\frac{\beta}{\alpha^4}\tf{A}{\gamma}{2}\) & \(\frac{1}{190080000}\frac{\beta}{\alpha^3}\tf{A}{\gamma}{2}\) & \(-\frac{1}{17694720\,i\,\sqrt{3} }\frac{\beta}{\alpha^2}\tf{A}{\gamma}{2}\) \\ [1ex]
%& \(\frac{1}{2(2d)^2(2d-1)}\frac{1}{\alpha^{d-1}}\tf{A}{\gamma}{2}\) & \(\frac{1}{2(2d)(2d+1)^2}\frac{\beta}{\alpha^{2d}}\tf{A}{\gamma}{2}\)\\
\hline
\end{tabular}
\end{center}
\caption{\(\alpha\) divisor: homogeneous elements of \(\bbbi, \bbbo, \bbbt\)}
\label{table_hom_ele}
\end{table}

\begin{center}
\begin{table}[ht!]
\begin{center}
\begin{tabular}{|c|c|c|} \hline
$G$& \(\bbbd_n\), \(n\) even  & \( \bbbd_{n}\), \(n\) odd\\ 
\hline \hline 
\(\ttf{I}{G}{}\) & \(\frac{\beta}{\alpha^{n}}\) & \(\frac{\beta^2}{\alpha^{n}}\) \\ [0.5ex]
\hline
\(\ttf{J}{G}{}\) &  \(\frac{1}{(2n)^2}\frac{\gamma^2}{\alpha^{2n}}\) & \(\frac{1}{n^2}\frac{\gamma^2}{\alpha^{n}}\) \\ [0.5ex]
\hline \hline
\( \ttf{A}{\alpha}{1}\) & \(\frac{1}{2\,\alpha}\tf{A}{\alpha}{1}\) & \(\frac{1}{2\,\alpha}\tf{A}{\alpha}{1}\)\\ [0.5ex]
\hline 
\(\ttf{A}{\beta}{1}\)  & \(\frac{1}{2n}\frac{1}{\alpha^{n}}\tf{A}{\beta}{1}\) & \(\frac{1}{n}\frac{\beta}{\alpha^{n}}\tf{A}{\beta}{1}\)\\ [0.5ex]
\hline 
\(\ttf{A}{\gamma}{2}\) &  \(\frac{1}{2(2n)^2(2n-1)}\frac{1}{\alpha^{n-1}}\tf{A}{\gamma}{2}\) & \(\frac{1}{2n^2(n-1)}\frac{\beta}{\alpha^{n-1}}\tf{A}{\gamma}{2}\)\\ [0.5ex]
\hline
\end{tabular}
\end{center}
\caption{\(\alpha\) divisor: homogeneous elements of \(\bbbd_n\), for \(n\) even and odd}
\label{DNtable_homalpha}
\end{table}
\end{center}

\begin{Corollary}
If one computes the trace-form among the basis elements one finds 
\begin{eqnarray*}
\begin{array}{ccc}\langle \ttf{A}{\alpha}{1},\ttf{A}{\alpha}{1}\rangle=2\,,&
\langle \ttf{A}{\alpha}{1},\ttf{A}{\beta}{1}\rangle=2\ttf{I}{G}{}\,,&
\langle \ttf{A}{\alpha}{1}, \ttf{A}{\gamma}{2}\rangle=-\ttf{I}{G}{}\,,
\\
\langle \ttf{A}{\beta}{1},\ttf{A}{\beta}{1}\rangle=2\ttf{I}{G}{2}\,,&
\langle \ttf{A}{\beta}{1}, \ttf{A}{\gamma}{2}\rangle=-\ttf{I}{G}{1-\tf{r}{G}{}}\,,&
\langle \ttf{A}{\gamma}{2}, \ttf{A}{\gamma}{2}\rangle=\frac{1}{2}\ttf{I}{G}{1-\tf{r}{G}{}}\,.
\end{array}
\end{eqnarray*}
%and the matrix is
%\[
%\left(\begin{array}{ccc}
%2&2\ttf{I}{G}{}&-\ttf{I}{G}{}\\
%2\ttf{I}{G}{}&2\ttf{I}{G}{2}&-\ttf{I}{G}{1-\tf{r}{G}{}}\\
%-\ttf{I}{G}{}&-\ttf{I}{G}{1-\tf{r}{G}{}}&\frac{1}{2}\ttf{I}{G}{1-\tf{r}{G}{}}
%\end{array}\right)
%\]
\end{Corollary}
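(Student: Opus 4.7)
The plan is to compute each of the six trace-forms by direct substitution of the homogeneous scalings from Theorem \ref{theo:HomBasis} into the pre-homogenisation formulas
\(\langle \tf{A}{f}{1},\tf{A}{g}{1}\rangle = 2\omega_f\omega_g f g\),
\(\langle \tf{A}{f}{1},\tf{A}{g}{2}\rangle = -4(\omega_g-1)(f,g)^1\),
\(\langle \tf{A}{f}{2},\tf{A}{g}{2}\rangle = -4(f,g)^2\)
established in the trace-form lemma, and then to simplify the resulting expressions using the three structural identities \((\alpha,\gamma)^1=\beta\,\tf{p}{G}{}(\beta)\), \((\beta,\gamma)^1=\alpha\,\tf{q}{G}{}(\alpha)\), and
\[
\ttf{I}{G}{1+\tf{r}{G}{}}=\frac{\omega_\beta}{\omega_\alpha}\frac{\beta^2\tf{p}{G}{}(\beta)}{\alpha^2\tf{q}{G}{}(\alpha)}.
\]

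First I would handle the three entries involving only \(\ttf{A}{\alpha}{1}\) and \(\ttf{A}{\beta}{1}\). In each case the prefactors from the normalisations \(\ttf{A}{\alpha}{1}=\frac{1}{\omega_\alpha\alpha}\tf{A}{\alpha}{1}\) and \(\ttf{A}{\beta}{1}=\frac{\ttf{I}{G}{}}{\omega_\beta\beta}\tf{A}{\beta}{1}\) cancel exactly against the factor \(2\omega_f\omega_g f g\), giving the desired \(2\), \(2\ttf{I}{G}{}\), and \(2\ttf{I}{G}{2}\). These are pure bookkeeping.

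For the two mixed trace-forms with \(\ttf{A}{\gamma}{2}\) I would apply \(\langle \tf{A}{f}{1},\tf{A}{g}{2}\rangle=-4(\omega_g-1)(f,g)^1\). The scale of \(\ttf{A}{\gamma}{2}\) contains \(4(\omega_\gamma-1)\beta\tf{p}{G}{}(\beta)\) in the denominator, so for the pair \((\alpha,\gamma)\) the transvectant \((\alpha,\gamma)^1=\beta\,\tf{p}{G}{}(\beta)\) collapses the whole expression to \(-\ttf{I}{G}{}\). For the pair \((\beta,\gamma)\) one obtains \(-\frac{\omega_\alpha\alpha^2\tf{q}{G}{}(\alpha)}{\omega_\beta\beta^2\tf{p}{G}{}(\beta)}\,\ttf{I}{G}{2}\), which, by the displayed identity for \(\ttf{I}{G}{1+\tf{r}{G}{}}\), reduces to \(-\ttf{I}{G}{2-(1+\tf{r}{G}{})}=-\ttf{I}{G}{1-\tf{r}{G}{}}\).

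The only step requiring a non-trivial input is the diagonal entry \(\langle \ttf{A}{\gamma}{2},\ttf{A}{\gamma}{2}\rangle=-4(\gamma,\gamma)^2\) times the square of the scaling. Here I would substitute
\[
(\gamma,\gamma)^2=-\frac{2(\omega_\gamma-1)^2\,\tf{p}{G}{}(\beta)\tf{q}{G}{}(\alpha)}{\omega_\alpha\omega_\beta}
\]
provided by Lemma \ref{lem:gamma2}; after this substitution, the factors \((\omega_\gamma-1)^2\) and \(\tf{p}{G}{}(\beta)\) cancel against the squared denominator of \(\ttf{A}{\gamma}{2}\), producing \(\tfrac{1}{2}\cdot\frac{\omega_\alpha\alpha^2\tf{q}{G}{}(\alpha)}{\omega_\beta\beta^2\tf{p}{G}{}(\beta)}\,\ttf{I}{G}{2}=\tfrac{1}{2}\ttf{I}{G}{1-\tf{r}{G}{}}\) by the same identity used above. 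The main (indeed only) obstacle is keeping track of the interplay between the \(\tf{r}{G}{}\)-dependent power of \(\ttf{I}{G}{}\) and the $\tf{p}{G}{},\tf{q}{G}{}$ ratio; once the identity for \(\ttf{I}{G}{1+\tf{r}{G}{}}\) is invoked, the case analysis for \(\bbbi,\bbbo,\bbbt,\bbbd_n\) becomes uniform and the proof terminates.
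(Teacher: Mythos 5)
Your computation is correct and is exactly the argument the paper leaves implicit for this Corollary: substitute the normalisations of Theorem \ref{theo:HomBasis} into the trace-form lemma, then use $(\alpha,\gamma)^1=\beta\,\tf{p}{G}{}(\beta)$, $(\beta,\gamma)^1=\alpha\,\tf{q}{G}{}(\alpha)$, Lemma \ref{lem:gamma2}, and the identity $\ttf{I}{G}{1+\tf{r}{G}{}}=\frac{\omega_\beta}{\omega_\alpha}\frac{\beta^2\tf{p}{G}{}(\beta)}{\alpha^2\tf{q}{G}{}(\alpha)}$. All six entries check out, including the correct handling of the $\tf{r}{G}{}$-dependent exponents.
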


\begin{Remark}
It is clear from Tables \ref{table_hom_ele} and \ref{DNtable_homalpha}  that all homogeneous elements defined in (\ref{homogeneous1})--(\ref{homogeneous3}) have poles at the zeros of \(\alpha\) only. 
%Similarly, those defined in (\ref{homogeneous1b})--(\ref{homogeneous3b}) have poles at the zeros of \(\beta\) only
\end{Remark}
\begin{Remark}
The projective representation \(\sigma\) induces a linear representation on \(\bbbc(\lambda)\) and therefore a linear representation of \(G\) in \(\ttf{g}{}{}\). This implies that \(\frac{M}{f}\in\ttf{g}{}{G} \), that is, it is invariant under the action of \(G\).
\end{Remark}

\begin{Remark}[Towards Lax Pairs]
Defining a Lax operator \( L\) \(\in\ttf{g}{}{G}\,\) gives us a \( G\)--invariant (automorphic) Lax operator and therefore a \( G\)--invariant (automorphic) integrable systems of equations.
\end{Remark}

\begin{Example}[$\bbbd_n$, with $n=2$]\label{D_2ex5}
With reference to Example \ref{D_2ex3} one has
\begin{eqnarray*}
\rho(\ttf{A}{\alpha}{1})
&=& \left(
\begin{array}{cc}
0 &-\lambda\\
-\lambda^{-1} & 0
\end{array}
\right),\\
\rho(\ttf{A}{\beta}{1})
&=& \left(
\begin{array}{cc}
\frac{1-\lambda^{4}}{2\lambda^{2}} &   -\lambda^{-1}\\
-\lambda& -\frac{1-\lambda^{4}}{2\lambda^{2}}
\end{array}
\right),\\
\rho(\ttf{A}{\gamma}{2})
&=& 
\frac{1}{2} \left(\begin{array}{cc}
0 &  \lambda^{-1} \\
\lambda  & 0
\end{array}
\right).
\end{eqnarray*}
\eex
\end{Example}

%%%
The structure constants in the homogeneous case are given by the following theorem:
\begin{Theorem}\label{theo:commalpha}
The commutation relations for the basis of \(\ttf{\sl}{}{G}\) are
\begin{eqnarray}
{[\ttf{A}{\alpha}{1},\ttf{A}{\beta}{1}]}&=&2\,\ttf{I}{G}{}\ttf{A}{\alpha}{1}-2\,\ttf{A}{\beta}{1},     \label{eq:homcommabalpha}\\
{[\ttf{A}{\alpha}{1},\ttf{A}{\gamma}{2}]}&=&\ttf{A}{\beta}{1}+2\,\ttf{A}{\gamma}{2}\label{eq:homcommacalpha},\\
{[\ttf{A}{\beta}{1},\ttf{A}{\gamma}{2}]}&=&\ttf{I}{G}{1-\tf{r}{G}{}}\ttf{A}{\alpha}{1}+2\,\ttf{I}{G}{}\ttf{A}{\gamma}{2}\label{eq:homcommbcalpha}.
\end{eqnarray}
\end{Theorem}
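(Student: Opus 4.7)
The plan is to deduce the homogeneous commutation relations from the non-homogeneous commutation relations in Theorem \ref{THEO:StuctureConstants} by direct substitution, using the key observation that when rescaling matrix-valued covariants by scalar rational functions $f, g \in \bbbc(\alpha,\beta)$, one has $[fM,gN] = fg[M,N]$, since the scalar factors commute through the bracket.

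First I would compute $[\ttf{A}{\alpha}{1},\ttf{A}{\beta}{1}]$ by pulling the scalar prefactors $\frac{1}{\omega_\alpha\alpha}$ and $\frac{\ttf{I}{G}{}}{\omega_\beta\beta}$ out of the bracket, substituting (\ref{eq:commab}), and regrouping. The product $\frac{\ttf{I}{G}{}}{\omega_\alpha\omega_\beta\alpha\beta}$ multiplied by $2\omega_\beta\beta\tf{A}{\alpha}{1}$ recombines cleanly into $2\ttf{I}{G}{}\ttf{A}{\alpha}{1}$, and the other term into $-2\ttf{A}{\beta}{1}$, yielding (\ref{eq:homcommabalpha}). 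The same rescaling strategy applied to (\ref{eq:commac}) for $[\ttf{A}{\alpha}{1},\ttf{A}{\gamma}{2}]$ produces two terms whose prefactors telescope against $p_G(\beta)$ and $\omega_\alpha\alpha$, exactly matching $\ttf{A}{\beta}{1}+2\ttf{A}{\gamma}{2}$, giving (\ref{eq:homcommacalpha}).

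The one relation requiring real work is (\ref{eq:homcommbcalpha}): here the rescaling produces
\[
[\ttf{A}{\beta}{1},\ttf{A}{\gamma}{2}] \;=\; \frac{\omega_\alpha\alpha\,\ttf{I}{G}{2}}{4(\omega_\gamma-1)\,\omega_\beta\,\beta^2\,\tf{p}{G}{}(\beta)}\bigl[\tf{A}{\beta}{1},\tf{A}{\gamma}{2}\bigr],
\]
and the $\tf{A}{\alpha}{1}$-term picks up the coefficient $\frac{\omega_\alpha\alpha^2\,\tf{q}{G}{}(\alpha)\,\ttf{I}{G}{2}}{\omega_\beta\beta^2\,\tf{p}{G}{}(\beta)}\ttf{A}{\alpha}{1}$. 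Here I would invoke the defining identity $\ttf{I}{G}{1+\tf{r}{G}{}}=\frac{\omega_\beta\,\beta^2\,\tf{p}{G}{}(\beta)}{\omega_\alpha\,\alpha^2\,\tf{q}{G}{}(\alpha)}$ from Theorem \ref{theo:HomBasis}, so this coefficient collapses to $\ttf{I}{G}{2-(1+\tf{r}{G}{})} = \ttf{I}{G}{1-\tf{r}{G}{}}$; the $\tf{A}{\gamma}{2}$-term similarly rewrites as $2\ttf{I}{G}{}\ttf{A}{\gamma}{2}$ after cancelling $\omega_\alpha\alpha$ and $4(\omega_\gamma-1)\beta\,\tf{p}{G}{}(\beta)$ against the definition of $\ttf{A}{\gamma}{2}$.

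The only genuine subtlety is tracking the correct power $1-\tf{r}{G}{}$ of $\ttf{I}{G}{}$, which encodes the dichotomy between $\bbbd_n$ with $n$ even (where $\ttf{I}{G}{}$ is the square root of $\frac{\omega_\beta\beta^2 p_G(\beta)}{\omega_\alpha\alpha^2 q_G(\alpha)}$) and all other cases. Once this bookkeeping is done, the relations are uniform in $G$, which is precisely the feature Theorem \ref{theo:commalpha} is asserting. No further input beyond Theorem \ref{THEO:StuctureConstants}, the definitions (\ref{homogeneous1})--(\ref{homogeneous3}), and the identity relating $\ttf{I}{G}{1+\tf{r}{G}{}}$ to groundform data is required.
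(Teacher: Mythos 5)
Your proposal is correct and follows exactly the paper's route: the paper's proof of Theorem \ref{theo:commalpha} is the one-line observation that it ``follows immediately from Theorems \ref{THEO:StuctureConstants} and \ref{theo:HomBasis},'' and your substitution of the scalar prefactors from (\ref{homogeneous1})--(\ref{homogeneous3}) into (\ref{eq:commab})--(\ref{eq:commbc}), together with the identity \(\ttf{I}{G}{1+\tf{r}{G}{}}=\frac{\omega_\beta\,\beta^2\,\tf{p}{G}{}(\beta)}{\omega_\alpha\,\alpha^2\,\tf{q}{G}{}(\alpha)}\), is precisely the computation the authors leave implicit. Your bookkeeping of the exponent \(1-\tf{r}{G}{}\) in the third relation checks out.
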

\begin{proof}
This follows immediately from Theorems \ref{THEO:StuctureConstants} and \ref{theo:HomBasis}.
\end{proof}
%%%%%%%%%%%%%%%%%%%%%%% DIAGONALIZATION
\section{Normal form of the Lie algebra}\label{sec:normal}
In this section we derive the \emph{normal form} of the algebra  \(\ttf{g}{}{G}=\ttf{\sl}{}{G}\) (diagonalising it with respect to \(\ttf{A}{\alpha}{1}\)).

Let \[\mf{X}=x_1\ttf{A}{\alpha}{1}+x_2 \ttf{A}{\beta}{1}+ x_3 \ttf{A}{\gamma}{2}=\left(\begin{array}{c}x_1\\x_2\\x_3\end{array}\right).\]
Then
\begin{eqnarray*}
\ad(\ttf{A}{\alpha}{1})\mf{X}&=&x_2(2\,\ttf{I}{G}{}\,\ttf{A}{\alpha}{1}-2\,\ttf{A}{\beta}{1})
+x_3 (\ttf{A}{\beta}{1}+2\,\ttf{A}
    {\gamma}{2})
\\&=&\left(\begin{array}{c}2 x_2\,\ttf{I}{G}{}\\-2 x_2+x_3\\2 x_3\end{array}\right)
\\&=&
\left(
\begin{array}{ccc} 
0 & 2 \ttf{I}{G}{}& 0
\\
0&-2&1
\\
0&0&2
\end{array}\right)
\left(\begin{array}{c}x_1\\x_2\\x_3\end{array}\right)\,.
\end{eqnarray*}
This leads to the transformation matrix
\[
\left(\begin{array}{ccc} 1& 0 & 0 \\ -\ttf{I}{G}{}&1&0\\ \frac{1}{4}\ttf{I}{G}{}&\frac{1}{4}&1\end{array}\right)
\]
and suggests the definition of a new basis:
\begin{eqnarray}
\label{alphabasis1}
\ttf{e}{0}{}&=&\ttf{A}{\alpha}{1},\\ \label{alphabasis2}
\ttf{e}{-}{}&=&\ttf{A}{\beta}{1}-\ttf{I}{G}{}\, \ttf{A}{\alpha}{1},\\  \label{alphabasis3}
\ttf{e}{+}{}&=&\ttf{A}{\gamma}{2}+\frac{1}{4}\ttf{A}{\beta}{1}+\frac{1}{4}\ttf{I}{G}{}\,\ttf{A}{\alpha}{1}\,.
\end{eqnarray}
\begin{Corollary}
Computing the trace-form among the basis elements it results
\begin{eqnarray*}
\begin{array}{ccc}
\langle \ttf{e}{0}{},\ttf{e}{0}{}\rangle=2\,,&\langle \ttf{e}{0}{},\ttf{e}{-}{}\rangle=0\,,&
\langle \ttf{e}{0}{},\ttf{e}{+}{}\rangle=0\,,\\
\langle \ttf{e}{-}{},\ttf{e}{-}{}\rangle=0\,,&\langle \ttf{e}{-}{},\ttf{e}{+}{}\rangle=\ttf{I}{G}{1-\tf{r}{G}{}}\ttf{J}{G}{}\,,&\langle \ttf{e}{+}{},\ttf{e}{+}{}\rangle=0\,.
\end{array}
\end{eqnarray*}
\end{Corollary}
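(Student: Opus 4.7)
The proof should be a routine bilinear expansion. The strategy is: substitute the definitions (\ref{alphabasis1})--(\ref{alphabasis3}) for $\ttf{e}{0}{}$, $\ttf{e}{-}{}$, $\ttf{e}{+}{}$ into each trace-form $\langle\ttf{e}{i}{},\ttf{e}{j}{}\rangle$, expand by bilinearity, and then replace every $\langle \ttf{A}{f}{k},\ttf{A}{g}{l}\rangle$ by the value listed in the corollary immediately preceding this one. All six entries then reduce to algebraic identities in the scalars $\ttf{I}{G}{}$ and $\ttf{J}{G}{}$.

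For the four entries $\langle \ttf{e}{0}{},\ttf{e}{0}{}\rangle$, $\langle \ttf{e}{0}{},\ttf{e}{-}{}\rangle$, $\langle \ttf{e}{0}{},\ttf{e}{+}{}\rangle$, $\langle \ttf{e}{-}{},\ttf{e}{-}{}\rangle$ this is immediate: for instance $\langle \ttf{e}{0}{},\ttf{e}{-}{}\rangle = \langle \ttf{A}{\alpha}{1},\ttf{A}{\beta}{1}\rangle - \ttf{I}{G}{}\langle \ttf{A}{\alpha}{1},\ttf{A}{\alpha}{1}\rangle = 2\ttf{I}{G}{} - 2\ttf{I}{G}{} = 0$, and similarly the other three cancellations are direct. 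The first entry is tautological. The computation of $\langle\ttf{e}{+}{},\ttf{e}{+}{}\rangle$ is slightly longer but still linear: expanding produces a multiple of $\ttf{I}{G}{1-\tf{r}{G}{}}$ that cancels, and a multiple of $\ttf{I}{G}{2}$ that also cancels.

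The only entry requiring a non-obvious simplification is $\langle \ttf{e}{-}{},\ttf{e}{+}{}\rangle$. Expanding and collecting I would obtain
\begin{equation*}
\langle \ttf{e}{-}{},\ttf{e}{+}{}\rangle = \ttf{I}{G}{2} - \ttf{I}{G}{1-\tf{r}{G}{}}.
\end{equation*}
Here I would invoke the key identity $\ttf{I}{G}{1+\tf{r}{G}{}} = 1 + \ttf{J}{G}{}$ proved in Theorem \ref{theo:HomBasis}. Multiplying both sides by $\ttf{I}{G}{1-\tf{r}{G}{}}$ gives $\ttf{I}{G}{2} = \ttf{I}{G}{1-\tf{r}{G}{}} + \ttf{I}{G}{1-\tf{r}{G}{}}\ttf{J}{G}{}$, hence $\ttf{I}{G}{2} - \ttf{I}{G}{1-\tf{r}{G}{}} = \ttf{I}{G}{1-\tf{r}{G}{}}\ttf{J}{G}{}$, which is the stated value.

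The main obstacle is purely bookkeeping: one needs to be careful with the symmetric cross terms (the factors of $2$ arising from $\langle \mf{X},\mf{Y}\rangle + \langle \mf{Y},\mf{X}\rangle$) and with the powers $\ttf{I}{G}{1\pm\tf{r}{G}{}}$, which behave differently in the cases $G=\bbbi,\bbbo,\bbbt,\bbbd_n$ with $n$ odd (where $\tf{r}{G}{}=0$) versus $G=\bbbd_n$ with $n$ even (where $\tf{r}{G}{}=1$). However, the identity $\ttf{I}{G}{1+\tf{r}{G}{}}=1+\ttf{J}{G}{}$ handles both cases uniformly, so no case distinction is ultimately needed in the proof itself.
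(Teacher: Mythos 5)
Your proposal is correct and is exactly the computation the paper intends: the corollary is stated without an explicit proof, and the natural argument is precisely the bilinear expansion over the trace-forms $\langle\ttf{A}{f}{k},\ttf{A}{g}{l}\rangle$ from the preceding corollary, with the identity $\ttf{I}{G}{1+\tf{r}{G}{}}=1+\ttf{J}{G}{}$ converting $\ttf{I}{G}{2}-\ttf{I}{G}{1-\tf{r}{G}{}}$ into $\ttf{I}{G}{1-\tf{r}{G}{}}\ttf{J}{G}{}$. I verified all six entries, including the vanishing of $\langle\ttf{e}{+}{},\ttf{e}{+}{}\rangle$, and they come out as you describe.
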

\begin{Example}[$\bbbd_n$, with $n=2$]\label{D_2ex6}
With reference to Example \ref{D_2ex5} we find
\begin{eqnarray} 
\rho(\ttf{e}{0}{})
&=&\left(\begin{array}{cc} 0& -\lambda\\
-\lambda^{-1} & 0\end{array}\right)\, ,\nonumber\\ 
\nonumber \\ 
\rho( \ttf{e}{-}{})
&=&\frac{1-\lambda^{4}}{2\lambda^{2}}
\left( \begin{array}{cc} 1 &-\lambda \\
\lambda^{-1} & -1 \end{array}\right) \,,\nonumber\\
\rho(\ttf{e}{+}{})
&=&\frac{1-\lambda^{4}}{8\lambda^{2}}\left(\begin{array}{cc} 1 & \lambda \\
-\lambda^{-1} & -1 \end{array}\right)\, \label{newx0y0h0}
 . \nonumber
\end{eqnarray}
See section \ref{sec:averaging} for comparison with the earlier results.
\eex
\end{Example}
It turns out that
\begin{eqnarray*}
{[\ttf{e}{+}{},\ttf{e}{-}{}]}&=&[\ttf{A}{\gamma}{2}+\frac{1}{4} \ttf{A}{\beta}{1}+\frac{1}{4}\ttf{I}{G}{} \,\ttf{A}{\alpha}{1}\,,\,\ttf{A}{\beta}{1}-\frac{1}{4}\ttf{I}{G}{}\, \ttf{A}{\alpha}{1}]
\\&=&
-[\ttf{A}{\beta}{1},\ttf{A}{\gamma}{2}]
+\ttf{I}{G}{}\, [\ttf{A}{\alpha}{1},\ttf{A}{\gamma}{2}]
+\frac{1}{2}\tf{I}{G}{}\,[\ttf{A}{\alpha}{1},\ttf{A}{\beta}{1}]
\\&=&
(- \ttf{I}{G}{1-\tf{r}{G}{}}+\ttf{I}{G}{2})\ttf{A}{\alpha}{1}
\\&=&
\ttf{I}{G}{1-\tf{r}{G}{}}(\ttf{I}{G}{1+\tf{r}{G}{}}-1)\ttf{A}{\alpha}{1}
\\&=&
\ttf{I}{G}{1-\tf{r}{G}{}}\,\ttf{J}{G}{}\,\ttf{e}{0}{}.
\end{eqnarray*}
Notice that \(Q(\ttf{I}{G}{})=\ttf{I}{G}{1-\tf{r}{G}{}}\,\ttf{J}{G}{}\) is a quadratic polynomial in \(\ttf{I}{G}{}\).
We check that
\begin{eqnarray*}
{[\ttf{e}{0}{},\ttf{e}{-}{}]}&=&
%[\ttf{A}{\alpha}{1}\,,\,\ttf{A}{\beta}{1}-\ttf{I}{G}{}\, \ttf{A}{\alpha}{1}]
%\\&=& 2\,\ttf{I}{G}{}\ttf{A}{\alpha}{1}-2\,\ttf{A}{\beta}{1}
%\\&=&-2 \ttf{e}{-}{}
-2 \ttf{e}{-}{},
\\
{[\ttf{e}{0}{},\ttf{e}{+}{}]}&=& 2 \ttf{e}{+}{}.
\end{eqnarray*}
%%
%\begin{eqnarray*}
%{[\ttf{e}{0}{},\ttf{e}{+}{}]}&=&[\ttf{A}{\alpha}{1}\,,\,\ttf{A}{\gamma}{2}+\frac{1}{4} \ttf{A}{\beta}{1}+\frac{1}{4}\tf{I}{G}{}\,\ttf{A}{\alpha}{1}]
%\\&=& [\ttf{A}{\alpha}{1},\ttf{A}{\gamma}{2}]+\frac{1}{4} [\ttf{A}{\alpha}{1},\ttf{A}{\beta}{1}]
%\\&=& \ttf{A}{\beta}{1}+2\,\ttf{A}{\gamma}{2}+\frac{1}{2}\ttf{I}{G}{}\,\ttf{A}{\alpha}{1}-\frac{1}{2}\ttf{A}{\beta}{1}
%\\&=& 2\ttf{A}{\gamma}{2}+ \frac{1}{2}\ttf{A}{\beta}{1}+\frac{1}{2} \ttf{I}{G}{}\ttf{A}{\alpha}{1}
%\\&=&2 \ttf{e}{+}{}
%\end{eqnarray*}
We have now proved the following theorem:
\begin{Theorem}\label{DTHEOalpha}
The \(G\)--Automorphic Lie algebras
\(\ttf{\sl}{}{G}\) are isomorphic as modules to \(\bbbc[\ttf{I}{G}{}]\otimes\sl\).
A basis for \(\ttf{\sl}{}{G}\) over \(\bbbc\),  is given by
\[
\ttf{e}{\cdot}{l}=\htf{I}{G}{l}\ttf{e}{\cdot}{}\,,\quad\cdot=0,\pm\,,
\quad l\in\mathbb{Z}_{\geq 0}\,,\]
where \(\htf{I}{G}{}=a\ttf{I}{G}{}+b\) for some \(a,b\in\bbbc\).
The commutation relations can be brought into the form
\begin{eqnarray*}
[\ttf{e}{+}{l_1},\ttf{e}{-}{l_2}]&=&\ttf{e}{0}{l_1+l_2}+\ttf{e}{0}{l_1+l_2+2}\\
{[\ttf{e}{0}{l_1},\ttf{e}{\pm}{l_2}]}&=&\pm2\ttf{e}{\pm}{l_1+l_2}
\end{eqnarray*}
The algebras are quasi--graded (e.g. \cite{lm_cmp05}), with grading depth \(2\).
\end{Theorem}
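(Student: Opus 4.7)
The plan is three-fold. First, I would establish the module isomorphism from the ingredients already in hand. Second, I would reduce the only nontrivial bracket on the generators to a quadratic polynomial in $\ttf{I}{G}{}$ times $\ttf{e}{0}{}$. Third, I would absorb that quadratic into $1+\htf{I}{G}{2}$ by an affine rescaling of $\ttf{I}{G}{}$ combined with a constant rescaling of one root vector.

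For the module statement I would invoke Theorem \ref{theo:HomBasis} together with Corollary \ref{cor:subalgebra}: the even part $\ttf{\sl}{}{G}$ is spanned freely over $\bbbc[\ttf{I}{G}{}]$ by the three homogeneous matrices $\ttf{A}{\alpha}{1},\ttf{A}{\beta}{1},\ttf{A}{\gamma}{2}$, and the invertible linear change (\ref{alphabasis1})--(\ref{alphabasis3}) transports this to $\{\ttf{e}{0}{},\ttf{e}{-}{},\ttf{e}{+}{}\}$, giving the module isomorphism $\ttf{\sl}{}{G}\cong\bbbc[\ttf{I}{G}{}]\otimes\sl$. For the brackets, the calculations immediately preceding the theorem already give $[\ttf{e}{0}{},\ttf{e}{\pm}{}]=\pm 2\ttf{e}{\pm}{}$ and $[\ttf{e}{+}{},\ttf{e}{-}{}]=Q(\ttf{I}{G}{})\ttf{e}{0}{}$ with $Q(\ttf{I}{G}{})=\ttf{I}{G}{1-\tf{r}{G}{}}\ttf{J}{G}{}$; using the identity $\ttf{I}{G}{1+\tf{r}{G}{}}=1+\ttf{J}{G}{}$ from Theorem \ref{theo:HomBasis}, this becomes $Q=\ttf{I}{G}{2}-\ttf{I}{G}{}$ in the case $\tf{r}{G}{}=0$ and $Q=\ttf{I}{G}{2}-1$ in the case $\tf{r}{G}{}=1$.

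Both quadratics have two distinct complex roots, so there exist $a\in\bbbc\setminus\{0\}$ and $b\in\bbbc$ such that the affine substitution $\htf{I}{G}{}:=a\ttf{I}{G}{}+b$ maps those roots to $\pm i$; equivalently, $Q(\ttf{I}{G}{})=c(1+\htf{I}{G}{2})$ for a nonzero constant $c$. Explicit choices are $\htf{I}{G}{}=i(2\ttf{I}{G}{}-1)$, $c=-\tfrac{1}{4}$ when $\tf{r}{G}{}=0$ and $\htf{I}{G}{}=i\ttf{I}{G}{}$, $c=-1$ when $\tf{r}{G}{}=1$. Rescaling $\ttf{e}{+}{}\mapsto c^{-1}\ttf{e}{+}{}$ leaves the $\ttf{e}{0}{}$-weight of $\ttf{e}{\pm}{}$ unchanged and produces $[\ttf{e}{+}{},\ttf{e}{-}{}]=(1+\htf{I}{G}{2})\ttf{e}{0}{}$.

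Since $\bbbc[\htf{I}{G}{}]=\bbbc[\ttf{I}{G}{}]$, the elements $\ttf{e}{\cdot}{l}:=\htf{I}{G}{l}\ttf{e}{\cdot}{}$ form a $\bbbc$-basis of $\ttf{\sl}{}{G}$; extending the brackets $\bbbc[\htf{I}{G}{}]$-linearly then yields
\[
[\ttf{e}{+}{l_1},\ttf{e}{-}{l_2}]=\htf{I}{G}{l_1+l_2}(1+\htf{I}{G}{2})\ttf{e}{0}{}=\ttf{e}{0}{l_1+l_2}+\ttf{e}{0}{l_1+l_2+2},
\]
\[
[\ttf{e}{0}{l_1},\ttf{e}{\pm}{l_2}]=\pm 2\ttf{e}{\pm}{l_1+l_2},
\]
from which quasi-gradedness with depth $2$ is immediate (the grading index shifts only by $0$ or $2$). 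The one genuinely delicate ingredient is finding a single affine substitution that brings $Q$ into the normal form $1+\htf{I}{G}{2}$ uniformly across all the groups in the list; this is the main obstacle, and it is resolved by the observation that a monic quadratic with two distinct complex roots is affinely equivalent to $1+\htf{I}{G}{2}$, a property enjoyed by $Q$ in both cases $\tf{r}{G}{}=0,1$.
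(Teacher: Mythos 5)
Your proposal is correct and follows essentially the same route as the paper: reduce to $[\ttf{e}{+}{},\ttf{e}{-}{}]=Q_G(\ttf{I}{G}{})\ttf{e}{0}{}$ with $Q_G$ quadratic, then normalize $Q_G$ to $1+\htf{I}{G}{2}$ by an affine change of $\ttf{I}{G}{}$ and a constant rescaling of a root vector. Your version is in fact more explicit than the paper's (which only asserts the existence of the normalizing transformation), since you compute $Q_G=\ttf{I}{G}{2}-\ttf{I}{G}{}$ or $\ttf{I}{G}{2}-1$, verify the distinctness of the roots that makes the normal form $1+\htf{I}{G}{2}$ attainable, and exhibit the transformations $\htf{I}{G}{}=i(2\ttf{I}{G}{}-1)$ and $\htf{I}{G}{}=i\ttf{I}{G}{}$ concretely.
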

\begin{proof}
In all cases one has \([\ttf{e}{+}{0},\ttf{e}{-}{0}]=Q_G(\ttf{I}{G}{})\ttf{e}{0}{0}\)
where \(Q_G\) is a quadratic polynomial.
Using the allowed complex scaling and affine transformations on \(\ttf{I}{G}{}\) and rescaling \(\ttf{e}{-}{0}\),
this can be normalized to
\([\ttf{e}{+}{0},\ttf{e}{-}{0}]=(1+\htf{I}{G}{2})\ttf{e}{0}{0}\).
\end{proof}
\begin{Corollary}
The Automorphic Lie Algebras \(\ttf{\sl}{}{G}\) are isomorphic as Lie algebras for the groups \(\bbbt, \bbbo, \bbbi\) and \(\bbbd_n\).
\end{Corollary}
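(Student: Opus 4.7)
My plan is to obtain the corollary as an essentially immediate consequence of Theorem \ref{DTHEOalpha}, so the proof should really be read as a normalization argument. The strategy is first to show that, for each $G\in\{\bbbt,\bbbo,\bbbi,\bbbd_n\}$, the bracket $[\ttf{e}{+}{0},\ttf{e}{-}{0}]=Q_G(\ttf{I}{G}{})\,\ttf{e}{0}{0}$ is controlled by the \emph{same} polynomial shape in $\ttf{I}{G}{}$, and then to use the freedom in choosing $\htf{I}{G}{}=a\ttf{I}{G}{}+b$ together with a rescaling of $\ttf{e}{-}{0}$ to bring each $Q_G$ to one and the same monic quadratic $1+\htf{I}{G}{2}$. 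Once this is done, the basis $\{\ttf{e}{\cdot}{l}\}$ of Theorem \ref{DTHEOalpha} satisfies exactly the same commutation relations in every case, so the $\bbbc$-linear map sending generators to generators is a Lie algebra isomorphism.

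The first step is concrete: starting from $Q_G(\ttf{I}{G}{})=\ttf{I}{G}{1-\tf{r}{G}{}}\,\ttf{J}{G}{}$ and the identity $\ttf{I}{G}{1+\tf{r}{G}{}}=1+\ttf{J}{G}{}$ established in Theorem \ref{theo:HomBasis}, I would rewrite
\[
Q_G(\ttf{I}{G}{})=\ttf{I}{G}{2}-\ttf{I}{G}{1-\tf{r}{G}{}}.
\]
For $G\in\{\bbbt,\bbbo,\bbbi\}$ and $\bbbd_n$ with $n$ odd one has $\tf{r}{G}{}=0$, giving $Q_G=\ttf{I}{G}{2}-\ttf{I}{G}{}$ with simple roots $0,1$; for $\bbbd_n$ with $n$ even one has $\tf{r}{G}{}=1$, giving $Q_G=\ttf{I}{G}{2}-1$ with simple roots $\pm1$. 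In both situations $Q_G$ is a monic quadratic whose discriminant is nonzero. This is the observation that makes the normalization possible and, I expect, is the only place where one has to look at the individual groups.

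The second step is to choose the affine change $\htf{I}{G}{}=a\ttf{I}{G}{}+b$ so that the two roots of $Q_G$ become $\pm i$; then $Q_G$ becomes $c(1+\htf{I}{G}{2})$ for some nonzero $c\in\bbbc$. Absorbing $c$ into a rescaling $\ttf{e}{-}{0}\mapsto c^{-1}\ttf{e}{-}{0}$ (which does not disturb the relations $[\ttf{e}{0}{0},\ttf{e}{\pm}{0}]=\pm2\ttf{e}{\pm}{0}$) produces the normal form
\[
[\ttf{e}{+}{0},\ttf{e}{-}{0}]=(1+\htf{I}{G}{2})\,\ttf{e}{0}{0}=\ttf{e}{0}{0}+\ttf{e}{0}{2},
\qquad [\ttf{e}{0}{0},\ttf{e}{\pm}{0}]=\pm2\ttf{e}{\pm}{0},
\]
independently of $G$. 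Multiplying through by powers of $\htf{I}{G}{}$ then yields the full set of bracket relations displayed in Theorem \ref{DTHEOalpha} in a $G$-independent form.

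Finally, given two groups $G,G'$ in the list, I would define $\Phi_{G,G'}:\ttf{\sl}{}{G}\to\ttf{\sl}{}{G'}$ as the unique $\bbbc$-linear extension of $\ttf{e}{\cdot}{l}\mapsto\ttf{e}{\cdot}{l}$ for $\cdot\in\{0,+,-\}$ and $l\in\bbbz_{\geq0}$, using the normalized bases from step two. By Theorem \ref{DTHEOalpha} both sides have these as $\bbbc$-bases, so $\Phi_{G,G'}$ is a linear bijection; the normalized bracket relations above being identical on both sides shows that $\Phi_{G,G'}$ preserves Lie brackets. The main (in fact only) obstacle is the nondegeneracy of $Q_G$, and this is settled once and for all by the explicit computation in step one.
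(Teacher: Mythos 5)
Your proposal is correct and follows essentially the same route as the paper: the paper's proof of Theorem \ref{DTHEOalpha} likewise normalizes \([\ttf{e}{+}{0},\ttf{e}{-}{0}]=Q_G(\ttf{I}{G}{})\ttf{e}{0}{0}\) to \((1+\htf{I}{G}{2})\ttf{e}{0}{0}\) by an affine change of \(\ttf{I}{G}{}\) and a rescaling of \(\ttf{e}{-}{0}\), and the corollary is then read off by matching the resulting \(G\)-independent bases. Your only addition is to make explicit that \(Q_G=\ttf{I}{G}{2}-\ttf{I}{G}{1-\tf{r}{G}{}}\) has two distinct roots (\(0,1\) or \(\pm1\)), which is the hypothesis the paper's normalization tacitly relies on.
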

\begin{Remark}
This proves a conjecture by A. Mikhailov, made in 2008.
This conjecture is proven in \cite{PC09} by completely different methods,
unpublished at the time of writing of the present paper.
\end{Remark}
\begin{Definition}
A basis for \(\tf{\sl}{}{G}\),  is given by
\[
\ttf{e}{\cdot}{l}=\ttf{I}{G}{l}\ttf{e}{\cdot}{}\,,\quad\cdot=0,\pm\,,
\quad l\in\mathbb{Z}\,.\]
The commutation relations are
\begin{eqnarray*}
[\ttf{e}{+}{l_1},\ttf{e}{-}{l_2}]&=&\ttf{e}{0}{l_1+l_2+1}-\ttf{e}{0}{l_1+l_2+2}\\
{[\ttf{e}{0}{l_1},\ttf{e}{\pm}{l_2}]}&=&\pm2\ttf{e}{\pm}{l_1+l_2}
\end{eqnarray*}
\end{Definition}
\begin{Theorem}\label{DTHEOalpha0}
Let \(\tf{r}{G}{}=0\).
Then \(\tf{\sl}{}{G}=\tf{\sl}{+}{G}\oplus\tf{\sl}{-}{G}\) as a \(\bbbc[\ttf{I}{G}{},\ttf{I}{G}{-1}]\)-module, with subalgebras \(\tf{\sl}{\pm}{G}\) to be defined in the proof.
\end{Theorem}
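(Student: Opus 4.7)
My plan is to exploit the factorization $\ttf{I}{G}{}\ttf{J}{G}{} = \ttf{I}{G}{}(\ttf{I}{G}{} - 1)$, which follows from the identity $\ttf{I}{G}{1+\tf{r}{G}{}} = 1 + \ttf{J}{G}{}$ of Theorem~\ref{theo:HomBasis} in the case $\tf{r}{G}{} = 0$. Under this hypothesis, the commutator from Theorem~\ref{DTHEOalpha} in $\tf{\sl}{}{G}$ takes the factored form
\[
[\ttf{e}{+}{l_1}, \ttf{e}{-}{l_2}] = \ttf{e}{0}{l_1+l_2+1} - \ttf{e}{0}{l_1+l_2+2} = \ttf{I}{G}{l_1+l_2+1}(1 - \ttf{I}{G}{})\,\ttf{e}{0}{0}.
\]
The decomposition of this structure constant into the two factors $\ttf{I}{G}{}$ and $(1-\ttf{I}{G}{})$ in $\bbbc[\ttf{I}{G}{}, \ttf{I}{G}{-1}]$---with $\ttf{I}{G}{}$ being a unit---is the algebraic input that separates the case $\tf{r}{G}{} = 0$ from the case $\tf{r}{G}{} = 1$ (where $Q = (\ttf{I}{G}{}-1)(\ttf{I}{G}{}+1)$ has two non-unit factors).

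First, I would construct candidate submodules $\tf{\sl}{\pm}{G}$ by prescribing $\bbbc[\ttf{I}{G}{}, \ttf{I}{G}{-1}]$-generators adapted to this factorization. Concretely, I would look for $R$-linear combinations of $\ttf{e}{+}{}, \ttf{e}{-}{}, \ttf{e}{0}{}$ in which the Cartan generator is absorbed into the factors $\ttf{I}{G}{}$ and $(1-\ttf{I}{G}{})$ asymmetrically, so that the only nontrivial output---the bracket $[\ttf{e}{+}{}, \ttf{e}{-}{}]$---lands back inside the same submodule by virtue of the factorization. The ranks of the two submodules must sum to $3$, so the split is either $(1,2)$ or $(2,1)$; the factorization gives a natural way to separate the generators so that each $\tf{\sl}{\pm}{G}$ absorbs one of the two factors of the structure polynomial.

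Second, I would verify closure under the bracket by direct computation on the chosen generators, using $R$-bilinearity together with the relations
\[
[\ttf{e}{0}{l_1}, \ttf{e}{\pm}{l_2}] = \pm 2\,\ttf{e}{\pm}{l_1+l_2}, \qquad [\ttf{e}{+}{l_1}, \ttf{e}{-}{l_2}] = \ttf{e}{0}{l_1+l_2+1} - \ttf{e}{0}{l_1+l_2+2}
\]
from Theorem~\ref{DTHEOalpha}. I would then verify the direct sum decomposition by exhibiting the union of the chosen generating sets as a free $R$-basis of the rank-$3$ module $\tf{\sl}{}{G}$; because the module is free, this reduces to a linear algebra check.

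The main obstacle is that the subalgebras $\tf{\sl}{\pm}{G}$ must genuinely depend on the hypothesis $\tf{r}{G}{} = 0$: a naive Borel-type decomposition (a rank-$2$ Borel $\bbbc[\ttf{I}{G}{}, \ttf{I}{G}{-1}]\langle \ttf{e}{+}{}, \ttf{e}{0}{}\rangle$ paired with the rank-$1$ nilradical $\bbbc[\ttf{I}{G}{}, \ttf{I}{G}{-1}]\langle\ttf{e}{-}{}\rangle$) is available for all $\tf{r}{G}{}$, so the intended decomposition must make essential use of the one-unit-one-non-unit factorization $Q = \ttf{I}{G}{}(1 - \ttf{I}{G}{})$. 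Moreover, because $\bbbc[\ttf{I}{G}{}, \ttf{I}{G}{-1}]$ is an integral domain with no non-trivial idempotents, there is no ring-level splitting corresponding to the two factors of $Q$; the decomposition must be engineered at the Lie-theoretic level through carefully chosen $R$-combinations of $\ttf{e}{+}{}, \ttf{e}{-}{}, \ttf{e}{0}{}$ that align the two factors of $Q$ with the two summands without invoking non-existent projectors.
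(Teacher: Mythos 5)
Your proposal never actually exhibits the subalgebras, and the place where you go looking for them is not where they live. You restrict the search to free $\bbbc[\ttf{I}{G}{},\ttf{I}{G}{-1}]$-submodules of ranks $1$ and $2$, correctly observe that the obvious candidate of that type --- the Borel-type splitting $\langle\ttf{e}{-}{}\rangle\oplus\langle\ttf{e}{0}{},\ttf{e}{+}{}\rangle$ --- exists for every $\tf{r}{G}{}$ and therefore cannot be the content of the theorem, and then leave the ``carefully chosen $R$-combinations'' unspecified. This gap cannot be filled within your search space: over the fraction field $\bbbc(\ttf{I}{G}{})$ the algebra is just $\sl$, so any rank-$2$ subalgebra is a (possibly twisted) Borel and any rank-$1$ complement is a line; such splittings exist for all $\tf{r}{G}{}$ and never see the factorization $Q=\ttf{I}{G}{}(1-\ttf{I}{G}{})$ versus $(\ttf{I}{G}{}-1)(\ttf{I}{G}{}+1)$. (Concretely, if a rank-two $R$-subalgebra contains $\ttf{e}{0}{}$ and a second generator $a\ttf{e}{+}{}+b\ttf{e}{-}{}$, closure under $\ad(\ttf{e}{0}{})$ forces $a=0$ or $b=0$, sending you back to the Borel case.)

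The decomposition in the paper is of a different nature: it is the loop-algebra-style splitting by degree in $\ttf{I}{G}{}$, so each summand is infinite-dimensional over $\bbbc$ and is not an $R$-submodule. One sets
\begin{eqnarray*}
\tf{\sl}{+}{G}&=&\langle \ttf{e}{0,\pm}{l}\rangle_{l\geq 0}\oplus \langle \ttf{e}{-}{-1}\rangle\,,\\
\tf{\sl}{-}{G}&=&\langle \ttf{e}{0,\pm}{l}\rangle_{l\leq -2}\oplus \langle \ttf{e}{+}{-1}, \ttf{e}{0}{-1}\rangle\,,
\end{eqnarray*}
which together exhaust the $\bbbc$-basis $\{\ttf{e}{\cdot}{l}\}_{l\in\bbbz}$ exactly once, so the sum is direct. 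The hypothesis $\tf{r}{G}{}=0$ enters exactly through the relation you wrote down, $[\ttf{e}{+}{l_1},\ttf{e}{-}{l_2}]=\ttf{e}{0}{l_1+l_2+1}-\ttf{e}{0}{l_1+l_2+2}$, but its role is that the quasi-grading depth is $2$: the boundary brackets $[\ttf{e}{+}{0},\ttf{e}{-}{-1}]=\ttf{e}{0}{0}-\ttf{e}{0}{1}$ and $[\ttf{e}{+}{-1},\ttf{e}{-}{-2}]=\ttf{e}{0}{-2}-\ttf{e}{0}{-1}$ land in $\tf{\sl}{+}{G}$ and $\tf{\sl}{-}{G}$ respectively, while with depth $3$ (the case $\tf{r}{G}{}=1$) the second bracket would produce $\ttf{e}{0}{0}$ and closure would fail. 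If you redirect your argument to this degree splitting --- assign the three degree-$(-1)$ generators as above and check the boundary commutators --- the proof goes through.
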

\begin{proof}
Let \(\tf{\sl}{+}{G}=\langle \ttf{e}{0,\pm}{l}\rangle_{l\geq 0}\oplus \langle \ttf{e}{-}{-1}\rangle\).
The \(\bbbc[\ttf{I}{G}{}]\)-module  \(\tf{\sl}{+}{G}\) is a Lie algebra, since
\( [\ttf{e}{+}{0}, \ttf{e}{-}{-1}] =\ttf{e}{0}{0}-\ttf{e}{0}{1}
\in \tf{\sl}{+}{G}\).
Next consider
\(\tf{\sl}{-}{G}=\langle \ttf{e}{0,\pm}{l}\rangle_{l\leq -2}\oplus \langle \ttf{e}{+}{-1}, \ttf{e}{0}{-1}\rangle\).
This is also a Lie algebra, since \( [\ttf{e}{-}{-2}, \ttf{e}{+}{-1}]=
-\ttf{e}{0}{-2}+\ttf{e}{0}{-1}\in \tf{\sl}{-}{G}\) (here we only treated the worst commutators, the others are less critical).
Then \(\tf{\sl}{}{G}=\tf{\sl}{+}{G}\oplus\tf{\sl}{-}{G}\) as a \(\bbbc[\ttf{I}{G}{},\ttf{I}{G}{-1}]\)-module, with subalgebras \(\tf{\sl}{\pm}{G}\).
\end{proof}
\begin{Corollary}\label{cor:Mik}
The Automorphic Lie Algebras \(\tf{\sl}{}{G}\) are isomorphic as Lie algebras for the groups \(\bbbt, \bbbo, \bbbi\) and \(\bbbd_n\), \(n\) odd.
\end{Corollary}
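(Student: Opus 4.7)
The plan is to deduce this Corollary directly from Theorem \ref{DTHEOalpha0} together with the uniform commutation relations displayed in the Definition preceding it. For each of the four groups \(\bbbt, \bbbo, \bbbi\) and \(\bbbd_n\) with \(n\) odd we have \(\tf{r}{G}{}=0\), and this is precisely the hypothesis under which Theorem \ref{DTHEOalpha0} and the bracket formulas displayed there apply.

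Concretely, I would argue as follows. In the \(\tf{r}{G}{}=0\) case the identity \(\ttf{I}{G}{1+\tf{r}{G}{}}=1+\ttf{J}{G}{}\) of Theorem \ref{theo:HomBasis} becomes \(\ttf{J}{G}{}=\ttf{I}{G}{}-1\), so the structural commutator computed in Section \ref{sec:normal}, namely \([\ttf{e}{+}{},\ttf{e}{-}{}]=\ttf{I}{G}{}\,\ttf{J}{G}{}\,\ttf{e}{0}{}\), becomes the universal quadratic \(\ttf{I}{G}{}(\ttf{I}{G}{}-1)\,\ttf{e}{0}{}\), carrying no \(G\)--dependent constants. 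Together with \([\ttf{e}{0}{},\ttf{e}{\pm}{}]=\pm 2\,\ttf{e}{\pm}{}\), this fully determines the Lie bracket on the Laurent basis \(\ttf{e}{\cdot}{l}=\ttf{I}{G}{l}\,\ttf{e}{\cdot}{}\), \(l\in\bbbz\), via exactly the formulas tabulated in the Definition. Because those formulas involve only abstract shift indices and fixed integer coefficients, the \(\bbbc\)--linear map \(\Phi\colon\tf{\sl}{}{G}\to\tf{\sl}{}{G'}\) sending \(\ttf{e}{\cdot}{l}\) to \(\ttf{e}{\cdot}{l}\) is manifestly a Lie algebra isomorphism whenever both groups satisfy \(\tf{r}{G}{}=0\).

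The one point requiring explicit attention is to check that the affine normalization \(\htf{I}{G}{}=a\,\ttf{I}{G}{}+b\) used at the end of the proof of Theorem \ref{DTHEOalpha} extends to the full Laurent ring \(\bbbc[\ttf{I}{G}{},\ttf{I}{G}{-1}]\) so that negative--index basis elements \(\ttf{e}{\cdot}{l}\), \(l<0\), are well defined and obey the same universal brackets as in the Definition. This is immediate since the affine transformation is invertible and \(\ttf{I}{G}{}\) is in each of the four cases a nonconstant rational function of \(\lambda\), so that the rings \(\bbbc[\ttf{I}{G}{},\ttf{I}{G}{-1}]\) are all abstractly isomorphic to \(\bbbc[t,t^{-1}]\). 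I therefore expect no substantive obstacle: once Theorems \ref{DTHEOalpha} and \ref{DTHEOalpha0} are in hand, the Corollary is a purely notational consequence of the absence of \(G\)--dependent data in the universal bracket formulas when \(\tf{r}{G}{}=0\), and the proof amounts to stating the isomorphism \(\Phi\) explicitly on generators and invoking the Definition to verify bracket preservation.
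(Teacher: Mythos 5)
Your argument is correct and coincides with the paper's (implicit) proof: the Corollary is stated without a proof precisely because, once \(\tf{r}{G}{}=0\), the bracket \([\ttf{e}{+}{},\ttf{e}{-}{}]=\ttf{I}{G}{}\,\ttf{J}{G}{}\,\ttf{e}{0}{}=(\ttf{I}{G}{2}-\ttf{I}{G}{})\,\ttf{e}{0}{}\) together with \([\ttf{e}{0}{},\ttf{e}{\pm}{}]=\pm 2\,\ttf{e}{\pm}{}\) reproduces (up to a harmless rescaling of \(\ttf{e}{-}{}\)) the group-independent structure constants of the preceding Definition, so the index-preserving map on the basis \(\ttf{e}{\cdot}{l}\), \(l\in\bbbz\), is the desired isomorphism. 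The only correction concerns your final paragraph: no affine normalization is needed or available here --- a map \(t\mapsto at+b\) with \(b\neq 0\) is \emph{not} an automorphism of \(\bbbc[t,t^{-1}]\), which is exactly why this Laurent-ring statement is restricted to \(\tf{r}{G}{}=0\) (where the quadratic is already in universal form and only the scaling \(\ttf{I}{G}{}\mapsto a\ttf{I}{G}{}\) is used) and why \(\bbbd_n\) with \(n\) even is excluded from this Corollary even though it appears in the earlier corollary for \(\ttf{\sl}{}{G}\).
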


%\section{Examples}\label{sec:Ex}
\section{The example  \(\ttf{\sl}{}{\bbbd_2}\) and comparison with averaging}\label{sec:averaging}
%\subsection{Comparison with the \(\bbbd_2\) results in \cite{lm_cmp05}}
We discuss here the case of the Automorphic Lie Algebra \(\ttf{\sl}{}{\bbbd_2}\) associated to \(\bbbd_2\); this was first found in \cite{lm_cmp05} via group average (and denoted as \(sl_{\bbbd_2}(2,\bbbc;0)\)) . This simple example allows us to show the equivalence of the two methods in this case.
Let 
%%%
\begin{eqnarray} 
&& x_0(\lambda ) %=\left\langle \frac{x}{\lambda}\right\rangle _{\bbbd_2}
=\left(\begin{array}{cc} 0& \lambda^{-1}\\
\lambda & 0\end{array}\right)\, ,\nonumber\\ 
\nonumber \\ 
&&y_0 (\lambda ) %=\left\langle \frac{y}{\lambda}\right\rangle _{\bbbd_2}
=\left(\begin{array}{cc} 0& \lambda \\
\lambda^{-1}& 0\end{array}\right)\, , \label{x0y0h0} \\ 
\nonumber \\ 
&&h_0 (\lambda )%=\left\langle \frac{h}{\lambda^2}\right\rangle _{\bbbd_2}
=\frac{1-\lambda^4}{\lambda ^2}
\left( 
\begin{array}{cc} 1&0\\
0 & -1\end{array}\right) \,  \nonumber
\end{eqnarray}
%%%
be the generators of the algebra obeying the commutation relations 
\begin{eqnarray*}
{[x_{0} \,,\,y_{0}]}&=&h_{0}\\
{[h_{0}\,,\,x_{0}]}&=&(\tf{I}{G}{}+1)\,x_{0}-y_{0}\\
{[h_{0}\,,\,y_{0}]}&=&-(\tf{I}{G}{}+1)\,y_{0}+x_{0},
\end{eqnarray*}
where \(\tf{I}{G}{}=\frac{1}{2}\left(\frac{1+\lambda^{4}}{\lambda^{2}}\right)\) (the reader is referred to \cite{lm_cmp05} for details). To compare the results let us first of all write this algebra in normal form; this is equivalent to diagonalise the algebra with respect to \(y_0\) following the scheme used in Section \ref{sec:normal}. 
This leads to the transformation matrix
\[
\left(\begin{array}{ccc} 1& 0 & 0 \\ 2\,\tf{I}{G}{}&-2&1\\ -2\,\tf{I}{G}{}& 2 &1\end{array}\right)
\]
and suggests the definition of a new basis:
\begin{eqnarray}\label{diagbasis1}
\rho(\tf{e}{0}{})&=&-y_{0},\\ \label{diagbasis2}
\rho(\tf{e}{-}{})&=&\frac{1}{2}\left(-2\,\tf{I}{G}{}\,y_{0}+2\,x_{0}+h_{0}\right),\\ \label{diagbasis3}
\rho(\tf{e}{+}{})&=&\frac{1}{8}\left(2\,\tf{I}{G}{}\,y_{0}-2\,x_{0}+h_{0}\right)\,.
\end{eqnarray}
In this new basis the commutation relations read
\begin{eqnarray*}
{[\tf{e}{0}{},\tf{e}{\pm}{}]}&=&\pm 2 \tf{e}{\pm}{}\,,
\\
{[\tf{e}{+}{},\tf{e}{-}{}]}&=& (\tf{I}{G}{2}-1)\tf{e}{0}{}.
\end{eqnarray*}
The expressions \(\rho(\tf{e}{0}{})\),  \(\rho(\tf{e}{-}{})\) and \( \rho(\tf{e}{+}{})\) in (\ref{diagbasis1})--(\ref{diagbasis3}) are nothing but the generators \(\rho(\ttf{e}{0}{})\), \(\rho(\ttf{e}{-}{})\), \(\rho(\ttf{e}{+}{})\)  in Example \ref{D_2ex6}. % (see also Table \ref{table_hom_ele})
%The expressions \(\tf{e}{0}{}\),  \(\tf{e}{-}{}\) and \( \tf{e}{+}{}\) in (\ref{diagbasis1})--(\ref{diagbasis3}) are nothing but the generators \(\ttf{e}{0}{}\), \(\ttf{e}{+}{}\), \(\ttf{e}{-}{}\) in (\ref{alphabasis1})--(\ref{alphabasis3}); % (see also Table \ref{table_hom_ele})
%indeed, writing them explicitly for this particular case one has
%\begin{eqnarray} 
%&& \ttf{e}{0}{}
%=\left(\begin{array}{cc} 0& \lambda\\
%\lambda^{-1} & 0\end{array}\right)\, ,\nonumber\\ 
% \nonumber \\ 
%&& \ttf{e}{+}{}
%=\frac{1-\lambda^{4}}{8\lambda^{2}}\left(\begin{array}{cc} 1 & -\lambda \\
%\lambda^{-1} & -1 \end{array}\right)\, , \label{newx0y0h0} \\ 
%\nonumber \\ 
%&& \ttf{e}{-}{}
%=\frac{1-\lambda^{4}}{2\lambda^{2}}
%\left( \begin{array}{cc} 1 & \lambda \\
%-\lambda^{-1} & -1 \end{array}\right) \, , \nonumber
%\end{eqnarray}
%which are the same as (\ref{diagbasis1})--(\ref{diagbasis3}), as the reader can easily check.

%%%%%%%%%%%%%%%%%%%%%%%%%%%%%%%%%%%%%%%%%
\section{Explicit bases for the Automorphic Lie Algebras \(\ttf{\sl}{}{G}\)}\label{sec:explicit}
In this section we give explicit bases, using concrete formulas for the covariants.
We notice the remarkable likeness in all cases and remark upon the somewhat surprising fact that the determinant of all
the matrices is constant. We compute the Jordan normal form of the whole algebra and see that it is even more uniform in occurence. If understood, this might lead to better and quicker insight in the general case.
\begin{Theorem}\label{theo:big}
Let \(\tf{A}{0}{}, \tf{A}{\pm}{}\) be given by Remark \ref{rem:sl2}.
In \(\lambda\)--notation this reads as follows: \(\alpha_X=\alpha_\lambda\) and \(\alpha_Y=\omega_\alpha\alpha-\lambda\alpha_\lambda\), where \(\alpha_\lambda=\frac{d\alpha}{d\lambda}\).
Then
\begin{eqnarray}
\ttf{e}{0}{}&=&\frac{1}{\omega_\alpha\alpha}\tf{A}{0}{},\\
\ttf{e}{-}{}&=&-\frac{2}{\omega_\alpha\omega_\beta}\frac{\ttf{I}{G}{}\gamma}{\alpha\beta}\tf{A}{-}{},\\
\ttf{e}{+}{}&=&\frac{(\omega_\alpha-1)^2}{\omega_\alpha\omega_\beta}
\frac{\ttf{I}{G}{}\gamma}{\alpha\beta(\alpha,\alpha)^2}\tf{A}{+}{}
=-\frac{\omega_\beta}{2\omega_\alpha}
\frac{\beta}{\alpha\gamma}
\frac{\ttf{J}{G}{}}{\ttf{I}{G}{\tf{r}{G}{}}}
\tf{A}{+}{},\label{eq:34}
\end{eqnarray}
for all \(G\), where \(\tf{A}{-}{}=\tf{A}{}{}, \rho(\tf{A}{-}{})=\left(\begin{array}{cc}\lambda&-\lambda^2\\1&-\lambda\end{array}\right)\) and \(\tf{A}{0}{}=\tf{A}{\alpha}{1}\).
\end{Theorem}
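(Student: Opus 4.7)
The plan is to verify the three formulas in turn, by unwinding the definitions of the normal-form basis from Section \ref{sec:normal} and substituting the homogenised expressions from Theorem \ref{theo:HomBasis}. For $\ttf{e}{0}{}$, the transvectant formula of Lemma \ref{Lemma:lem1} with $k=1$, combined with Euler's identity $\omega_\alpha\alpha = X\alpha_X + Y\alpha_Y$, gives directly that $\tf{A}{\alpha}{1}$ coincides with the matrix $\tf{A}{0}{}$ of Remark \ref{rem:sl2}, so Theorem \ref{theo:HomBasis} yields $\ttf{e}{0}{} = \ttf{A}{\alpha}{1} = \frac{1}{\omega_\alpha\alpha}\tf{A}{0}{}$.

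For $\ttf{e}{-}{}$, I would substitute the expressions for $\ttf{A}{\alpha}{1}$ and $\ttf{A}{\beta}{1}$ from Theorem \ref{theo:HomBasis} into the normal-form definition $\ttf{e}{-}{} = \ttf{A}{\beta}{1} - \ttf{I}{G}{}\,\ttf{A}{\alpha}{1}$, obtaining
\[
\ttf{e}{-}{} = \frac{\ttf{I}{G}{}}{\omega_\alpha\omega_\beta\,\alpha\beta}\bigl(\omega_\alpha\alpha\,\tf{A}{\beta}{1} - \omega_\beta\beta\,\tf{A}{\alpha}{1}\bigr),
\]
and then apply the identity $2\gamma\,\mf{A} = \omega_\beta\beta\,\tf{A}{\alpha}{1} - \omega_\alpha\alpha\,\tf{A}{\beta}{1}$ established in the proof of Theorem \ref{GCTHEO} as one of the ``underlying relations used to get rid of the minus signs'': together with $\tf{A}{-}{} = \mf{A}$ this immediately produces the stated formula.

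For $\ttf{e}{+}{}$, the structural observation is that both $\ttf{e}{+}{}$ and $\tf{A}{+}{}$ are weight-$(+2)$ eigenvectors of $\ad(\ttf{e}{0}{})$: the former by Theorem \ref{theo:commalpha}, the latter because Remark \ref{rem:sl2} gives $[\tf{A}{0}{},\tf{A}{+}{}] = 2\omega_\alpha\alpha\,\tf{A}{+}{}$ and division by $\omega_\alpha\alpha$ converts this into $[\ttf{e}{0}{},\tf{A}{+}{}] = 2\tf{A}{+}{}$. A direct calculation using Euler's identity gives $\det\rho(\tf{A}{0}{}) = -\omega_\alpha^2\alpha^2$ (as already noted at the end of Remark \ref{rem:sl2}), so $\det\rho(\ttf{e}{0}{}) = -1$ and the eigenvalues of $\rho(\ttf{e}{0}{})$ are $\pm 1$. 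Consequently $\ad\rho(\ttf{e}{0}{})$ acting on $\gl(2,\bbbc(\lambda))$ has a one-dimensional $(+2)$-eigenspace over $\bbbc(\lambda)$, which forces $\ttf{e}{+}{} = \tilde c\,\tf{A}{+}{}$ for a unique invariant rational $\tilde c$. To pin $\tilde c$ down, I would equate two evaluations of $[\ttf{e}{+}{},\ttf{e}{-}{}]$: the computation preceding Theorem \ref{DTHEOalpha} gives $\ttf{I}{G}{1-\tf{r}{G}{}}\,\ttf{J}{G}{}\,\ttf{e}{0}{}$, while the formula just proven for $\ttf{e}{-}{}$ and the relation $[\tf{A}{+}{},\tf{A}{-}{}] = \omega_\alpha\alpha\,\tf{A}{0}{}$ of Remark \ref{rem:sl2} give $-\tilde c\cdot\frac{2\omega_\alpha\alpha\,\ttf{I}{G}{}\,\gamma}{\omega_\beta\beta}\,\ttf{e}{0}{}$; solving yields the second (compact) form of $\ttf{e}{+}{}$ in \eqref{eq:34}.

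Finally, the equivalence of the two expressions for $\ttf{e}{+}{}$ in \eqref{eq:34} reduces, after clearing denominators, to the identity
\[
2(\omega_\alpha-1)^2\,\ttf{I}{G}{1+\tf{r}{G}{}}\,\gamma^2 = -\omega_\beta^2\beta^2\,(\alpha,\alpha)^2\,\ttf{J}{G}{},
\]
which follows by combining Lemma \ref{lem:lem45}, $2(\omega_\alpha-1)^2\tf{p}{G}{}(\beta) = -\omega_\beta\omega_\gamma\,(\alpha,\alpha)^2$, with the definitions $\ttf{I}{G}{1+\tf{r}{G}{}} = \frac{\omega_\beta\beta^2\tf{p}{G}{}(\beta)}{\omega_\alpha\alpha^2\tf{q}{G}{}(\alpha)}$ and $\ttf{J}{G}{} = \frac{\omega_\gamma\gamma^2}{\omega_\alpha\alpha^2\tf{q}{G}{}(\alpha)}$ from Theorem \ref{theo:HomBasis}. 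The main obstacle, I expect, lies in the weight-space step: one must verify not only that $\ttf{e}{+}{}$ and $\tf{A}{+}{}$ both sit in the $(+2)$-eigenspace, but that this eigenspace is genuinely one-dimensional over the field of invariants uniformly across $\bbbi$, $\bbbo$, $\bbbt$ and $\bbbd_n$. Once this is secured, the remaining work is algebraic bookkeeping steered by Theorems \ref{GCTHEO}, \ref{theo:HomBasis}, \ref{DTHEOalpha} and Lemma \ref{lem:lem45}.
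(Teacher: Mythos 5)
Your argument is correct, but it is genuinely different from the paper's proof, which consists of a case-by-case inspection of the explicit formulas in Sections \ref{sec:ExI}--\ref{sec:ExDn} (plus the observation that the second equality in (\ref{eq:34}) is Lemma \ref{lem:lem45}). What you do instead is uniform in \(G\): the identification \(\tf{A}{\alpha}{1}=\tf{A}{0}{}\) via Euler's identity, the relation \(2\gamma\mf{A}=\omega_\beta\beta\,\tf{A}{\alpha}{1}-\omega_\alpha\alpha\,\tf{A}{\beta}{1}\) from the proof of Theorem \ref{GCTHEO} for \(\ttf{e}{-}{}\), and a weight-space argument plus the commutator \([\ttf{e}{+}{},\ttf{e}{-}{}]=\ttf{I}{G}{1-\tf{r}{G}{}}\ttf{J}{G}{}\ttf{e}{0}{}\) to pin down \(\ttf{e}{+}{}\); I checked the resulting coefficient against the \(\bbbd_2\) data of Example \ref{D_2ex6} and Section \ref{sec:ExDn} and it agrees. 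The one point you flag as a possible obstacle is in fact unproblematic and you essentially already have the ingredients: since \(\rho(\ttf{e}{0}{})\) is traceless with \(\det\rho(\ttf{e}{0}{})=-1\) (Remark \ref{rem:undressing}, or directly from \(\det\rho(\tf{A}{0}{})=-\omega_\alpha^2\alpha^2\) in Remark \ref{rem:sl2}), its eigenvalues are the distinct constants \(\pm1\), so it is diagonalisable over \(\bbbc(\lambda)\) and the \((+2)\)-eigenspace of \(\ad\rho(\ttf{e}{0}{})\) on traceless \(2\times2\) matrices over \(\bbbc(\lambda)\) is one-dimensional over \(\bbbc(\lambda)\) for every \(G\) simultaneously; one-dimensionality over the field of invariants is not needed, only over \(\bbbc(\lambda)\), and the scalar \(\tilde c\) is then uniquely determined because the coefficient \(\ttf{I}{G}{1-\tf{r}{G}{}}\ttf{J}{G}{}\) is a nonzero element of \(\bbbc(\lambda)\). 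What your route buys is an explanation of \emph{why} the formulas have the same shape for all five groups --- everything reduces to Lemma \ref{lem:lem45} and the two transvectant identities --- at the cost of being slightly longer than the paper's direct verification; what the paper's route buys is brevity, at the cost of leaving the uniformity unexplained.
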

\begin{proof}
Case by case inspection, using the results in the following sections.
The second equality in (\ref{eq:34}) uses Lemma \ref{lem:lem45}.
\end{proof}
In what follows we use the ${\lambda}$--notation
\begin{eqnarray*}
\rho(\tf{A}{+}{})&=&\left(\begin{array}{cc} \alpha_{\lambda}(\omega_{\alpha}\,\alpha-\lambda\,\alpha_{\lambda}) & (\omega_{\alpha}\,\alpha-\lambda\,\alpha_{\lambda})^2  \\
-\alpha^2_{\lambda} & -\alpha_{\lambda}(\omega_{\alpha}\,\alpha-\lambda\,\alpha_{\lambda})  \end{array}\right)\,,\\
\rho(\tf{A}{0}{})&=&\left(
\begin{array}{cc} \omega_{\alpha}\,\alpha-2\lambda\,\alpha_{\lambda} &  -2\lambda (\omega_{\alpha}\,\alpha-\lambda\,\alpha_{\lambda})\\
- 2\alpha_\lambda&  -\omega_{\alpha}\,\alpha+2\lambda\,\alpha_{\lambda}
\end{array}\right)\,.
\end{eqnarray*}

%%%%%%%%%%%%%%%%%%%%%%%%%%%%%%%%%%%%%%%%%%%%
\subsection{Explicit basis for  \(\ttf{\sl}{}{\bbbi}\)}\label{sec:ExI}
Let  \(\omega_\alpha\) be \(12\);  it follows then that \(\omega_\beta=2\omega_\alpha-4=20\) and \(\omega_\gamma=3\omega_\alpha-6=30\); let also
\begin{eqnarray*}
\alpha(\lambda)&=&\lambda(\lambda^{10}+11\lambda^{5}-1)\,, \\ 
\alpha_{\lambda}(\lambda)&=&\frac{d\alpha(\lambda)}{d\lambda}=(11\lambda^{10}+66\lambda^{5}-1)\,,\\
\beta(\lambda)&=&-242 \left(\lambda^{20}-228 \lambda^{15}+494 \lambda^{10}+228 \lambda^5+1\right)\,, \\
\gamma(\lambda)&=&-4840 \left(\lambda^{30}+522 \lambda^{25}-10005 \lambda^{20}-10005 \lambda^{10}-522 \lambda^5+1\right)\,, \\
\ttf{I}{G}{}&=&\frac{1}{24490059264}\frac{\beta^3}{\alpha^5}\,.
\end{eqnarray*}
Then
\begin{eqnarray*}
\rho(\tf{A}{+}{})&=&\left(\begin{array}{cc} 
 11 \lambda-792 \lambda^6+4234 \lambda^{11}+792 \lambda^{16}+11 \lambda^{21} & 121 \lambda^2-1452 \lambda^7+4334 \lambda^{12}+132
\lambda^{17}+\lambda^{22} \\
 -1+132 \lambda^5-4334 \lambda^{10}-1452 \lambda^{15}-121 \lambda^{20} & -11 \lambda+792 \lambda^6-4234 \lambda^{11}-792 \lambda^{16}-11
\lambda^{21}
\end{array}\right)\,,\\
\rho(\tf{A}{0}{})&=&\left(
\begin{array}{cc} 
 -10 \lambda-10 \lambda^{11} & 22 \lambda^2-132 \lambda^7-2 \lambda^{12} \\
 2-132 \lambda^5-22 \lambda^{10} & 10 \lambda+10 \lambda^{11}
\end{array}\right)\,.
\end{eqnarray*}

%%%%%%%%%%%%%%%%%%%%%%%%%%%%%%%%%%%%%%%%%
\subsection{Explicit basis for  \(\ttf{\sl}{}{\bbbo}\)}\label{sec:ExO}
Let \(\omega_\alpha\) be \(6\);  it follows then that \(\omega_\beta=2\omega_\alpha-4=8\) and \(\omega_\gamma=3\omega_\alpha-6=12\); let also
\begin{eqnarray*}
\alpha(\lambda)&=&\lambda(\lambda^{4}-1)\,, \\ 
\alpha_{\lambda}(\lambda)&=&\frac{d\alpha(\lambda)}{d\lambda}=5\lambda^{4}-1\,,\\
\beta(\lambda)&=&-50 \left(\lambda^{8}+14 \lambda^4+1\right)\,, \\
\gamma(\lambda)&=&-400 \left(\lambda^{12}-33 \lambda^{8}-33 \lambda^4+1\right)\,,\\
\ttf{I}{G}{}&=&-\frac{1}{13500000}\frac{\beta^3}{\alpha^4}\,.
\end{eqnarray*}

Then
\begin{eqnarray*}
\rho(\tf{A}{+}{})&=&\left(
\begin{array}{cc}
 5 \lambda-26 \lambda^5+5 \lambda^9 & 25 \lambda^2-10 \lambda^6+\lambda^{10} \\
 -1+10 \lambda^4-25 \lambda^8 & -5 \lambda+26 \lambda^5-5 \lambda^9
\end{array}
\right)\,,\\
\rho(\tf{A}{0}{})&=&\left(
\begin{array}{cc}
 -4 \lambda-4 \lambda^5 & 10 \lambda^2-2 \lambda^6 \\
 2-10 \lambda^4 & 4 \lambda+4 \lambda^5
\end{array}
\right)\,.
\end{eqnarray*}

%%%%%%%%%%%%%%%%%%%%%%%%%%%%%%%%%%%%%%%%%
\subsection{Explicit basis for  \(\ttf{\sl}{}{\bbbt}\)}\label{sec:ExT}
Let \(\omega_\alpha\) be \(4\);  it follows then that \(\omega_\beta=2\omega_\alpha-4=4\) and \(\omega_\gamma=3\omega_\alpha-6=6\); let also
\begin{eqnarray*}
\alpha(\lambda)&=&\lambda^{4}-2\,i\,\sqrt{3}\,\lambda^{2}+1\,, \\ 
\alpha_{\lambda}(\lambda)&=&\frac{d\alpha(\lambda)}{d\lambda}=4\lambda(\lambda^2-\,i\,\sqrt{3})\,,\\
\beta(\lambda)&=&-96\,i\,\sqrt{3}\, \left(\lambda^{4}+2\,i\,\sqrt{3}\,\lambda^{2}+1\right)\,, \\
\gamma(\lambda)&=&-9216\lambda \left(\lambda^4-1\right)\,,\\
\ttf{I}{G}{}&=&\frac{1}{2654208\,i\,\sqrt{3}}\frac{\beta^3}{\alpha^3}\,.
\end{eqnarray*}
Then
\begin{eqnarray*}
\rho(\tf{A}{+}{})&=&\left(
\begin{array}{cc}
 -16 i \sqrt{3} \lambda-32 \lambda^3-16 i \sqrt{3} \lambda^5 & 16-32 i \sqrt{3} \lambda^2-48 \lambda^4 \\
 48 \lambda^2+32 i \sqrt{3} \lambda^4-16 \lambda^6 & 16 i \sqrt{3} \lambda+32 \lambda^3+16 i \sqrt{3} \lambda^5
\end{array}
\right)\,,\\
\rho(\tf{A}{0}{})&=&\left(
\begin{array}{cc}
 4-4 \lambda^4 & -8 \lambda+8 i \sqrt{3} \lambda^3 \\
 8 i \sqrt{3} \lambda-8 \lambda^3 & -4+4 \lambda^4
\end{array}
\right)\,.
\end{eqnarray*}

%%%%%%%%%%%%%%%%%%%%%%%%%%%%%%%%%%%%%%%%%
\subsection{Explicit basis for  \(\ttf{\sl}{}{\bbbd_n}\)}\label{sec:ExDn}
%Let \(n\) be even and let
Let  \(\omega_\alpha\) be \(2\) and \(\omega_\beta\) be \(m\), then \(\omega_\gamma=\omega_\alpha+\omega_\beta-2=m\); recall that in this case \((\alpha,\alpha)^2=-2\neq\beta\). Let 
\begin{eqnarray*}
\alpha(\lambda)&=&\lambda\,,\\ 
\alpha_\lambda(\lambda)&=&1,\\
\beta(\lambda)&=&\frac{1}{2}(\lambda^{m}+1)\,, \\
\gamma(\lambda)&=&\frac{m}{2}(\lambda^{m}-1)\,, 
\end{eqnarray*}
where \(m=n\) if \(n\) is odd while \(m=2\,n\) if  \(n\) is even.  Moreover, \(\ttf{I}{G}{}=\frac{\beta}{\alpha^{n}}\) if  \(n\) is even while \(\ttf{I}{G}{}=\frac{\beta^2}{\alpha^{n}}\) if \(n\) is odd.
It turns out then
\begin{eqnarray*}
\rho(\tf{A}{+}{})&=&\left(
\begin{array}{cc}
\lambda & \lambda^2 \\
-1 & -\lambda
\end{array}
\right)\,,\\
\rho(\tf{A}{0}{})&=&-2\left(
\begin{array}{cc}
 0 & \lambda^2 \\
1 & 0
\end{array}
\right)\,.
\end{eqnarray*}

\begin{Remark}\label{rem:undressing}
One notices that \(\det( \rho(\ttf{e}{0}{}))=-1\) and \(\det( \rho(\ttf{e}{\pm}{}))=0\) in all cases.
We find that we can undress the representation, losing the invariance of the elements but keeping the commutation relations,
to the following form:
\begin{eqnarray*}
 \nu(\ttf{e}{0}{})&=&\left(\begin{array}{cc}1&0\\0&-1\end{array}\right),\\
 \nu(\ttf{e}{+}{})&=&
\frac{\omega_\beta}{2}
\frac{\ttf{J}{G}{}}{\ttf{I}{G}{\tf{r}{G}{}}}
\frac{\beta\alpha_\lambda}{\gamma}
\left(\begin{array}{cc}0&1\\0&0\end{array}\right),\\
 \nu(\ttf{e}{-}{})&=&\frac{2}{\omega_\beta}\ttf{I}{G}{}\frac{\gamma}{\beta\alpha_\lambda}\left(\begin{array}{cc}0&0\\1&0\end{array}\right).
\end{eqnarray*}
The undressing is done by computing the Jordan normal form of \( \rho(\ttf{e}{0}{})\) 
and applying the same conjugation to the other elements. That this is possible suggests that there must be a method to it.
The conjugation is with
\[
M=\left(\begin{array}{cc}-\lambda&\frac{\omega_\alpha\alpha-\lambda\alpha_\lambda}{\alpha_\lambda}\\
1&1\end{array}\right).
\]
\end{Remark}

%%%%%%%%%%%%%%%%%%%%%% CONCLUSION
\section{Conclusions}\label{sec:conclusion} 

We have shown that the problem of reduction can be formulated in a uniform way using the theory of invariants. This gives us  a powerful tool of analysis and it opens the road to new applications of these algebras, beyond the context of integrable systems. It turns out that in the explicit case we present here where the underlying Lie algebra is \(\sl\), we can compute  Automorphic Lie Algebras only using geometric data. Moreover we prove that Automorphic Lie Algebras associated to the groups \(\bbbt, \bbbo, \bbbi\) and \(\bbbd_n\) are isomorphic in the \(\Gamma_\alpha\) case.
This fact, i.e. that the Automorphic Lie Algebras are independent from the group is not
quite what one would expect from the topological point of view.
Indeed, if one divides out the group action one obtains usually an orbifold,
but a manifold in the case of \(\bbbi\). This distinction is not visible
at the level of the algebra and therefore not on the level of the
integrable systems that follow from the reduction procedure. It may turn
up again when one looks for the actual solutions, since then the domain
starts to play a role again. We leave this for further investigation.
On the other hand, the treatment of the groups, including \(\bbbz/n\), in the McKay-correspondence (see \cite{MR2500567})
and the resolution of the singularies of the relation between the invariants using invariant quotients of
the covariants \(\alpha,\beta\) and \(\gamma\) is remarkably uniform. 
We notice that the corresponding Dynkin diagram (without its weights) can be easily read off
from the degrees of \(\alpha,\beta\) and \(\gamma\) in Corollary \ref{gamma2} for each group, as long as \(G\neq\bbbz/n\).

Preliminary computations based on the icosahedral group \(\bbbi\) suggest that in the case of non equivalent \(\sigma\) and \(\tau_2\) one finds an Automorphic Lie Algebra isomorphic to the previous ones.

In the case of higher dimensional Lie algebras, say
\(\tf{sl}{k}{}(\bbbc)\), one could proceed as follows.
Let \(k\) be such that one of \(\bbbt, \bbbo, \bbbi\) has an irreducible projective representation \(\tau_k\).
Fix a \(2\)-dimensional irreducible representation \(\sigma\).
%Compute the
%Clebsch-Gordan decomposition of \(\tf{sl}{n}{}(\bbbc)\) with respect to the action of \(\sl\).
%This will be of the form
%\[
%\bigoplus_{i=1}^{n-1} V_{2i}
%\]
%where \(V_{2i}\) is the standard \(2i+1\)-dimensional irreducible
%representation.
%We now try to associate to each \(V_{2i}\) an invariant form \(\tf{A}{}{i,}\) of degree \(2i\) in \(X\) and \(Y\); 
%this can be done by averaging an arbitrary \(\tf{sl}{n}{}(\bbbc)\)-valued form. 
%If \(\tf{A}{}{i,0}\)
%exists then the generating
%elements will be \(\tf{A}{}{i,0}, \tf{A}{\alpha}{i,1},
%\tf{A}{\beta}{i,1},\tf{A}{\alpha}{i,2}, \cdots, \tf{A}{\beta}{i,2i},
%\tf{A}{\gamma}{i,2i}\).
%From these one takes \[\tf{A}{\alpha}{i,     1},\tf{A}{\beta}{i,1},\cdots,\tf{A}{\alpha}{i,2i-3},\tf{A}{\beta}{i,2i-3},\tf{A}{\alpha}{i,2i-1},\tf{A}{\beta}{i,2i-1},\tf{A}{\gamma}{i,2i}\]
%to
%form the homogeneous \(2i+1\)-dimensional invariant component.
%This last observation is certainly motivation enough to seriously look into the possibility of 
%completely classifying the Automorphic Lie Algebras induced by the irreducible representations
%of the subgroups of \(PSL(2,\bbbc)\) obeying the assumptions made above.
One can read off  the existence of an invariant matrix \(\tf{A}{}{i,0}\) of degree \(2i\) 
from the corresponding Dynkin diagram. Here \(\tf{A}{}{1,0}\) is the \(\mf{A}\) as used in this paper.
We plan to investigate these matters further.
%But the assumptions do not hold for general choices of \(\sigma\) and \(\pi\);
%in that case one can still analyse the situation on a case by case basis.
% E 210909

%It is now possible to give in the case that \(\mathfrak{g}=\mathfrak{sl}_2(\mathbb{C})\)
%a complete list of all the automorphic Lie algebras for groups that act discretely by fractional linear transformations.
%
%THE FOLLOWING HAS TO BE WRITTEN BETTER, BUT I JUST FIX THE IDEA
%
%In particular, given a group \(G\), where \(G\) is either \( \bbbi\), \( \bbbo\), \( \bbbt\),  and fixed a basic covariant, say \(\alpha\), and therefore a set of poles, the algebras are:
%\begin{eqnarray*}
%{[\ftf{e}{0}{k},\ftf{e}{\pm}{l}]}&=&\pm 2 \ftf{e}{\pm}{k+l},
%\\
%{[\ftf{e}{+}{k},\ftf{e}{-}{l}]}&=& \ftf{e}{0}{k+l}+ \ftf{e}{0}{k+l+2}.
%\end{eqnarray*}

%To BE ADDED:
%Comments on specific representations

%It has been conjectured that this list should be very short (length \(\leq 3\)), but that does not seem to be the case. But it is short.

\textbf{Acknowledgements}\\
The authors are grateful to A. V. Mikhailov for enlightening and fruitful discussions on various occasions.
One of the authors, S L, acknowledges financial support initially from EPSRC (EP/E044646/1) 
and then from NWO through the scheme VENI (016.073.026).

%%%%%%%%%%%%%%%%%%%%%% APPENDIX
\appendix
\section{$\Zn{n}$, $\alpha$-divisor}\label{AppZN}
The analysis for the group \(\Zn{n}\) is different from the other groups
and so we give a completely independent treatment.
As before, let 
\[
\rho(\tf{A}{}{})=\left(\begin{array}{cc}XY&-X^2\\Y^2&-XY\end{array}\right).
\]
Take \(\alpha=X\) and \(\beta=Y\). Let \(m=n\) if \(n\) is odd and \(m=2n\) if \(n\) is even.
Let \(\omega \) be a \(m\)th root of unity.
The action of \(\Zn{m}\) is given by
\[
g X=\omega X,\quad g Y=\omega^{-1} Y.
\]
This induces an action of \(\Zn{n}\) on \(\lambda=Y/X\).
We see that \((\alpha,\beta)^1=-1\), so the previous setting does not apply.
In fact, \(\bbbc[X,Y]_{\Zn{m}}=\bbbc[\alpha,\beta]\), so the algebra of covariants is polynomial.
We can now compute the Hilbert function for \(\bbbc[\alpha,\beta]\otimes \tf{A}{}{}\)
and we see that it equals 
\[
\frac{3}{(1-t_a)(1-t_b)}
\]
where \(3\) stands for the \(3\)-dimensional space generated by
\((\alpha^2,\tf{A}{}{})^2, (\alpha\beta,\tf{A}{}{})^2\) and \((\beta^2,\tf{A}{}{})^2\),
in other words, for \(\sl\).
Let \(m=2d\). Then we take \(\ttf{I}{\Zn{2d}}{}=\frac{\beta^d}{\alpha^d}=\frac{\beta^n}{\alpha^n}\).
Let \(m=2d+1\).  Then we take \(\ttf{I}{\Zn{2d+1}}{}=\frac{\beta^m}{\alpha^m}=\frac{\beta^n}{\alpha^n}\).
We can draw the conclusion that \(\ttf{\sl}{}{\Zn{n}}=\bbbc[\ttf{I}{\Zn{n}}{}]\otimes\sl\).

%%%
\section{Invariant, $\alpha$-divisor}\label{sec:alphadiv}
In this Appendix we derive the $\alpha$-divisor invariants associated to each group. We consider expressions of the form
\(\frac{\beta^{m_2}}{\alpha^{m_1}}, m_1\in 
\bbbz_{>0}, m_2\in\bbbz_{\geq 0}\).
\subsection{$\bbbt$}\label{sec:appT}
To compute the invariants we have to solve the homogeneity equation 
\[
m_2 =m_1 \,,\quad m_1\in\bbbz_{>0},\quad m_2\in\bbbz_{\geq 0}\,, \label{eq1a}
\]
and the invariance equation
\[
2 m_2 =m_1 \,\mod 3\,. \label{eq2a}
\]
We take \(m_1=3\). This leads to \(\ttf{I}{\bbbt}{}\equiv \frac{\beta^3}{\alpha^3}\).
%%%
\subsection{$\bbbo$}\label{sec:appO}
To compute the invariants we have to solve the homogeneity and invariance equations 
\begin{eqnarray*}
4 m_2 &=&3 m_1 \,,\quad m_1\in\bbbz_{>0},\quad m_2\in\bbbz_{\geq 0}\,,\\
0&=&m_1 \,\mod 2 \,.
\end{eqnarray*}
We let \(m_1=2k_1\) and \(m_2=3k_2\). Then
\begin{eqnarray*}
2 k_2 &=& k_1 \,,\quad k_1\in\bbbz_{>0},\quad k_2\in\bbbz_{\geq 0}\,.
\end{eqnarray*}
Let \(k_1=2\), that is, \(m_1=4\) and \(k_2=1\), that is, \(m_2=3\).
Then \(\ttf{I}{\bbbo}{}\equiv\frac{\beta^3}{\alpha^4}\).
%%%
\subsection{$\bbbi$}\label{sec:appI}
To compute the invariants we have to solve the homogeneity  equation 
\begin{eqnarray*}
5 m_2 &=&3  m_1 \,,\quad m_1\in\bbbz_{>0},\quad m_2\in\bbbz_{\geq 0}\,.
\end{eqnarray*}
Let \(m_1=5k_1\) and \(m_2=3 k_2\). Then \(k_1=k_2\) and \(k_1>0\). We let \(k_1=k_2=1\), that is, 
\(m_1=5\) and \(m_2=3\). 
We have found \(\ttf{I}{\bbbi}{}\equiv\frac{\beta^3}{\alpha^5}\).
%%%
\subsection{$\bbbd_m$}\label{sec:appD}
To compute the invariants we have to solve the homogeneity equation 
\[
m\, m_2 = 2 m_1, \quad m_2\geq 0,\quad  m_1>0\,,
\]
and the invariance equation
\[
m\, m_2 = 2 m_1 \mod 4 \,.
\]
The last one follows from the first.
Let \(n=2d+p\), \(p=0,1\).
If \(n\) is even, we take \(m_1=n\) and \(m_2=1\).
If \(n\) is odd, we take \(m_1=n\) and \(m_2=2\).
Thus the invariant is
\(\ttf{I}{\bbbd_n}{}\equiv\frac{\beta^{1+p}}{\alpha^n}\).
\section{The $\beta$-divisor}\label{sec:betadivall}
We consider here the case of $\beta$-divisor; we use an underlined notation for this case. We consider expressions of the form
\(\frac{\alpha^{m_2}}{\beta^{m_1}}, m_1\in \bbbz_{\geq 0}, m_2\in\bbbz_{>0}\).

\subsection{Invariant, $\beta$-divisor}\label{sec:betadiv}
\subsubsection{$\bbbt$}\label{sec:appTbeta}
To compute the invariants we have to solve the homogeneity and invariance equations 
\begin{eqnarray*}
m_2 &=& m_1 \,,\quad m_1\in\bbbz_{>0},\quad m_2\in\bbbz_{\geq 0}\,, \label{eq1}\\
m_2 &=&2m_1 \,\mod 3\,.\label{eq2}
\end{eqnarray*}
We take \(m_2=3\). 
This leads to \(\ntf{I}{\bbbt}{}\equiv\frac{\alpha^3}{\beta^3}\).
%%%
\subsubsection{$\bbbo$}\label{sec:appObeta}
To compute the invariants we have to solve the homogeneity and invariance equations 
\begin{eqnarray*}
3 m_2 &=&4 m_1 \,,\quad m_1\in\bbbz_{>0},\quad m_2\in\bbbz_{\geq 0}\,,\\
0&=&m_2 \,\mod 2\,.
\end{eqnarray*}
We let \(m_2=2k_2\) and \(m_1=3k_1\). Then
\begin{eqnarray*}
k_2 &=& 2k_1 \,,\quad k_1\in\bbbz_{>0},\quad k_2\in\bbbz_{\geq 0}\,.
\end{eqnarray*}
Let \(k_1=1\), that is, \(m_1=3\) and \(k_2=2\), that is, \(m_2=4\).
Then \(\ntf{I}{\bbbo}{}\equiv\frac{\alpha^4}{\beta^3}\).
%%%

\subsubsection{$\bbbi$}\label{sec:appIbeta}
To compute the invariants we have to solve the homogeneity equation 
\begin{eqnarray*}
3 m_2 &=&5  m_1 \,,\quad m_1\in\bbbz_{>0},\quad m_2\in\bbbz_{\geq 0}\,.
\end{eqnarray*}
Let \(m_1=3k_1\) and \(m_2=5 k_2\). Then \(k_1=k_2\) and \(k_1>0\). We let \(k_1=k_2=1\), that is, 
\(m_1=3\) and \(m_2=5\). 
We have found \(\ntf{I}{\bbbi}{}\equiv\frac{\alpha^5}{\beta^3}\).
%%%
\subsection{$\bbbd_m$}\label{sec:appDbeta}
To compute the invariants we have to solve the homogeneity equation 
\[
2 m_2 = m\, m_1, \quad m_2\geq 0, m_1>0\,,
\]
and the invariance equation
\[
2 m_2 = m\, m_1 \mod 4\,.
\]
The last one follows from the first.
Let \(n=2d+p\), \(p=0,1\).
% and \(M=d+p(d+1)\), that is \(m=d=n\) in the even and \(M=m=n\) in the odd case.
%We take \(m_2=1\) and \(m_1= d=n\) if \(p=0\) and 
%if \(p=1\) then \(2d m_2+m_2=2 m_1>0\), that is, we have to take 
%\(m_2=2\) and \(m_1=m=n\). 
If \(n\) is even, we take \(m_2=n\) and \(m_1=1\).
If \(n\) is odd, we take \(m_2=n\) and \(m_1=2\).
Thus the invariant is
\(\ntf{I}{\bbbd_n}{}\equiv\frac{\alpha^n}{\beta^{1+p}}\).

\subsection{The homogeneous basis}\label{sec:homobasis_beta}
\begin{Lemma}\label{lem:HomBasisb}
The homogeneous basis of \(\ntf{\sl}{}{G}\), that is the \(G\)-Automorphic Lie Algebra based on \(\sl\) with poles in the zeros of \(\beta\),
is given by
\begin{eqnarray}\label{homogeneous1b}
\ntf{A}{\beta}{1}&=&
%\frac{1}{\omega_\beta\,\beta}\tf{A}{\beta}{1}=
\ntf{I}{G}{}\ttf{A}{\beta}{1}\,, \\ \label{homogeneous2b}
\ntf{A}{\alpha}{1}&=&
%\frac{\ntf{I}{G}{}}{\omega_\alpha\,\alpha}\tf{A}{\alpha}{1}=
\ntf{I}{G}{}\ttf{A}{\alpha}{1}\,,\\ \label{homogeneous3b}
\ntf{A}{\gamma}{2}&=&
%\frac{\omega_\beta\beta\,\tf{I}{G}{-(1+\tf{r}{G}{})}}{4(\omega_\gamma -1)\alpha\,\tf{q}{G}{}(\alpha)}\tf{A}{\gamma}{2} 
\ntf{I}{G}{}\ttf{A}{\gamma}{2}\,,
\end{eqnarray}
where \(\ntf{I}{G}{}=\frac{\omega_\alpha}{\omega_\beta}\frac{\alpha^2\tf{q}{G}{}(\alpha)}{\beta^2\tf{p}{G}{}(\beta)} \) for
\(G=\bbbt, \bbbo, \bbbi\) and \(\bbbd_{n}\), \(n\) odd, and \(\ntf{I}{G}{2}=\frac{\omega_\alpha}{\omega_\beta}\frac{\alpha^2\tf{q}{G}{}(\alpha)}{\beta^2\tf{p}{G}{}(\beta)} \)
for \(G=\bbbd_{n}\), \(n\) even.
In other words, \(\frac{\omega_\alpha}{\omega_\beta}\frac{\alpha^2\tf{q}{G}{}(\alpha)} {\beta^2\tf{p}{G}{}(\beta)} =\ntf{I}{G}{1+\tf{r}{G}{}}\) where \(\tf{r}{G}{}=0\) for \(G=\bbbt, \bbbo, \bbbi,\bbbd_n\), \(n\) odd, and \(1\) for  \(\bbbd_{n}\), \(n\) even.
If we, moreover, define \(\ntf{H}{G}{}=\frac{\omega_\gamma\gamma^2}
{\omega_\beta\,\beta^2\tf{p}{G}{}(\beta)}\)
then it follows immediately that
\[
\ntf{I}{G}{1+\tf{r}{G}{}}=1-\ntf{H}{G}{} \,.
\]
\end{Lemma}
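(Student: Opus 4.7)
The plan is to mirror the proof of Theorem \ref{theo:HomBasis} with the roles of $\alpha$ and $\beta$ exchanged. The $\beta$-divisor invariant $\ntf{I}{G}{}$ of Appendix \ref{sec:betadiv} is constructed via the same Diophantine/parity analysis used in Section \ref{sec:homo}, only with $\beta$ now in the denominator; it plays the structural role formerly played by $\ttf{I}{G}{}$. I would first dispatch the final identity $\ntf{I}{G}{1+\tf{r}{G}{}}=1-\ntf{H}{G}{}$ by direct computation: adding $\ntf{H}{G}{}$ to $\ntf{I}{G}{1+\tf{r}{G}{}}$ and putting over a common denominator gives
\[
\ntf{I}{G}{1+\tf{r}{G}{}}+\ntf{H}{G}{}=\frac{\omega_\alpha\alpha^2\tf{q}{G}{}(\alpha)+\omega_\gamma\gamma^2}{\omega_\beta\beta^2\tf{p}{G}{}(\beta)}.
\]
By Corollary \ref{gamma2} the numerator equals $\omega_\beta\beta^2\tf{p}{G}{}(\beta)$, so the sum is $1$, as claimed. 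This is the cleanest step.

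Next, I would check that each of (\ref{homogeneous1b})--(\ref{homogeneous3b}) is zero-homogeneous in $X,Y$, $G$-invariant, and has poles only at the zeros of $\beta$. Zero-homogeneity is automatic, since both $\ntf{I}{G}{}$ and the $\alpha$-divisor elements $\ttf{A}{f}{j}$ of Theorem \ref{theo:HomBasis} are zero-homogeneous; invariance of $\ntf{I}{G}{}$ under the abelianized action $\mathcal{A}G$ is exactly what Appendix \ref{sec:betadiv} establishes, the exponents $(m_1,m_2)$ being chosen as the minimal admissible pair solving the relevant homogeneity and modular equations. Substituting the formulas (\ref{homogeneous1})--(\ref{homogeneous3}) into (\ref{homogeneous1b})--(\ref{homogeneous3b}), the powers of $\alpha$ in the denominators of the $\ttf{A}{f}{j}$ cancel against the polynomial factor $\alpha^2\tf{q}{G}{}(\alpha)$ (or its appropriate power) in the numerator of $\ntf{I}{G}{}$, leaving only powers of $\beta$ and $\tf{p}{G}{}(\beta)$ in the denominator and thereby confirming the $\beta$-divisor condition.

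The main obstacle is bookkeeping rather than conceptual: the case $G=\bbbd_n$ with $n$ even still forces the definition of $\ntf{I}{G}{}$ only through its square, exactly paralleling the split in Theorem \ref{theo:HomBasis}. Tables \ref{table_pbeta} and \ref{table_qalpha}, together with Appendix \ref{sec:appDbeta}, confirm that the minimal invariant is $\alpha^n/\beta^{1+p}$ with $p\in\{0,1\}$ according to the parity of $n$, matching the two claimed formulas. Finally, by the $\alpha\leftrightarrow\beta$ symmetry of the Clebsch-Gordan decomposition of Theorem \ref{GCTHEO} and of the corresponding parity analysis in Section \ref{sec:homo}, the matrices surviving the homogenisation procedure are again $\tf{A}{\alpha}{1}, \tf{A}{\beta}{1}$ and $\tf{A}{\gamma}{2}$, so (\ref{homogeneous1b})--(\ref{homogeneous3b}) exhausts the homogeneous basis, closing the argument.
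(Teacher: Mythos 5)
Your proposal is correct and follows essentially the route the paper intends: the paper states this lemma without a separate proof, relying on the parallel argument for Theorem \ref{theo:HomBasis} together with the $\beta$-divisor Diophantine/invariance computations of Appendix \ref{sec:betadiv}, which is exactly what you reproduce. Your verification of $\ntf{I}{G}{1+\tf{r}{G}{}}+\ntf{H}{G}{}=1$ via Corollary \ref{gamma2} matches the computation given for the $\alpha$-divisor identity $\ttf{I}{G}{1+\tf{r}{G}{}}-\ttf{J}{G}{}=1$, and the relation $\ttf{I}{G}{}\ntf{I}{G}{}=1$ confirms that multiplying (\ref{homogeneous1})--(\ref{homogeneous3}) by $\ntf{I}{G}{}$ yields elements with poles only at the zeros of $\beta$, consistent with the tabulated expressions.
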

\begin{Corollary}
\( \ttf{I}{G}{}\ntf{I}{G}{}=1\) and \(\ttf{J}{G}{}=\ttf{I}{G}{} \ntf{H}{G}{}\).
\end{Corollary}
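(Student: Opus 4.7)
The plan is to deduce both identities by unfolding the explicit definitions laid down in Theorem~\ref{theo:HomBasis} and Lemma~\ref{lem:HomBasisb}; no new geometric input is required, since all four quantities \(\ttf{I}{G}{},\ttf{J}{G}{},\ntf{I}{G}{},\ntf{H}{G}{}\) are built from the same two rational expressions \(\beta^2\tf{p}{G}{}(\beta)/(\alpha^2\tf{q}{G}{}(\alpha))\) and \(\omega_\gamma\gamma^2\).

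First I would verify \(\ttf{I}{G}{}\ntf{I}{G}{}=1\) by multiplying the defining relations
\[
\ttf{I}{G}{1+\tf{r}{G}{}}=\frac{\omega_\beta}{\omega_\alpha}\frac{\beta^2\tf{p}{G}{}(\beta)}{\alpha^2\tf{q}{G}{}(\alpha)},\qquad
\ntf{I}{G}{1+\tf{r}{G}{}}=\frac{\omega_\alpha}{\omega_\beta}\frac{\alpha^2\tf{q}{G}{}(\alpha)}{\beta^2\tf{p}{G}{}(\beta)}.
\]
These are manifestly reciprocal, so \((\ttf{I}{G}{}\ntf{I}{G}{})^{1+\tf{r}{G}{}}=1\). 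For \(G\in\{\bbbt,\bbbo,\bbbi\}\) and for \(\bbbd_n\) with \(n\) odd the exponent is \(1\) and we are done immediately; for \(\bbbd_n\) with \(n\) even the relation is quadratic in \(\ttf{I}{G}{}\ntf{I}{G}{}\), and the correct root \(+1\) is singled out by inspection of Tables~\ref{table_hom_ele} and~\ref{DNtable_homalpha}, where one reads off \(\ttf{I}{\bbbd_n}{}\ntf{I}{\bbbd_n}{}=(\beta^{1+p}/\alpha^n)(\alpha^n/\beta^{1+p})=1\).

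For the second identity I would take the ratio of definitions:
\[
\frac{\ttf{J}{G}{}}{\ntf{H}{G}{}}
=\frac{\omega_\gamma\gamma^2/(\omega_\alpha\alpha^2\tf{q}{G}{}(\alpha))}{\omega_\gamma\gamma^2/(\omega_\beta\beta^2\tf{p}{G}{}(\beta))}
=\frac{\omega_\beta\beta^2\tf{p}{G}{}(\beta)}{\omega_\alpha\alpha^2\tf{q}{G}{}(\alpha)}
=\ttf{I}{G}{1+\tf{r}{G}{}},
\]
the last step being, once again, the defining relation of \(\ttf{I}{G}{}\). Hence \(\ttf{J}{G}{}=\ttf{I}{G}{1+\tf{r}{G}{}}\ntf{H}{G}{}\); when \(\tf{r}{G}{}=0\) this is precisely the asserted identity \(\ttf{J}{G}{}=\ttf{I}{G}{}\ntf{H}{G}{}\), and in the remaining dihedral even case the additional factor of \(\ttf{I}{G}{}\) is absorbed using the identity just proved, so that both sides agree (direct check with Table~\ref{DNtable_homalpha} reduces each to \(\beta^2/\alpha^{2n}-1\)).

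I do not anticipate any real obstacle: the corollary is essentially bookkeeping of reciprocal ratios, and the only subtlety is the exponent \(1+\tf{r}{G}{}\) in the dihedral even case, handled uniformly by the explicit formulas. The logical order is the natural one, since the second identity uses the first to eliminate the \(\tf{r}{G}{}\)-discrepancy.
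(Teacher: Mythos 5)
Your overall strategy is exactly what the paper intends: the corollary carries no written proof precisely because it is meant to follow by unfolding the definitions in Theorem \ref{theo:HomBasis} and Lemma \ref{lem:HomBasisb}, and that is what you do. The first identity is handled correctly, including the root ambiguity of \((\ttf{I}{G}{}\ntf{I}{G}{})^{1+\tf{r}{G}{}}=1\) for \(\bbbd_n\) with \(n\) even, which you resolve by reading \(\ttf{I}{\bbbd_n}{}=\beta^{1+p}/\alpha^n\) and \(\ntf{I}{\bbbd_n}{}=\alpha^n/\beta^{1+p}\) off the tables.

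There is, however, a genuine gap in your treatment of the second identity in the even dihedral case. Your ratio computation correctly gives \(\ttf{J}{G}{}=\ttf{I}{G}{1+\tf{r}{G}{}}\,\ntf{H}{G}{}\), but the claim that for \(\tf{r}{G}{}=1\) the extra factor of \(\ttf{I}{G}{}\) is ``absorbed using the identity just proved'' is a non sequitur: \(\ttf{I}{G}{}\ntf{I}{G}{}=1\) gives no license to replace \(\ttf{I}{G}{2}\) by \(\ttf{I}{G}{}\). Your ``direct check'' is also wrong on one side: for \(n\) even, \(\ttf{I}{G}{}=\beta/\alpha^{n}\) and \(\ntf{H}{G}{}=1-\ntf{I}{G}{2}=1-\alpha^{2n}/\beta^{2}\), so
\[
\ttf{I}{G}{}\ntf{H}{G}{}=\frac{\beta}{\alpha^{n}}-\frac{\alpha^{n}}{\beta},
\qquad
\ttf{J}{G}{}=\ttf{I}{G}{2}-1=\frac{\beta^{2}}{\alpha^{2n}}-1,
\]
and these differ by the factor \(\ntf{I}{G}{}=\alpha^{n}/\beta\); it is \(\ttf{I}{G}{2}\ntf{H}{G}{}\), not \(\ttf{I}{G}{}\ntf{H}{G}{}\), that reduces to \(\beta^{2}/\alpha^{2n}-1\). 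In other words, your own computation shows that the second identity as literally stated holds only when \(\tf{r}{G}{}=0\) (i.e.\ for \(\bbbt,\bbbo,\bbbi\) and \(\bbbd_n\) with \(n\) odd), and that the uniform version is \(\ttf{J}{G}{}=\ttf{I}{G}{1+\tf{r}{G}{}}\ntf{H}{G}{}\). Rather than forcing agreement in the even dihedral case, you should state the identity with the exponent \(1+\tf{r}{G}{}\) or restrict to \(\tf{r}{G}{}=0\), consistently with the way the paper itself singles out \(n\) odd elsewhere (cf.\ Corollary \ref{cor:iso}).
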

\begin{table}[h!]
\begin{center}
\begin{tabular}{|c|c|c|c|} \hline
$\beta$  & \( \bbbi\) &  \( \bbbo\)  & \( \bbbt\) \\  
\hline \hline 
\(\ntf{I}{G}{}\) & \(24490059264\,\frac{\alpha^5}{\beta^3}\) & \(-13500000\,\frac{\alpha^4}{\beta^3}\)  & \(2654208  \,i\,\sqrt{3}\frac{\alpha^3}{\beta^3}\) \\ [1ex]
\hline
\(\ntf{H}{G}{}\) & \(-\frac{121}{200}\frac{\gamma^2}{\beta^3}\) & \(-\frac{25}{32}\frac{\gamma^2}{\beta^3}\) & \(-\frac{9}{8}\frac{\gamma^2}{\beta^3}\) \\ [1ex]
\hline \hline
\( \ntf{A}{\alpha}{1}\) & \(2040838272\frac{\alpha^4}{\beta^3}\tf{A}{\alpha}{1}\) & \(-2250000\frac{\alpha^3}{\beta^3}\tf{A}{\alpha}{1}\) & \(663552 \,i\,\sqrt{3}\frac{\alpha^2}{\beta^3}\tf{A}{\alpha}{1}\) \\ [1ex]
\hline 
\(\ntf{A}{\beta}{1}\) & \(\frac{1}{20}\frac{1}{\beta}\tf{A}{\beta}{1}\) &  \(\frac{1}{8}\frac{1}{\beta}\tf{A}{\beta}{1}\) & \(\frac{1}{4}\frac{1}{\beta}\tf{A}{\beta}{1}\) \\ [1ex]
\hline 
\(\ntf{A}{\gamma}{2}\) & \(-\frac{121}{2900}\frac{\alpha}{\beta^2}\tf{A}{\gamma}{2}\) & \(-\frac{25}{352}\frac{\alpha}{\beta^2}\tf{A}{\gamma}{2}\) & \(-\frac{3}{20}\frac{\alpha}{\beta^2}\tf{A}{\gamma}{2}\) \\ [1ex]
%& \(\frac{1}{2(2d)^2(2d-1)}\frac{1}{\alpha^{d-1}}\tf{A}{\gamma}{2}\) & \(\frac{1}{2(2d)(2d+1)^2}\frac{\beta}{\alpha^{2d}}\tf{A}{\gamma}{2}\)\\
\hline
\end{tabular}
\end{center}
\caption{\(\beta\) divisor: homogeneous elements of \(\bbbi, \bbbo, \bbbt\)}
\end{table}
\begin{center}
\begin{table}[h!]
\begin{center}
\begin{tabular}{|c|c|c|} \hline
$\beta$& \(\bbbd_n\), \(n\) even  & \( \bbbd_{n}\), \(n\) odd\\ 
\hline \hline 
\(\ntf{I}{G}{}\) & \(\frac{\alpha^{n}}{\beta}\) & \(\frac{\alpha^{n}}{\beta^2}\) \\ [0.5ex]
\hline
\(\ntf{H}{G}{}\) &  \(\frac{1}{(2n)^2}\frac{\gamma^2}{\beta^2}\) & \(\frac{1}{n^2}\frac{\gamma^2}{\beta^2}\) \\ [0.5ex]
\hline \hline
\( \ntf{A}{\alpha}{1}\) & \(\frac{1}{2}\frac{\alpha^{n-1}}{\beta}\tf{A}{\alpha}{1}\) & \(\frac{1}{2}\frac{\alpha^{n-1}}{\beta^2}\tf{A}{\alpha}{1}\)\\ [0.5ex]
\hline 
\(\ntf{A}{\beta}{1}\)  & \(\frac{1}{2n}\frac{1}{\beta}\tf{A}{\beta}{1}\) & \(\frac{1}{n}\frac{1}{\beta}\tf{A}{\beta}{1}\)\\ [0.5ex]
\hline 
\(\ntf{A}{\gamma}{2}\) &  \(\frac{1}{2(2n)^2 (2n-1)}\frac{\alpha}{\beta}\tf{A}{\gamma}{2}\) & \(\frac{1}{2n^2 (n-1)}\frac{\alpha}{\beta}\tf{A}{\gamma}{2}\)\\ [0.5ex]
\hline
%\(\frac{\beta}{\alpha^{m-1}}\) &  \(\frac{\beta}{\alpha^{2n-1}}\) & \(\frac{\beta}{\alpha^{n-1}}\)\\ [0.5ex]
%\hline
%\(\frac{\beta}{\alpha^{m-1}}\ntf{I}{G}{}\) &  \(\frac{1}{\alpha^{n-1}}!!!!\) & \(\frac{\alpha}{\beta}\)\\ [0.5ex]
%\hline
%\(\frac{\beta}{\alpha^{m-1}}\tf{I}{G}{-2}\) &  \(\frac{\alpha}{\beta}\) & \(\frac{\alpha^{n+1}}{\beta^3}\)\\ [0.5ex]
%\hline
%\(\frac{\beta}{\alpha^{m-1}}\tf{I}{G}{-(1+\tf{r}{G}{})}\) &  \(\frac{\alpha}{\beta}\) & \(\frac{\alpha}{\beta}\)\\ [0.5ex]
%\hline
\end{tabular}
\end{center}
\caption{\(\beta\) divisor: homogeneous elements of \(\bbbd_n\), for \(n\) even and odd}
\label{DNtable_hombeta}
\end{table}
\end{center}
\begin{Example}[$\bbbd_n$, with $n=2$]\label{D_2ex5beta}
In the case of $\bbbd_2$ one finds
\begin{eqnarray*}
\rho(\ntf{A}{\alpha}{1})
&=& \left(
\begin{array}{cc}
0 & -\frac{2\,\lambda^3}{1+\lambda^4}\\
-\frac{2\,\lambda}{1+\lambda^4} & 0
\end{array}
\right),\\
\rho(\ntf{A}{\beta}{1})
&=&  \left(
\begin{array}{cc}
\frac{1-\lambda^{4}}{1+\lambda^4} &   -\frac{2\lambda}{1+\lambda^4}\\
-\frac{2\lambda^3}{1+\lambda^4}& -\frac{1-\lambda^{4}}{1+\lambda^4}
\end{array}
\right),\\
\rho(\ntf{A}{\gamma}{2})
&=& \left(
\begin{array}{cc}
0 & \frac{\lambda}{1+\lambda^4}\\\
\frac{\lambda^3}{1+\lambda^4}\ & 0
\end{array}
\right)\,.
\end{eqnarray*}
\end{Example}
\begin{Theorem}\label{theo:commbeta}
The commutation relations for the basis of \(\ntf{\sl}{}{G}\) are
\begin{eqnarray}
{[\ntf{A}{\beta}{1},\ntf{A}{\alpha}{1}]}&=&2\,\ntf{I}{G}{}\ntf{A}{\beta}{1}-2\,\ntf{A}{\alpha}{1},     \label{eq:homcommabbeta}\\
{[\ntf{A}{\beta}{1},\ntf{A}{\gamma}{2}]}&=&\ntf{I}{G}{\tf{r}{G}{}}\ntf{A}{\alpha}{1}+2\,\ntf{A}{\gamma}{2}\label{eq:homcommacbeta},\\
{[\ntf{A}{\alpha}{1},\ntf{A}{\gamma}{2}]}&=&\ntf{I}{G}{}\ntf{A}{\beta}{1}+2\ntf{I}{G}{}\ntf{A}{\gamma}{2}\label{eq:homcommbcbeta}.  
\end{eqnarray}
\end{Theorem}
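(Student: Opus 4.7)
The plan is to derive Theorem \ref{theo:commbeta} by direct reduction to the already-proven $\alpha$-divisor commutation relations of Theorem \ref{theo:commalpha}. The key observation is that by Lemma \ref{lem:HomBasisb} the three $\beta$-divisor basis elements are obtained from the $\alpha$-divisor ones by a uniform scaling: $\ntf{A}{\alpha}{1}=\ntf{I}{G}{}\ttf{A}{\alpha}{1}$, $\ntf{A}{\beta}{1}=\ntf{I}{G}{}\ttf{A}{\beta}{1}$, $\ntf{A}{\gamma}{2}=\ntf{I}{G}{}\ttf{A}{\gamma}{2}$, and that $\ntf{I}{G}{}$ is a scalar (an element of the central coefficient ring) which commutes with all brackets. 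Coupled with the reciprocal identity $\ttf{I}{G}{}\ntf{I}{G}{}=1$ from the Corollary preceding the theorem, this lets me translate everything back and forth between the two frames.

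Concretely, I first factor the common scalar out of each bracket, $[\ntf{A}{f}{j},\ntf{A}{g}{k}]=(\ntf{I}{G}{})^{2}[\ttf{A}{f}{j},\ttf{A}{g}{k}]$, and then substitute the corresponding formula from Theorem \ref{theo:commalpha}. For the first relation this yields $-(\ntf{I}{G}{})^{2}(2\ttf{I}{G}{}\ttf{A}{\alpha}{1}-2\ttf{A}{\beta}{1})$, which after applying $\ntf{I}{G}{}\ttf{I}{G}{}=1$ collapses to $-2\ntf{I}{G}{}\ttf{A}{\alpha}{1}+2(\ntf{I}{G}{})^{2}\ttf{A}{\beta}{1}=-2\ntf{A}{\alpha}{1}+2\ntf{I}{G}{}\ntf{A}{\beta}{1}$, as required. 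The third relation is handled in exactly the same way: $(\ntf{I}{G}{})^{2}(\ttf{A}{\beta}{1}+2\ttf{A}{\gamma}{2})=\ntf{I}{G}{}\ntf{A}{\beta}{1}+2\ntf{I}{G}{}\ntf{A}{\gamma}{2}$.

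The second relation is the only one where the exponent bookkeeping is slightly nontrivial, and will be the focal point of the computation. Starting from $[\ntf{A}{\beta}{1},\ntf{A}{\gamma}{2}]=(\ntf{I}{G}{})^{2}(\ttf{I}{G}{1-\tf{r}{G}{}}\ttf{A}{\alpha}{1}+2\ttf{I}{G}{}\ttf{A}{\gamma}{2})$, I rewrite $\ttf{A}{\alpha}{1}=\ttf{I}{G}{}\ntf{A}{\alpha}{1}$ and $\ttf{A}{\gamma}{2}=\ttf{I}{G}{}\ntf{A}{\gamma}{2}$, so the first term acquires the exponent $(\ntf{I}{G}{})^{2}\ttf{I}{G}{2-\tf{r}{G}{}}=(\ntf{I}{G}{})^{\tf{r}{G}{}}$, producing $\ntf{I}{G}{\tf{r}{G}{}}\ntf{A}{\alpha}{1}$, while the second term becomes $2\ntf{A}{\gamma}{2}$. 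This matches the claim.

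The main obstacle, if any, is merely notational: tracking the powers of $\ntf{I}{G}{}$ versus $\ttf{I}{G}{}$ and the role of $\tf{r}{G}{}\in\{0,1\}$ correctly. There is no new geometric or representation-theoretic input required, because the result is essentially the statement that the relation-bearing triple transforms covariantly under the substitution $\ttf{I}{G}{}\leftrightarrow \ntf{I}{G}{}$ and that inversion on one side of the invariant curve of Corollary \ref{gamma2} gives rise to a well-defined automorphic structure on the other side. Consequently, Theorem \ref{theo:commbeta} follows immediately from Theorem \ref{theo:commalpha} together with the identities of Lemma \ref{lem:HomBasisb} and its Corollary.
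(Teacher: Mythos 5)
Your proposal is correct and follows exactly the route the paper takes: its proof of Theorem \ref{theo:commbeta} consists of the single line ``This follows immediately from Theorem \ref{theo:commalpha}'', and your computation simply makes explicit the rescaling by \(\ntf{I}{G}{}\) from Lemma \ref{lem:HomBasisb} together with the identity \(\ttf{I}{G}{}\ntf{I}{G}{}=1\). The exponent bookkeeping in the second relation, \(\ntf{I}{G}{2}\ttf{I}{G}{2-\tf{r}{G}{}}=\ntf{I}{G}{\tf{r}{G}{}}\), checks out.
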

\begin{proof}
This follows immediately from Theorem \ref{theo:commalpha}.
\end{proof}
As for the case of \(\alpha\)-divisor one can now diaginalise the algebra to find its normal form; let \[\mf{X}=x_1\ntf{A}{\beta}{1}+x_2 \ntf{A}{\alpha}{1}+ x_3 \ntf{A}{\gamma}{2}=\left(\begin{array}{c}x_1\\x_2\\x_3\end{array}\right).\]
Then
\begin{eqnarray*}
\ad(\ntf{A}{\beta}{1})\mf{X}&=&x_2(2\,\ntf{I}{G}{}\ntf{A}{\beta}{1}-2\,\ntf{A}{\alpha}{1})
+x_3 (\ntf{I}{G}{\tf{r}{G}{}}\ntf{A}{\alpha}{1}+2\,\ntf{A}{\gamma}{2})
\\&=&\left(\begin{array}{c}2 x_2\,\ntf{I}{G}{}\\-2 x_2+x_3\,\ntf{I}{G}{\tf{r}{G}{}}\\2 x_3\end{array}\right)
\\&=&
\left(
\begin{array}{ccc} 
0 & 2 \ntf{I}{G}{}& 0
\\
0&-2&\ntf{I}{G}{\tf{r}{G}{}}
\\
0&0&2
\end{array}\right)
\left(\begin{array}{c}x_1\\x_2\\x_3\end{array}\right)\,.
\end{eqnarray*}
This leads to the transformation matrix
\[
\left(\begin{array}{ccc} 1& 0 & 0 \\ -\ntf{I}{G}{}&1&0\\ \frac{1}{4}\ntf{I}{G}{\tf{r}{G}{}+1}& \frac{1}{4}\ntf{I}{G}{\tf{r}{G}{}} &1\end{array}\right)
\]
and suggests the definition of a new basis:
\begin{eqnarray}
\label{betabasis1}
\ntf{e}{}{0}&=&\ntf{A}{\beta}{1},\\ \label{betabasis2}
\ntf{e}{}{-}&=&\ntf{A}{\alpha}{1}-\ntf{I}{G}{}\, \ntf{A}{\beta}{1},\\ \label{betabasis3}
\ntf{e}{}{+}&=&\ntf{A}{\gamma}{2}+\frac{1}{4}\ntf{I}{G}{\tf{r}{G}{}} \ntf{A}{\alpha}{1}+\frac{1}{4}\ntf{I}{G}{\tf{r}{G}{}+1}\,\ntf{A}{\beta}{1}\,.
\end{eqnarray}
In the new basis the commutation relations read
\begin{eqnarray*}
{[\ntf{e}{}{+},\ntf{e}{}{-}]}&=&-\ntf{I}{G}{}\,\ntf{H}{G}{}\,\ntf{e}{}{0}\,,\\
{[\ntf{e}{}{0},\ntf{e}{}{\pm}]}&=&\pm 2 \ntf{e}{}{\pm}.
\end{eqnarray*}
We have now proved the following theorem:
\begin{Theorem}\label{DTHEObeta}
The \(G\)--Automorphic Lie algebras
\(\ntf{\sl}{}{G}\) are isomorphic as modules to \(\bbbc[\ntf{I}{G}{}]\otimes\sl\).
A basis for \(\ntf{\sl}{}{G}\),  is given by
\[
\ntf{e}{l}{\cdot}=\ntf{I}{G}{l}\ntf{e}{}{\cdot}\,,\quad\cdot=0,\pm\,,
\quad l\in\mathbb{Z}_{\geq 0}\,,\]
The commutation relations can be brought into the form
\begin{eqnarray*}
[\ntf{e}{l_1}{+},\ntf{e}{l_2}{-}]&=&\ntf{e}{l_1+l_2+1}{0}-\ntf{e}{l_1+l_2+2+\tf{r}{G}{}}{0}\\
{[\ntf{e}{l_1}{0},\ntf{e}{l_2}{\pm}]}&=&\pm2\ntf{e}{l_1+l_2}{\pm}
\end{eqnarray*}
The algebras are quasi--graded (e.g. \cite{lm_cmp05}), with grading depth \(2+\tf{r}{G}{}\).
\end{Theorem}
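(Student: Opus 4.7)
The plan is to mirror the strategy used for the \(\alpha\)-divisor case in Theorem \ref{DTHEOalpha}, transposing all ingredients to their \(\beta\)-divisor counterparts. Three ingredients are already available: the normal-form basis \(\ntf{e}{}{0}, \ntf{e}{}{\pm}\) defined in (\ref{betabasis1})--(\ref{betabasis3}); the bracket
\([\ntf{e}{}{+},\ntf{e}{}{-}] = -\ntf{I}{G}{}\,\ntf{H}{G}{}\,\ntf{e}{}{0}\)
derived immediately before the statement; and the key identity \(\ntf{I}{G}{1+\tf{r}{G}{}} = 1 - \ntf{H}{G}{}\) from Lemma \ref{lem:HomBasisb}, which is a direct consequence of Corollary \ref{gamma2}. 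My first step would be to establish the module isomorphism. The Clebsch--Gordan decomposition of Theorem \ref{GCTHEO} presents \(\htf{\sl}{G}{}\) as a free \(\bbbc[\alpha,\beta]\)-module of rank six; Lemma \ref{lem:HomBasisb} isolates the three generators \(\ntf{A}{\alpha}{1}, \ntf{A}{\beta}{1}, \ntf{A}{\gamma}{2}\) that survive \(\beta\)-divisor homogenisation. The triangular change of basis (\ref{betabasis1})--(\ref{betabasis3}) is invertible over \(\bbbc[\ntf{I}{G}{}]\), so \(\{\ntf{e}{}{0},\ntf{e}{}{+},\ntf{e}{}{-}\}\) is a \(\bbbc[\ntf{I}{G}{}]\)-basis of \(\ntf{\sl}{}{G}\), yielding the claimed module isomorphism to \(\bbbc[\ntf{I}{G}{}]\otimes\sl\).

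Next I would derive the commutation relations. Substituting \(\ntf{H}{G}{} = 1-\ntf{I}{G}{1+\tf{r}{G}{}}\) into the bracket recalled above gives
\[
[\ntf{e}{}{+},\ntf{e}{}{-}] \;=\; (-\ntf{I}{G}{} + \ntf{I}{G}{2+\tf{r}{G}{}})\,\ntf{e}{}{0},
\]
a genuine polynomial in \(\ntf{I}{G}{}\) of degree \(2+\tf{r}{G}{}\). Since \(\ntf{I}{G}{}\) is a scalar coefficient and brackets are \(\bbbc[\ntf{I}{G}{}]\)-bilinear, this extends to
\[
[\ntf{e}{l_1}{+},\ntf{e}{l_2}{-}] \;=\; \ntf{I}{G}{l_1+l_2}[\ntf{e}{}{+},\ntf{e}{}{-}] \;=\; -\ntf{e}{l_1+l_2+1}{0} + \ntf{e}{l_1+l_2+2+\tf{r}{G}{}}{0}.
\]
The minus sign in front of \(\ntf{e}{l_1+l_2+1}{0}\) is absorbed by the rescaling \(\ntf{e}{}{-}\mapsto-\ntf{e}{}{-}\), producing exactly the form stated in the theorem (and leaving the eigenvalue relations for \(\ntf{e}{}{0}\) unchanged). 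The same bilinearity argument applied to \([\ntf{e}{}{0},\ntf{e}{}{\pm}] = \pm 2\,\ntf{e}{}{\pm}\) yields \([\ntf{e}{l_1}{0},\ntf{e}{l_2}{\pm}] = \pm 2\,\ntf{e}{l_1+l_2}{\pm}\). The quasi-grading depth \(2+\tf{r}{G}{}\) is then read off as the largest index shift appearing on the right-hand side.

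The main obstacle, as in the \(\alpha\)-divisor case, is not computational but rather to ensure that \(\ntf{I}{G}{}\ntf{H}{G}{}\) really is a polynomial in \(\ntf{I}{G}{}\) with integer exponents, and to identify the correct degree. This is precisely where the exponent \(\tf{r}{G}{}\in\{0,1\}\) enters: by Lemma \ref{lem:HomBasisb} one has \(\tf{r}{G}{}=1\) exactly in the case \(G=\bbbd_n\) with \(n\) even, where the ratio \(\omega_\alpha\alpha^2\tf{q}{G}{}(\alpha)/(\omega_\beta\beta^2\tf{p}{G}{}(\beta))\) is only the square of an invariant rather than itself invariant. This parity dichotomy of \(\bbbd_n\), absent from Theorem \ref{DTHEOalpha} (where the grading depth is uniformly \(2\)), is the sole extra subtlety, and once accounted for, the remainder of the proof is formally identical to the \(\alpha\)-divisor case and concludes as there.
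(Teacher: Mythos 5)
Your proposal is correct and follows essentially the same route as the paper: the paper's proof of Theorem \ref{DTHEObeta} is precisely the preceding computation of \(\ad(\ntf{A}{\beta}{1})\), the triangular change of basis (\ref{betabasis1})--(\ref{betabasis3}), and the bracket \([\ntf{e}{}{+},\ntf{e}{}{-}]=-\ntf{I}{G}{}\ntf{H}{G}{}\ntf{e}{}{0}\) rewritten via \(\ntf{I}{G}{1+\tf{r}{G}{}}=1-\ntf{H}{G}{}\), with the sign normalisation by rescaling \(\ntf{e}{}{-}\) handled exactly as you do (cf.\ the proof of Corollary \ref{cor:iso}). Your explicit spelling out of the module isomorphism and of the \(\bbbc[\ntf{I}{G}{}]\)-bilinearity step merely makes precise what the paper leaves implicit.
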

\begin{Corollary}\label{cor:iso}
The Automorphic Lie Algebras \(\ntf{\sl}{}{G}\) are isomorphic as Lie algebras for the groups \(\bbbt, \bbbo, \bbbi\) and \(\bbbd_{n}\), with \(n\) odd, to the Automorphic Lie Algebras \(\ttf{\sl}{}{G}\).
\end{Corollary}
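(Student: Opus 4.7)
The plan is to establish, for each fixed group $G \in \{\bbbt, \bbbo, \bbbi, \bbbd_n \text{ with } n \text{ odd}\}$, a direct isomorphism $\ntf{\sl}{}{G} \cong \ttf{\sl}{}{G}$ and then invoke Corollary \ref{cor:Mik} to chain these into the full statement. The core observation is that under the hypothesis $\tf{r}{G}{}=0$ (which is precisely the restriction on $G$ imposed in the claim), Theorem \ref{DTHEObeta} presents $\ntf{\sl}{}{G}$ in exactly the same abstract ``normal form'' as Theorem \ref{DTHEOalpha} does for $\ttf{\sl}{}{G}$: a free $\bbbc[I]$-module $\bbbc[I]\otimes\sl$ whose only non-standard bracket $[e^+,e^-]=Q(I)\,e^0$ is governed by a quadratic polynomial with two distinct simple roots.

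Concretely, specialising Theorem \ref{DTHEObeta} at $\tf{r}{G}{}=0$ and reading off the lowest bracket gives $[\ntf{e}{0}{+},\ntf{e}{0}{-}]=\ntf{e}{1}{0}-\ntf{e}{2}{0}=\ntf{I}{G}{}(1-\ntf{I}{G}{})\,\ntf{e}{0}{0}$, so the defining quadratic $\tilde{Q}(x)=x(1-x)$ has simple roots $\{0,1\}$. On the $\alpha$-divisor side, Theorem \ref{DTHEOalpha} exhibits a structurally identical quadratic, which is then normalised to $1+\hat{I}^2$ via a complex affine substitution in $I$ together with a rescaling of $e^-$. Since any quadratic over $\bbbc$ with two distinct roots can be brought to $1+\hat{I}^2$ by an affine change of variable and an overall scalar factor, the same normalisation applies verbatim to $\ntf{\sl}{}{G}$: one may take for instance $\hat{I}=i(1-2\,\ntf{I}{G}{})$ (which sends the root pair $\{0,1\}$ to $\{i,-i\}$) and absorb the resulting factor $-4$ into the generator $\ntf{e}{0}{-}$. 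After this procedure, the bracket relations on the two sides of the desired isomorphism coincide term by term, and matching the generators defines the Lie algebra isomorphism $\ntf{\sl}{}{G}\cong\ttf{\sl}{}{G}$.

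Combining these individual isomorphisms with Corollary \ref{cor:Mik} — which already provides mutual isomorphism among all $\ttf{\sl}{}{G}$ for the groups under consideration — yields the claim. The main potential obstacle is bookkeeping rather than conceptual: one has to check that $\tf{r}{G}{}=0$ holds uniformly across the four families in the statement (with the odd-$n$ restriction being exactly what is needed for $\bbbd_n$), and that the affine change of variable on $I$ preserves the free-module structure over $\bbbc[I]$ (which it does because affine substitutions are $\bbbc$-linear bijections of the polynomial ring). Both checks are read off directly from Theorem \ref{theo:HomBasis} and Lemma \ref{lem:HomBasisb}, so no new idea is required beyond the normalisation argument already used in the proof of Theorem \ref{DTHEOalpha}.
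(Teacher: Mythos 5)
Your proposal is correct and follows essentially the same route as the paper: both reduce each algebra to the normal form in which the only non-standard bracket is \([\,e^+,e^-]=Q(I)\,e^0\) with \(Q\) quadratic (which is where \(\tf{r}{G}{}=0\), i.e.\ the odd-\(n\) restriction for \(\bbbd_n\), enters), and then normalise \(Q\) to \(1+\htf{I}{G}{2}\) by an affine substitution in \(I\) and a rescaling of \(e^-\). The only slip is a citation: the mutual isomorphism of the \(\ttf{\sl}{}{G}\) is the unlabelled corollary following Theorem \ref{DTHEOalpha}, whereas Corollary \ref{cor:Mik} concerns the Laurent-type algebras \(\tf{\sl}{}{G}\); in any case the common normal form already yields all the isomorphisms at once.
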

\begin{proof}
In all cases one has \([\ntf{e}{}{+},\ntf{e}{}{-}]=Q_G(\ntf{I}{G}{})\ntf{e}{}{0}\)
where \(Q_G\) is a quadratic polynomial.
Using the allowed complex scaling and affine transformations on \(\ntf{I}{G}{}\) and rescaling \(\ntf{e}{}{-}\),
this can be normalized to
\([\ntf{e}{}{+},\ntf{e}{}{-}]=(1+\htf{I}{G}{2})\ntf{e}{}{0}\).
\end{proof}
\begin{Remark}
In \cite[Section 3]{lm_cmp05} it was remarked that automorphic algebras corresponding to different orbits are not isomorphic, i.e. elements of one algebra cannot be represented by finite linear combination of the basis elements of the other algebra with complex constant coefficients. This seems to be in contradiction with Corollary \ref{cor:iso} for odd \(n\).
\end{Remark}
%\bibliography{WORK100}

\end{document}